\newif\iflong
\newif\ifshort
\tikzstyle arrowstyle=[scale=1]
\tikzstyle{separation}=[fill=black, circle,inner sep = 1pt]
\tikzstyle{vertex}=[fill=black, circle,inner sep = 1.5pt]
\tikzstyle{edge} = [draw,-,black]
\tikzstyle anchorEdge=[gray,ultra thick, postaction={decorate,decoration={markings,
    mark=at position 0.75 with {\arrow[arrowstyle, scale=1.5]{stealth}}}}]
\tikzstyle directed=[postaction={decorate,decoration={markings,
    mark=at position 0.75 with {\arrow[arrowstyle, scale=1.3]{stealth}}}}]
\tikzstyle directed2=[postaction={decorate,decoration={markings,
    mark=at position 0.5 with {\arrow[arrowstyle, scale=1.3]{stealth}}}}]
\tikzstyle{canonical} = [edge, gray, directed]
\tikzstyle{fan} = [edge, directed2]
\tikzstyle{shortcut} = [edge, thick, gray, directed]
\tikzstyle{shortcut2} = [edge, thick, gray]
\newtheorem{proposition}[theorem]{\bf Proposition}
\title{Degree Four Plane Spanners: Simpler and Better}
\author[1]{Iyad Kanj}
\author[1]{Ljubomir Perkovi\'{c}}
\author[1]{Duru T\"{u}rko\u{g}lu}
\affil[1]{School of Computing, DePaul University\\ Chicago, IL, USA\\
\texttt{\{ikanj,lperkovic,dturkogl\}@cs.depaul.edu}}
\authorrunning{I. Kanj, L. Perkovi\'{c}, D. T\"{u}rko\u{g}lu}
\subjclass{F.2.2 Nonnumerical Algorithms and Problems, I.3.5 Computational Geometry and Object Modeling}
\keywords{geometric spanners; plane spanners; bounded degree spanners; Delaunay triangulations}
\newcommand{\tridown}{\bigtriangledown}
\newcommand{\dist}{d}
\newcommand{\dst}{\dist_\tridown}
\newcommand{\dstm}{\delta^\mathrm{min}_\tridown}
\newcommand{\dstw}{\delta^{white}_\tridown}
\newcommand{\dstc}[1]{\delta^{c_#1}_\tridown}
\newcommand{\dstr}{\delta^{red}_\tridown}
\newcommand{\dstb}{\delta^{blue}_\tridown}
\newcommand{\dspan}{\dist_\spanner}
\newcommand{\sector}{\sigma}
\newcommand{\comp}{\mathcal{C}}
\newcommand{\dela}{\mathcal{D}}
\newcommand{\anchiv}{\mathcal{A}}
\newcommand{\spanner}{\mathcal{S}}
\def\ie{{\em i.e.}}
\newcommand{\Oh}{\mathcal{O}}
\newcommand{\Points}{\mathcal{P}}
\begin{document}

\maketitle

\begin{abstract}
Let $\Points$ be a set of $n$ points embedded in the plane, and let
$\comp$ be the complete Euclidean graph whose point set is $\Points$.
Each edge in $\comp$ between two points $p, q$ is realized as
the line segment $[pq]$, and is assigned a weight equal to the
Euclidean distance $|pq|$. In this paper, we show how to construct in
$\Oh(n\lg{n})$ time a plane spanner of $\comp$ of maximum degree at
most 4 and stretch factor at most 20.  This improves a long sequence
of results on the construction of plane spanners of $\comp$. Our
result matches the smallest known upper bound of 4 by Bonichon et al.
on the maximum degree of plane spanners of $\comp$, while
significantly improving their stretch factor upper bound from 156.82 to 20.
The construction of our spanner is based on Delaunay triangulations
defined with respect to the equilateral-triangle distance, and uses a
different approach than that used by Bonichon et al. Our approach
leads to a simple and intuitive construction of a well-structured
spanner, and reveals useful structural properties of the Delaunay
triangulations defined with respect to the equilateral-triangle
distance.
The structure of the constructed spanner implies that when $\Points$
is in convex position, the maximum degree of this spanner is at most
3. Combining the above degree upper bound with the fact that 3 is a
lower bound on the maximum degree of any plane spanner of $\comp$ when
the point-set $\Points$ is in convex position, the results in this
paper give a tight bound of 3 on the maximum degree of plane spanners
of $\comp$ for point-sets in convex position.
\end{abstract}

\section{Introduction}
\label{sec:intro}
Let $\Points$ be a set of $n$ points embedded in the plane, and let
$\comp$ be the complete Euclidean graph whose point-set is $\Points$.
Each edge in $\comp$ between two points $p, q$ is realized as
the line segment $[pq]$, and is assigned a weight equal to the
Euclidean distance $|pq|$.
For any constant $\rho \geq 1$, a subgraph $\spanner$ of $\comp$ is a
\emph{$\rho$-spanner} of $\comp$ if for any two points $p, q \in
\Points$, there is a path between $p$ and $q$ in $\spanner$ whose
weight is at most $\rho \cdot |pq|$.  A subgraph $\spanner$ of $\comp$
is a {\em spanner} of $\comp$ if there exists a constant $\rho \geq 1$
such that $\spanner$ is a $\rho$-spanner of $\comp$; the minimum such
$\rho$ is referred to as the \emph{stretch factor} of $\spanner$.
%
%
In this paper, we consider the problem of constructing a plane spanner
of $\comp$ of small degree and small stretch factor. This problem has
received considerable attention, and there is a long list of results
on the construction of plane spanners of $\comp$ that achieve various
trade-offs between the degree and the stretch factor of the spanner.

The problem of constructing a plane spanner of $\comp$ was considered
as early as the 1980's, if not earlier.  Chew~\cite{Che86} proved that
the $L_1$-Delaunay triangulation of $\Points$, which is the Delaunay
triangulation of $\Points$ defined with respect to the $L_1$-distance,
is a $\sqrt{10}$-spanner of $\comp$.
Chew's result was followed by a series of papers showing that other
Delaunay triangulations are plane spanners as well. In 1987, Dobkin
{\it et al.}~\cite{DFS90} showed that the classical $L_2$-Delaunay
triangulation of $\Points$ is a plane spanner of stretch factor
at most
$\frac{\pi(1+\sqrt{5})}{2}$. This bound was subsequently improved by
Keil and Gutwin~\cite{KG92} to $\frac{4\pi}{3\sqrt{3}}$. In the
meantime, Chew~\cite{Che89} showed that the $TD$-Delaunay
triangulation defined using a distance function based on an
equilateral triangle --- rather than a square ($L_1$-distance) or a circle
($L_2$-distance) --- is a $2$-spanner. This result was generalized by
Bose et al.~\cite{BCCS10}, who showed that the Delaunay triangulation
defined with respect to any convex distance function (\ie, based on a
convex shape) is a spanner of~$\comp$. The bound on the stretch factor
of the $L_2$-Delaunay triangulation by Keil and Gutwin stood
unchallenged for many years until Xia recently improved the bound to
below 2~\cite{Xia13}.  Recently as well, Bonichon {\em et
al.}~\cite{BGHP12} improved Chew's original bound on the stretch
factor of the $L_1$-Delaunay triangulation to $\sqrt{4+2\sqrt{2}}$,
and showed this bound to be tight.

All the Delaunay triangulations mentioned above can have
unbounded degree. In recent years, bounded degree plane spanners have
been used as the building block of wireless network topologies.
Wireless distributed system technologies, such as wireless ad-hoc and
sensor networks, are often modeled as proximity graphs in the
Euclidean plane. Spanners of proximity graphs represent topologies
that can be used for efficient communication. For these applications,
in addition to having low stretch factor, spanners are typically
required to be plane and have bounded degree, where both requirements
are useful for efficient routing~\cite{planerouting,boundedrouting}.

The wireless network applications motivated researchers to turn their
attention to minimizing the maximum degree of the plane spanner as
well as its stretch factor.  It can be readily seen that 3 is a lower
bound on the maximum degree of a spanner of $\comp$, because a
Hamiltonian path through a set of points arranged in a grid has
unbounded stretch factor. Work on bounded degree but not necessarily
plane spanners of $\comp$ closely followed the above-mentioned work on
plane spanners. In a 1992 breakthrough, Salowe \cite{Sal94} proved the
existence of spanners of maximum degree 4. The question was then
resolved by Das and Heffernan~\cite{DH96} who showed that spanners of
maximum degree 3 always exist. The focus in this line of research was
to prove the existence of low degree spanners and the techniques
developed to do so were not tuned towards constructing spanners that
had both low degree and low stretch factor.  For example, the bound on
the stretch factor of the degree 4 spanner in Salowe \cite{Sal94} is
greater than $10^9$, which is far from practical.  Furthermore, the
bounded-degree spanners shown to exist are not guaranteed to be
plane.

Bose {\it et al.}~\cite{BGS05a} were the first to show how to extract
a subgraph of the $L_2$-Delaunay triangulation that is a
bounded-degree, plane spanner of $\comp$. The maximum degree and
stretch factor they obtained were subsequently improved by Li and
Wang~\cite{LW04}, by Bose {\it et al.}~\cite{BSX09}, and by Kanj and
Perkovi\'{c}~\cite{KP08} (see Table~\ref{ta:related2}). The approach
used in all these results was to extract a bounded degree spanning
subgraph of the $L_2$-Delaunay triangulation and the main goal was to
obtain a bounded-degree plane spanner of $\comp$ with the smallest
possible stretch factor. In a breakthrough result,
Bonichon {\it et al.}~\cite{BGHP10} lowered
the bound on the maximum degree of a plane spanner from 14 to 6.
Instead of using the $L_2$-Delaunay triangulation as the starting
point of the spanner construction, they used the Delaunay
triangulation based on the equilateral-triangle distance, defined
originally by Chew~\cite{Che89}, and exploited a connection between
these Delaunay triangulations and $\frac{1}{2}$-$\theta$ graphs.
The plane spanner they constructed also
has a small stretch factor of 6.  Independently, Bose {\it et
al.}~\cite{BCC12} were also able to obtain a plane spanner of maximum
degree at most 6, by starting from the $L_2$-Delaunay triangulation.
Recently, Bonichon et al.~\cite{BKPX15} were able to construct a plane
spanner of degree at most 4 and stretch factor at most 156.82. Their
construction is based on the Delaunay triangulation defined with
respect to the $L_1$ norm. Most of the above spanner constructions can
be performed in time $\Oh(n\lg{n})$, where $n$ is the number of points
in $\Points$.

\begin{table}
\begin{center}
\begin{tabular}{lrr}
{\bf Paper} & {\bf $\Delta$} & {\bf Stretch factor bound} \\ \hline
Bose {\it et al.}~\cite{BGS05a} & 27 & $(\pi+1) C_0 \approx 8.27$ \\
\hline
Li and Wang~\cite{LW04} & 23 & $(1+\pi \sin \frac{\pi}{4}) C_0 \approx
6.43$ \\ \hline
Bose {\it et al.}~\cite{BSX09} & 17 & $(2+2\sqrt{3} + \frac{3\pi}{2} +
2\pi\sin(\frac{\pi}{12})) C_0 \approx 23.56 $\\ \hline
Kanj and Perkovi\'{c}~\cite{KP08} & 14 &
$(1+\frac{2\pi}{14\cos(\frac{\pi}{14})}) C_0 \approx 2.91$\\ \hline
Bonichon {\it et al.}~\cite{BGHP10} & 6 & 6 \\ \hline
Bose {\it et al.}~\cite{BCC12} & 6 &
$1/(1-\tan(\pi/7)(1+1/\cos(\pi/14)))C_0 \approx 81.66$ \\ \hline
Bonichon {\it et al.}~\cite{BKPX15} & 4 & $156.82$ \\ \hline
{\bf This paper} & {\bf 4} & {\bf 20} \\ \hline
\end{tabular}
\end{center}
\caption{Results on plane spanners with maximum degree bounded by
$\Delta$. The constant $C_0 = 1.998$ is the best known upper bound on
the stretch factor of the $L_2$-Delaunay triangulation~\cite{Xia13}.}
\label{ta:related2}
\end{table}

In this paper, we present a construction of a plane spanner
$\spanner$ of $\comp$ of degree at most 4 and stretch factor at
most 20.
This result matches the smallest known upper bound of 4 on the maximum
degree of the spanner by Bonichon et al.~\cite{BKPX15}, while
significantly improving their stretch factor bound from 156.82 to 20.
Our construction is also simpler and more intuitive. It is based on Delaunay
triangulations defined with respect to the equilateral-triangle
distance, just like the degree 6 spanner construction used by Bonichon et
al.~\cite{BGHP10}, which could be viewed as the starting point of our
construction. To get down to maximum degree 4, our approach introduces fresh
techniques in both the construction and the analysis of the spanner.
Unlike the approach in~\cite{BGHP10}, our approach has a bias towards
certain edges of the Delaunay triangulation; this bias ensures that the constructed spanner is well structured. To make up for edges not in the spanner, we make use of
recursion which, unlike the construction in~\cite{BKPX15}, may have depth not bounded by a
constant. To ensure that the recursion is controlled and yields short paths,
we aggressively add shortcut edges to the spanner to ensure the existence
of paths with specific monotonicity properties, which we refer to as monotone weak paths. Finally, in our analysis we
use a new type of distance metric and we also take the extra step of analyzing
the stretch factor of our spanner with respect to $\comp$ directly, rather than
with respect to the underlying Delaunay triangulation.

 \ifshort The structure of the spanner, attained due to biasing certain Delaunay edges, implies that if the given point-set is in convex position, then the constructed spanner has maximum degree at most 3. Therefore, for any point-set in convex position, there exists a plane spanner of $\comp$ of maximum degree at most 3.  We can also show that 3 is a lower bound on the maximum degree of plane spanners of $\comp$ for point-sets in convex position. The preceding implies that 3 is a tight bound on the maximum degree of plane spanners of $\comp$ for point-sets in convex position, and this completely and satisfactorily resolves the question about the maximum degree of plane geometric spanners of $\comp$ for point-sets in convex position. Due to the lack of space, the formal statement and proof of the aforementioned result, as well as some other proofs in the paper, are omitted and can be found in a full version of the paper available at: **** \fi

\iflong The structure of the spanner, attained due to biasing certain Delaunay edges, allows us to prove that if the given point-set is in convex position, then the constructed spanner has maximum degree at most 3. Therefore, for any point-set in convex position, there exists a plane spanner of $\comp$ of maximum degree at most 3.  We also show that 3 is a lower bound on the maximum degree of plane spanners of $\comp$ for point-sets in convex position. The preceding implies that 3 is a tight bound on the maximum degree of plane spanners of $\comp$ for point-sets in convex position, and this completely and satisfactorily resolves the question about the maximum degree of plane geometric spanners of $\comp$ for point-sets in convex position. \fi

\section{Preliminaries}
\label{sec:prelim}

Given a set of points $\Points$ embedded in the Euclidean plane, we
consider the complete weighted graph $\comp(\Points)$, or simply
$\comp$, where each edge between any two points $p,q \in \Points$ is
associated with the line segment $[pq]$, and is assigned a weight
equal to the Euclidean distance $|pq|$.
Given a subgraph $G$ of $\comp$, we define $G$ to be a {\em plane}
subgraph if the edges of $G$ do not cross each other, \ie, the line
segments associated with the edges of $G$ intersect only at their
endpoints. The {\em maximum degree} of $G$ is
the maximum degree (in $G$) over all points in $\Points$; we say that
a family of graphs has {\em bounded degree} if there is an integer constant $c \geq 0$ such that every
graph in the family has a maximum degree at most $c$.
If graph $G$ is connected, we define the {\em distance} between any two
points $p,q \in \Points$, denoted $\dist_G(p,q)$, to be the weight of
a minimum-weight path between $p$ and $q$ in $G$, where the weight of
a path is the sum of the weights of its edges.

Given a constant $\rho \geq 1$, we say that $G$ is a {\em
$\rho$-spanner} of $\comp$ if for any two points $p,q \in \Points$,
$\dist_G(p, q) \leq \rho \cdot |pq|$. We also say that a family of
geometric graphs, one for every finite set $\Points$ of points in the
plane, is a spanner if there is a constant $\rho \geq 1$ such that
every graph $G(\Points)$ in the family
%
%
is a $\rho$-spanner of $\comp(\Points)$; we refer to the minimum such
constant $\rho$ as the {\em stretch factor} of the family.  In this
paper, the family we consider consists of the spanners obtained by
applying our algorithm on all finite point-sets in the plane.
%

In this paper, we rely on a metric that is different from the
Euclidean metric.  In order to define this metric, we fix an
equilateral triangle with two of its points lying on the $x$-axis at
coordinates $(0,0)$ and $(1,0)$, and the third point lying below the $x$-axis; we use the symbol $\tridown$ to refer to this equilateral
triangle.  We define a triangle to be a {\em $\tridown$-homothet} if
it can be obtained through a translation of $\tridown$ followed
by a scaling. We define the {\em triangular metric}, $\dst$, as follows:

\begin{definition}
\label{def:metric}
For any two points $p, q \in \Points$, we define $\dst(p,q)$ to be the
side-length of the smallest $\tridown$-homothet that contains $p$ and
$q$ on its boundary; we denote this triangle $\tridown(p,q)$.
\end{definition}

It is easy to verify that $\dst$ is indeed a metric. In particular,
for any two points $p,q$, we have $\dst(p,q)=0 \Leftrightarrow p=q$,
we have symmetry as in $\dst(p,q) = \dst(q,p)$, and for any third point
$r$, we have the triangle inequality $\dst(p,q) \leq \dst(p,r) + \dst(r,q)$.
It is also easy to see that $p$ or $q$ must be a vertex of the
triangle $\tridown(p,q)$ and that $|pq| \leq \dst(p,q)$.

Using the triangular metric $\dst$, we define a subgraph $\dela$ of $\comp$
as follows. For every point $w \in \Points$, we partition the space around $w$ into six
equiangular cones whose common apex is $w$, three above and three below the
horizontal line passing through $w$, as illustrated in
Figure~\ref{fi:definitions}-(a). We denote the middle cone
above the horizontal line and the two outer cones below the horizontal
line as the {\em positive} cones of $w$, and the remaining
three cones as the {\em negative} cones of $w$.
Each point $w$ chooses an edge in each of its three positive cones by
selecting the point $v \neq w$ in the cone such that $\dst(w, v)$ is
minimum. Assuming that $\Points$ is in general position\footnote{$\Points$ is in general position if no pair of points $v,w
\in \Points$ lie on a line parallel to any of the boundary lines
defining the six cones. We note that it is always possible to rotate the
equilateral triangle that defines the metric $\dst$ to ensure that
the finite set $\Points$ is in general position and so the results in this
paper hold for all sets of points and not just for points in general position.}, for any
two distinct points $v, v'$ in a positive cone of $w$, we obtain $\dst(w,
v) \neq \dst(w, v')$. Let $\dela$ be the graph whose vertex-set is
$\Points$ and whose edge-set is the set of edges selected as described.

We make the following observation regarding the graph $\dela$. The {\em
$\frac{1}{2}$-$\theta$ graph} of $\Points$ is the graph whose
point-set is $\Points$, and whose edges are obtained as follows: at
each point $w$, and for each of the three positive cones of apex $w$,
select the edge $wv$ in the cone where $v$ is the point whose
projection distance to the angular bisector of the cone is minimum.
Bonichon et al.~\cite{BGHI10} showed that the $\frac{1}{2}$-$\theta$ graph
of $\Points$ is the same as the {\em TD-Delaunay triangulation} of
$\Points$~\cite{Che89} defined based on the empty triangle property:
there is an edge between two point $v, w \in P$
if there exists a homothet of $\tridown$ containing $v$ and $w$ on its
boundary whose interior is empty of points of $\Points$. It is easy to see
that the $\frac{1}{2}$-$\theta$ graph of $\Points$ coincides with the
graph $\dela$ defined above, and hence with the TD-Delaunay
triangulation\footnote{A TD-Delaunay triangulation of
$\Points$ is not necessarily a triangulation of $\Points$ as defined
traditionally (a triangulation of the convex hull of the set of points).
Just as Chew~\cite{Che89} did, we abuse the term {\em triangulation}
because TD-Delaunay triangulations are closely related to classical
$L_2$-Delaunay triangulations.}.

For convenience, we label the positive cones at each point of $\Points$,
in clockwise order and starting with the positive cone above the horizontal
line, {\em red}, {\em green}, and {\em blue}; we
also label the negative cones, in clockwise order and starting with the
negative cone below the horizontal line, red, green, and blue.
We assign an orientation and a color to the edges of
$\dela$ by orienting each edge outwards from the point $w$ that selects it and
by coloring it red, blue, or green depending on whether the edge lies in
the positive red, blue, or green cone of point $w$, as illustrated in
Figure~\ref{fi:definitions}-(a). We emphasize that the edge
orientations are only used for the purpose of constructing the spanner and
proving its desired properties; the final spanner in our construction is
an undirected graph obtained by removing edge orientations. In fact, we
will abuse terminology and, throughout the paper, use the term {\em path}
to refer to weak paths in $\dela$; we will always use the term
{\em directed path} when edge orientations are relevant.

We observe that for any point $w \in \Points$ there is at most one edge outgoing from $w$
in a positive cone of $w$, but there can be an unbounded number
of edges incoming to $w$ in a negative cone of $w$, and that in such
cases all these
edges have the same color as the cone itself (e.g.,
see Figure~\ref{fi:definitions}-(b)). We follow the same approach as
Bonichon et al.~\cite{BGHP10}, and identify in each negative cone of point $w$
an edge that plays a key role in the spanner construction:

\begin{definition} \rm
\label{def:anchor}

For any point $w \in \Points$, and for each negative cone of $w$ that contains
at least one
edge incoming to $w$, let (directed) edge $(v, w) \in \dela$ be the edge in the
cone such that $\dst(v, w)$ is minimum. We define $(v, w)$ to be the
{\em anchor} of $w$ in the cone.

\end{definition}
We  say that anchors incident to the same point
$w$ are {\em adjacent} if their cones are adjacent. Note that for any two adjacent anchors incident to $w$, one of the
two adjacent anchors must lie in a positive cone of $w$ and must be an anchor
of a point other than $w$.

Consider a negative cone of a point $w \in \Points$ containing at
least one incoming edge to $w$ in $\dela$. Let $(v_1, w), \ldots,
(v_k, w) \in \dela$, where $k \geq 1$, be all the incoming edges to
$w$ that lie in the cone, listed in counterclockwise order, as
illustrated in Figure~\ref{fi:definitions}-(b), and let $(v_j, w)$,
for some $j$ such that $1 \leq j \leq k$, be the anchor of $w$ in the
cone. We call $(v_1, w), \ldots, (v_k, w)$ the {\em fan} of the anchor
$(v_j, w)$. We identify $(v_j,w)$ as the anchor of each edge in the
fan. Note that every edge in $\dela$ has an anchor which could be
itself. We call the first edge $(v_1, w)$ and the last edge $(v_k, w)$
of the fan the {\em boundary edges} of the anchor $(v_j, w)$.  Note
that either one (possibly both) of the boundary edges of an anchor
could be the anchor itself.
If $k \geq 2$, since $\dela$ is a triangulation, it follows that
$(v_i, v_{i+1}, w)$ is a triangle in $\dela$, for $i=1, \ldots, k-1$.
Hence, $v_1, \ldots, v_k$ is a (weak) path in $\dela$ between the
endpoints $v_1$ and $v_k$.  We call this path the {\em canonical path}
of $w$ in the designated cone; we also call each edge on this path a {\em
canonical edge} of $w$.
Finally, we refer to the (weak) subpath $v_r, \ldots, v_s$ of the
canonical path $v_1, \ldots, v_k$ of $w$ as the canonical
path between $v_r$ and $v_s$ of $w$.
The two {\em sides} of an edge are the two
half-planes defined by the line obtained by extending the edge. We say
that a canonical edge $e$ is canonical {\em on a side} of $e$ if $e$ is a
canonical edge of a point that lies on that side of $e$. Note that a canonical edge can be canonical on both sides.

%
%
%

\renewcommand{\theenumi}{\alph{enumi}}
We state the following easy to verify facts without proof:
\begin{lemma}
\label{lem:canonical-properties}
Let $(s,t)$ be a canonical edge of a point $w$, and let $(s',t)$ be the anchor of
$(s,t)$.
\begin{enumerate}

\item The edges $(s,w)$ and $(t,w)$ are in $\dela$.

\item The edge $(s,w)$ cannot be a canonical edge on the side containing $t$.

\item The edge $(t,w)$ is not an anchor.

\item The edge $(s,t)$ is a boundary edge of its anchor $(s',t)$.

\end{enumerate}
\end{lemma}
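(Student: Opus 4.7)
By symmetry we may assume the negative cone of $w$ containing $s$ and $t$ is the negative red cone, so that $w$ lies in the positive red cone of both $s$ and $t$. Part (a) is then essentially a restatement of the canonical path construction: since $s$ and $t$ are consecutive on $w$'s canonical path in $w$'s negative red cone, the triangle with vertices $s, t, w$ is a face of $\dela$, and hence $(s, w), (t, w) \in \dela$.

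The key structural observation driving (c) and (d) is that on any canonical edge $(s, t)$ the source $s$ lies strictly above the target $t$ in $y$-coordinate. Indeed, since $(s, w) \in \dela$ and $w$ is in $s$'s positive red cone, $(s, w)$ must be $s$'s unique outgoing edge in that cone, and symmetrically for $(t, w)$ at $t$. An easy case analysis on the relative position of the two endpoints within $w$'s negative red cone shows that if the canonical edge were directed from the lower vertex upward, it would itself lie in the positive red cone of its source, colliding with the pre-existing outgoing edge to $w$. The only consistent configurations are that $(s, t)$ is a green edge with $s$ to the upper-left of $t$, or a blue edge with $s$ to the upper-right of $t$; in both cases $s_y > t_y$. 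Combining this with the identity $\dst(v, w) = \frac{2}{\sqrt{3}}(w_y - v_y)$, valid whenever $v$ lies in $w$'s negative red cone because the smallest enclosing $\tridown$-homothet has $w$ on its top edge and $v$ at its apex, we conclude $\dst(s, w) < \dst(t, w)$. Hence $(s, w)$ strictly beats $(t, w)$ inside the fan of $w$'s negative red cone, so $(t, w)$ is not the anchor, proving (c).

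For (b), suppose for contradiction that $(s, w)$ is a canonical edge of some vertex $u$ on the side of $(s, w)$ that contains $t$. Then, by part (a) applied to $u$, the triangle $(s, u, w)$ is a face of $\dela$ sharing edge $(s, w)$ with the face $(s, t, w)$. Planarity of $\dela$ forces $u = t$. But then $(s, w)$ being a canonical edge of $t$ would require both $s$ and $w$ to lie in a common negative cone of $t$, which is impossible since $w$ lies in $t$'s positive red cone and so in no negative cone of $t$.

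For (d), I examine the cyclic order of the edges incident to $t$ in $\dela$. The edge $(t, w)$ is the sole occupant of $t$'s positive red cone; meanwhile $(s, t)$ lies in either $t$'s negative green cone (if $s$ is upper-left of $t$) or $t$'s negative blue cone (if $s$ is upper-right), and these two negative cones flank positive red in the cyclic order around $t$. Hence the planar face of $\dela$ immediately adjacent to $(t, w)$ on the side where $s$ lies is a triangle $(t, w, u)$, with $u$ the extremal vertex of the corresponding negative-cone fan of $t$ that borders the positive red cone. Since $(s, t, w)$ is itself a Delaunay face sharing edge $(t, w)$ with exactly one face on that side, we must have $s = u$, so $(s, t)$ is a boundary edge of its anchor $(s', t)$. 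The main obstacle throughout is the careful cone bookkeeping; once the role of $(t, w)$ as $t$'s unique outgoing edge in positive red is identified, the planarity of $\dela$ reduces each remaining claim to a short adjacency or extremality argument.
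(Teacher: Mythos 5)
The paper states Lemma~\ref{lem:canonical-properties} explicitly ``without proof,'' so there is no authorial argument to compare against; judged on its own, your proof is correct and supplies exactly the details the authors left implicit. The central observation --- that the outgoing edge $(s,w)$ (resp.\ $(t,w)$) already occupies the positive red cone of $s$ (resp.\ $t$), forcing the canonical edge $(s,t)$ into a positive cone below the horizontal through $s$, hence $s_y>t_y$ and $\dst(s,w)<\dst(t,w)$ --- is the right engine for parts (c) and (d), and your use of the identity $\dst(v,w)=\tfrac{2}{\sqrt{3}}(w_y-v_y)$ for $v$ in the negative red cone of $w$ is valid. Two small points of precision, neither a gap: in part (b) you invoke ``part (a) applied to $u$'' to get that $(s,u,w)$ is a \emph{face}, but part (a) only yields membership of the edges in $\dela$; what you actually need (and what holds) is the paper's assertion that consecutive fan vertices of $u$ form a triangle with $u$, which is part of the definition of canonical paths rather than of part (a). In part (d), the cleanest phrasing of your face argument is that since $(s,t,w)$ is a face, $s$ and $w$ are consecutive neighbors of $t$ in the rotation at $t$, and since $w$ is the sole neighbor of $t$ in $t$'s positive red cone while $s$ lies in the adjacent negative cone, $s$ must be the extreme vertex of that cone's fan; this sidesteps any worry about whether the face adjacent to $(t,w)$ is bounded. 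With those cosmetic adjustments the argument is complete.
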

\renewcommand{\theenumi}{\arabic{enumi}}

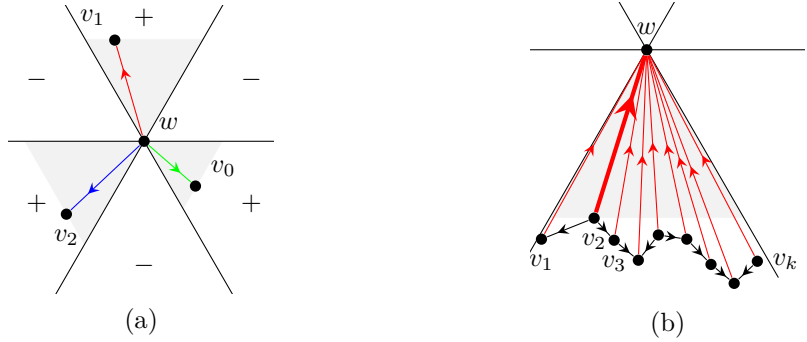
\begin{figure}
\noindent
\begin{minipage}{.5\textwidth}
\centering
\begin{tikzpicture}[scale = 0.75]

\clip(-2.4,-2.7) rectangle (2.3,2.4);


\filldraw[fill=black!5, draw=white] (0,0) -- ++(60:2.1) -- ++ (180:2.1) -- cycle; 
\filldraw[fill=black!5, draw=white] (0,0) -- ++(0:1.4) -- ++ (-120:1.4) -- cycle; 
\filldraw[fill=black!5, draw=white] (0,0) -- ++(-120:2.13) -- ++ (120:2.13) -- cycle;

\draw (-5,0) -- (5,0);
\draw (0,0) -- (60:5); \draw (0,0) -- (240:5);
\draw (0,0) -- (120:5); \draw (0,0) -- (300:5);

\draw (0,0) ++(30:2.2) node {$-$};
\draw (0,0) ++(90:2.2) node {$+$};
\draw (0,0) ++(150:2.2) node {$-$};
\draw (0,0) ++(210:2.2) node {$+$};
\draw (0,0) ++(270:2.2) node {$-$};
\draw (0,0) ++(330:2.2) node {$+$};

\draw (0,0) node[vertex,label=30:{$w$}] (w) {};
\draw (0.9105,-0.79449) node[vertex,label=4:{$v_0$}] (v0) {};
\draw (-0.52001,1.79464) node[vertex,label=95:{$v_1$}] (v1) {};
\draw (-1.37780,-1.29337) node[vertex,label=-90:{$v_2$}] (v2) {};

\draw [canonical,green] (w) -- (v0);
\draw [canonical,red] (w) -- (v1);
\draw [canonical,blue] (w) -- (v2);
\end{tikzpicture}

\centering (a)
\end{minipage}%
\begin{minipage}{.5\textwidth}
\centering

\begin{tikzpicture}[scale = 0.7]

\clip(-2.2,-4.7) rectangle (3,0.9);

\filldraw[fill=black!5, draw=white] (0,0) -- ++(-120:3.7) -- ++ (0:3.7) -- cycle;

\draw (-5,0) -- (5,0);
\draw (0,0) -- (60:5); \draw (0,0) -- (240:5);
\draw (0,0) -- (120:5); \draw (0,0) -- (300:5);

\draw (0,0) node[vertex,label=90:{$w$}] (w) {};

\draw (-2,-3.6) node[vertex,label=-90:{$v_1$}] (v1) {};
\draw (-1,-3.2) node[vertex,label=-90:{$v_2$}] (v) {};
\draw (-0.62,-3.62) node[vertex,label=-90:{$v_3$}] (v3) {};
\draw (-0.16,-4) node[vertex,label=-90:{}] (v4) {};
\draw (0.22,-3.52) node[vertex,label=90:{}] (v5) {};
\draw (0.76,-3.6) node[vertex,label=90:{}] (v6) {};
\draw (1.22,-4.08) node[vertex,label=-135:{}] (v7) {};
\draw (1.66,-4.44) node[vertex,label=-90:{}] (v8) {};
\draw (2.1,-4.02) node[vertex,label=0:{$v_k$}] (v9) {};




 


\draw [anchorEdge,red] (v) -- (w);
\draw [canonical,black] (v) -- (v1);
\draw [canonical,black] (v) -- (v3);
\draw [canonical,black] (v3) -- (v4);
\draw [canonical,black] (v5) -- (v4);
\draw [canonical,black] (v5) -- (v6);
\draw [canonical,black] (v6) -- (v7);
\draw [canonical,black] (v7) -- (v8);
\draw [canonical,black] (v9) -- (v8);

\draw [fan,red] (v1) -- (w);
\draw [fan,red] (v3) -- (w);
\draw [fan,red] (v4) -- (w);
\draw [fan,red] (v5) -- (w);
\draw [fan,red] (v6) -- (w);
\draw [fan,red] (v7) -- (w);
\draw [fan,red] (v8) -- (w);
\draw [fan,red] (v9) -- (w);
\end{tikzpicture}

\centering (b)
\end{minipage}%

\caption{(a) To construct graph $\dela$, every point $w$ chooses the shortest
edge, according to the $\dst$ distance, in every positive cone. (b) Edge
$(v_2,w)$ is the anchor of $w$ in the negative cone shown because it is the
shortest edge, according to the $\dst$ distance, among edges incoming to
$w$ in the cone; the path $v_1, \ldots, v_k$ is the canonical path of anchor
$(v_2,w)$.}
\label{fi:definitions}

\end{figure}

\section{Monotone (weak) paths}
\label{sec:monotone}
We define next a type of path in $\dela$ that generalizes canonical paths
and that will be a key tool in our construction. We give two equivalent definitions of such a path; proving the equivalence between these two definitions is easy.

\begin{definition}
Let $v$ be a point lying in the positive cone of $u$ whose color is
$c$. A (weak) path in $\dela$ between $u$ and $v$ is {\em monotone} if
the path is bi-colored, with $c$ being one of the colors, and the path
satisfies the two equivalent properties:
\begin{itemize}
\item After reversing the direction of all edges not colored $c$, the path
is directed from $u$ to $v$.
\item No two consecutive edges of the path lie in neighboring cones of
the shared endpoint.
\end{itemize}
\label{def:monotone}
\end{definition}


\input{figures/defns2-fig.tex}

The key property of a monotone path between $u$ and $v$ is that its length
can be bounded by twice the side-length of $\tridown(u,v)$, \ie, by
$2\dst(u,v)$. This follows from a stronger insight which we develop
next. To facilitate our discussion, we label the vertices of a
$\tridown$-homothet {\em green}, {\em blue}, and
{\em red}, in clockwise order starting from the upper left
vertex.
Let $v$ be a point lying in a positive cone
of $u$ of color $c_1$. With $u$ being the vertex of $\tridown(u,v)$ of color $c_1$, let the remaining two vertices of $\tridown(u,v)$ be
$y$ and $z$, and let $c_2$ be the color of $y$ and $c_3$ be the color of
$z$.

\begin{definition}
We define (refer to Figure~\ref{fig:defns2} where $c_1=green$,
$c_2 =red$, and $c_3=blue$):
\label{def:blue-white-dist}
\begin{enumerate}
\item $\dstc{2}(u,v) = \dstc{2}(v,u) = |yv|$.
\item $\dstc{3}(u,v) = \dstc{3}(v,u) = |zv|$.
\item $\dstm(u,v) = \dstm(v,u) = \min\{\dstc{2}(u,v),
\dstc{3}(u,v)\}$.
\end{enumerate}

\end{definition}


Let $P$ be a monotone path in $\dela$ between $u$ and $v$, and let the edges of
$P$ be colored $c_1$ or $c_2$. We define, using the lines $zv$ and $zu$ as axes
of a coordinate system of the Euclidean plane, {\em the projection onto $zv$
and $zu$} (see Figure~\ref{fig:defns2}-(b)). In the following lemma, we use this
projection to map the edges of $P$ to upper bound the length of $P$:

\renewcommand{\theenumi}{\alph{enumi}}
\begin{lemma}
Let $v$ be a point lying in the positive cone of $u$ of color $c_1$. For any monotone path $P$ between $u$ and $v$ whose edges are colored
$c_1$ or $c_2$ (refer to Figure~\ref{fig:defns2} where $c_1=green$,
$c_2 =red$, and $c_3=blue$):
\begin{enumerate}
\item The projections of all edges of $P$ onto $zu$ (resp., $zv$) do not overlap and are contained within the
segment $[zu]$ (resp., $[zv]$); see
Figures~\ref{fig:defns2}-(b) and~\ref{fig:defns2}-(c).

\item If $(p,q)$ is an edge of $P$ colored $c_1$ (resp., $c_2$) then the projection
onto $zu$ (resp., $zv$) of $(p,q)$ has length $\dst(p,q) \geq |pq|$.

\item The sum of the lengths of the edges of $P$ colored $c_1$ is
at most $\dst(u,v)=|zu|$.

\item The sum of the lengths of the edges of $P$ colored $c_2$ is
at most $\dstc{3}(u,v)=|zv|$.

\item The length of $P$ is at most $2\dst(u,v)$.
\end{enumerate}

\label{lem:monotone}
\end{lemma}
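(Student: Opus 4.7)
The plan is to set up a non-orthogonal coordinate system with $z$ at the origin, one axis along line $zu$ (the top side of $\tridown(u,v)$) and the other along line $zv$ (which is collinear with the right side $zy$ of $\tridown(u,v)$, since $v \in [zy]$). I would define $\pi_{zu}(p)$ to be the projection of $p$ onto line $zu$ parallel to $zv$, and $\pi_{zv}(p)$ the projection parallel to $zu$. A direct computation gives the anchor values $\pi_{zu}(u) = u$, $\pi_{zu}(v) = z$, $\pi_{zv}(u) = z$, and $\pi_{zv}(v) = v$, which pin down the endpoints of both sequences of projections.

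For (b), I would observe that if $(p,q)$ is a $c_1$-colored edge of $P$, then $p$ is the $c_1$-vertex of $\tridown(p,q)$ and $q$ lies on the opposite side of this homothet, which is parallel to $zv$ because all $\tridown$-homothets share the same orientation. Projecting $q$ parallel to $zv$ therefore sends it to the $c_3$-vertex of $\tridown(p,q)$, which lies on the line through $p$ parallel to $zu$; hence $|\pi_{zu}(q) - \pi_{zu}(p)|$ equals the side length $\dst(p,q)$, and this is at least $|pq|$ by the general inequality between $\dst$ and Euclidean distance. The symmetric argument on the $zv$-axis handles $c_2$-colored edges.

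For (a), I want to establish that the per-edge signed advances along both axes are non-negative, which immediately forces the projections to appear in order without overlap. Traversing a $c_1$-edge forward places $p_{i+1}$ in the positive $c_1$ cone of $p_i$, while traversing a $c_2$-edge forward (recalling that the monotone definition reverses $c_2$-edges) places $p_{i+1}$ in the negative $c_2$ cone of $p_i$. These two cones together form a $120^\circ$ sector, and a short trigonometric check confirms that every direction in this sector gives a non-negative contribution to both $\pi_{zu}$ and $\pi_{zv}$. Combined with the anchor values above, this forces the projections onto $zu$ to appear in non-decreasing order within $[zu]$, and likewise on $zv$, giving (a).

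Parts (c)--(e) then follow quickly. Telescoping the signed advances along $zu$ gives a total of $\pi_{zu}(v) - \pi_{zu}(u) = \dst(u,v)$; by (b) the $c_1$-edges contribute exactly their $\dst$-lengths while $c_2$-edges contribute non-negative amounts, yielding $\sum_{c_1} \dst \le \dst(u,v)$, which is (c); part (d) is symmetric on the $zv$-axis. For (e), $|P| \le \sum \dst = \sum_{c_1} \dst + \sum_{c_2} \dst \le \dst(u,v) + \dstc{3}(u,v) \le 2\dst(u,v)$, using $\dstc{3}(u,v) = |zv| \le |zy| = \dst(u,v)$ since $v \in [zy]$. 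The main obstacle is part (a): verifying the non-negativity of per-edge advances for \emph{both} edge colors simultaneously. This hinges on correctly interpreting the reversal convention in the monotone definition for $c_2$-edges, and on the observation that the bi-colored constraint confines all forward traversal directions to a common $120^\circ$ sector, from which the sign checks are straightforward.
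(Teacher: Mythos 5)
Your proposal is correct and takes essentially the same route as the paper: it uses the affine coordinate system with axes along $zu$ and $zv$, establishes that the coordinates of the path's points vary monotonically (which the paper merely asserts and you justify via the $120^\circ$-sector observation on forward $c_1$-edges and reversed $c_2$-edges), and derives (b)--(e) from the parallelism of homothet sides and telescoping. No gaps.
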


\begin{proof}
For part {\em (a)}, we consider the coordinates
of the points of $P$ in the coordinate system of the Euclidean plane defined
by using the lines $zv$ and $zu$ as axes. When visiting the points of $P$
in the order in which they appear on $P$, the
coordinates of the points along the $zu$ (resp., $zv$) axis form a monotonic
sequence (decreasing or increasing) between the coordinates of $u$ and $z$
(resp. $z$ and $v$), and part {\em (a)} follows. Since $zu$ is parallel to an edge of $\tridown(p, q)$,
and hence the projection of $(p, q)$ onto $zu$ has length $\dst(p, q)$, part {\em (b)} follows.
Parts {\em (c)} and {\em (d)} follow from parts {\em (a)} and {\em (b)}, and  part {\em (e)} follows
from parts {\em (c)} and {\em (d)}.
\end{proof}

The following lemma, illustrated in Figure~\ref{fig:defns3}-(a), is
an insight implicit in Lemma 2 of~\cite{BGHP10} that is
implied by Lemma~\ref{lem:monotone}.

\begin{lemma}
For any two edges $(v,w)$ and $(u,w)$ that lie in the same fan:
\begin{enumerate}

\item The canonical path $P$ between $v$ and $u$ is monotone.

\item The sum of the lengths of all monochromatic edges on $P$
 is at most $\dst(v,u)$.

\item The length of the canonical path $P$ between $v$ to $u$ is at most
$2\dst(v,u)$.
\end{enumerate}
\label{lem:canonical-bound}
\end{lemma}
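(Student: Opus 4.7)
The plan is to prove the three parts in order: first establish the monotonicity of $P$ (Part (a)), then invoke Lemma~\ref{lem:monotone} directly for Parts (b) and (c).

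For Part (a), I would first fix the setting by rotational symmetry: assume the fan lies in the negative red cone of $w$, so that $v$ and $u$ both lie in the bottom-middle cone of $w$, and WLOG take $u$ counterclockwise of $v$ in this cone (so $u$ is to the right of $v$). I would then observe that every canonical edge $(v_i, v_{i+1})$ of the fan is colored green or blue, never red: since $v_{i+1}$ is to the right of $v_i$, either it lies below $v_i$, in which case $v_{i+1}$ is in $v_i$'s positive green cone and the edge is green, or it lies above $v_i$, in which case $v_i$ is in $v_{i+1}$'s positive blue cone and the edge is blue. Hence $P$ uses only the colors green and blue. Next, I would pin down the color $c_1$ of Definition~\ref{def:monotone} by comparing the heights of $v$ and $u$: if $u$ is below $v$, then $u$ is in $v$'s positive green cone, so $c_1$ is green; otherwise $v$ is in $u$'s positive blue cone, so $c_1$ is blue. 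In either case $c_1$ matches one of the colors on $P$, satisfying the bi-colored requirement.

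To verify the remaining monotonicity condition---that no two consecutive edges of $P$ lie in neighboring cones of their shared endpoint---I would do a short case analysis at each internal vertex $v_i$. By the same reasoning as above, the edge from $v_{i-1}$ (to the left of $v_i$) lies in either the positive blue cone (bottom-left) or the negative green cone (top-left) of $v_i$, and the edge to $v_{i+1}$ (to the right) lies in either the positive green cone (bottom-right) or the negative blue cone (top-right). Each of the four resulting pairs consists of two opposite cones, or two cones separated by an intervening red cone; in no case are they neighbors. This establishes Part (a).

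For Part (b), I would apply Lemma~\ref{lem:monotone}~(c) and (d). With $c_1$ as above and $c_2$ the other color on $P$, part (c) bounds the total length of the $c_1$-colored edges of $P$ by $\dst(v,u)$, while part (d) bounds the total length of the $c_2$-colored edges by $\dstc{3}(v,u)$. Since, by general position, the non-apex endpoint of the pair $\{v,u\}$ lies on the side of $\tridown(v,u)$ opposite to the $c_1$-apex, $\dstc{3}(v,u)$ is at most the side length $\dst(v,u)$. Thus the total length of the edges of each single color on $P$ is at most $\dst(v,u)$, giving Part (b). Summing the two bounds yields total length at most $2\dst(v,u)$, which is Part (c) (and is also given directly by Lemma~\ref{lem:monotone}~(e)). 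The main obstacle is the bookkeeping in Part (a): one must correctly correlate each spatial configuration of $v_{i-1}$ and $v_{i+1}$ relative to $v_i$ with the correct colored cone of $v_i$, and then verify that all four resulting cone pairs avoid being neighbors. Beyond this, Parts (b) and (c) reduce to a clean invocation of Lemma~\ref{lem:monotone}.
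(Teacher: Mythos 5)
Your proof is correct and follows essentially the same route as the paper: show the canonical path is bi-colored green/blue, verify that consecutive edges lie in non-adjacent cones, and then invoke Lemma~\ref{lem:monotone} for parts (b) and (c); your explicit four-case check at each internal vertex is just a fleshed-out version of the paper's one-line appeal to the planarity of $\dela$. The only step worth making explicit is that your positional dichotomies (e.g., ``$v_{i+1}$ below and to the right of $v_i$ implies it lies in $v_i$'s positive green cone'') silently exclude the red cones; this exclusion is exactly the paper's observation that $(v_j,w)\in\dela$ already occupies the positive red cone of every point $v_j$ on the path, so no canonical edge (and neither endpoint pair $\{u,v\}$) can be red.
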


\begin{proof}
For part {\em (a)}, we assume, without loss of generality, that $w$
lies in the red positive cones of $v$ and of $u$. We then observe that
for every point $p$ on $P$, $(p,w)$ is an edge in $\dela$. Therefore,
every edge of $P$ must lie in the blue or green positive cones of its
tail, and thus path $P$ is bi-colored.  Furthermore, since $\dela$ is
planar, at every intermediate point $p$ of $P$, the two edges of $P$
incident to $p$ must lie in non-adjacent cones.  The canonical path
$P$ between $v$ and $u$ is thus monotone.  Hence, parts (b) and (c)
follow by Lemma~\ref{lem:monotone}.
\end{proof}

\begin{figure}
\begin{minipage}[b]{.5\textwidth}
\centering

\begin{tikzpicture}[scale = 0.9]

\clip(-2.2,-6.6) rectangle (3,-2.4);


\draw (0,0) node[vertex,label=90:{$w$}] (w) {};

\draw (-1,-3.2) node[vertex,label=180:{$x=v$}] (v) {};
\draw (-0.62,-3.62) node[vertex,label=180:{}] (p6) {};
\draw (-0.16,-4) node[vertex,label=90:{}] (p5) {};
\draw (0.22,-3.52) node[vertex,label=90:{}] (p4) {};
\draw (0.76,-3.6) node[vertex,label=45:{}] (p3) {};
\draw (1.22,-4.08) node[vertex,label=45:{}] (p2) {};
\draw (1.66,-4.44) node[vertex,label=0:{}] (p1) {};
\draw (2.1,-4.02) node[vertex,label=0:{$u$}] (p0) {};

\path [help lines,name path=vy] (v) -- ++(300:5);
\path [help lines,name path=vx] (v) -- ++(0:5);

\path [draw,help lines,name path=p6z] (p6) -- ++(240:0.4);
\path [draw,help lines,name path=p5z] (p5) -- ++(240:0.7);
\path [draw,help lines,name path=p4z] (p4) -- ++(240:1.5);
\path [draw,help lines,name path=p3z] (p3) -- ++(240:2);
\path [draw,help lines,name path=p2z] (p2) -- ++(240:2);
\path [draw,help lines,name path=p1z] (p1) -- ++(240:2.1);
\path [draw,help lines,name path=p1y] (p1) -- ++(300:0.5);
\path [draw,help lines,name path=p4y] (p4) -- ++(300:2.5);
\path [draw,help lines,name path=p5y] (p5) -- ++(300:2.5);

\path [help lines,name path=p0u] (p0) -- ++(60:2.4);
\path [help lines,name path=p0l] (p0) -- ++(240:5);

\path [name intersections={of=vx and p0u, by={[label=45:$y$]Y}}];
\path [name intersections={of=vy and p0l, by={[label=0:$z$]Z}}];

\draw [gray!25] (v) -- (Y) -- (Z) -- (v);

\path [draw,  name intersections={of=vy and p6z, by={P6Z}}];
\path [name intersections={of=vy and p5z, by={P5Z}}];
\path [name intersections={of=vy and p4z, by={P4Z}}];
\path [name intersections={of=vy and p3z, by={P3Z}}];
\path [name intersections={of=vy and p2z, by={P2Z}}];
\path [name intersections={of=vy and p1z, by={P1Z}}];

\draw [red, very thick] (v) -- (P5Z);
\draw [red, densely dotted, very thick] (P4Z) -- (P5Z);
\draw [red, very thick] (P4Z) -- (P1Z);
\draw [red, densely dotted, very thick] (Z) -- (P1Z);

\path [name intersections={of=p0l and p1y, by={P1Y}}];
\path [name intersections={of=p0l and p4y, by={P4Y}}];
\path [name intersections={of=p0l and p5y, by={P5Y}}];

\draw [red, densely dotted, very thick] (Z) -- (P5Y);
\draw [red, very thick] (p0)  -- (P1Y);
\draw [red, densely dotted, very thick] (P4Y) -- (P1Y);
\draw [red, very thick] (P4Y) -- (P5Y);


\draw [canonical,green] (v) -- (p6);
\draw [canonical,green] (p6) -- (p5);
\draw [canonical,color=blue] (p4) -- (p5);
\draw [canonical,green] (p4) -- (p3);
\draw [canonical,green] (p3) -- (p2);
\draw [canonical,green] (p2) -- (p1);
\draw [canonical,color=blue] (p0) -- (p1);
\end{tikzpicture}

\centering (a)
\end{minipage}
\begin{minipage}[b]{.5\textwidth}
\centering

\begin{tikzpicture}[scale = 0.7]

\clip(-2.4,-2.7) rectangle (2.3,2.4);

\filldraw[fill=blue!10, draw=white] (0,0) -- ++(5cm,0mm) arc [start angle=0, end angle=60, radius=5cm] -- cycle;
\filldraw[fill=blue!10, draw=white] (0,0) -- ++(-5cm,0mm) arc [start angle=180, end angle=240, radius=5cm] -- cycle;

\draw (-5,0) -- (5,0);
\draw (0,0) -- (60:5); \draw (0,0) -- (240:5);
\draw (0,0) -- (120:5); \draw (0,0) -- (300:5);

\draw (0,0) node[vertex,label=180:{$w$}] (w) {};
\draw (0.9105,-0.79449) node[vertex,label=0:{$v_0$}] (v0) {};
\draw (-0.52001,1.79464) node[vertex,label=90:{$v_1$}] (v1) {};
\draw (-1.67780,-1.49337) node[vertex,label=93:{$v_2$}] (v2) {};
\draw (-1.54,0.72) node[vertex,label=90:{$v_3$}] (v3) {};
\draw (1.66,1.18) node[vertex,label=90:{$v_4$}] (v4) {};
\draw (-0.56,-2.34) node[vertex,label=0:{$v_5$}] (v5) {};

\draw [anchorEdge,dashed] (w) -- (v0);
\draw [anchorEdge,dashed] (w) -- (v1);
\draw [anchorEdge,blue] (w) -- (v2);
\draw [anchorEdge,dashed] (v3) -- (w);
\draw [anchorEdge,blue] (v4) -- (w);
\draw [anchorEdge,dashed] (v5) -- (w);
\end{tikzpicture}

\centering (b)
\end{minipage}

\caption{(a) Illustration of Lemma~\ref{lem:canonical-bound}.
(b) $w$ is incident to an anchor in every cone. In step 1, both of the
blue anchors are added.  In step 2, on the other hand, no more than
one white anchor is added below the horizontal line through $w$ and no
more than one white anchor is added above. }
\label{fig:defns3}

\end{figure}
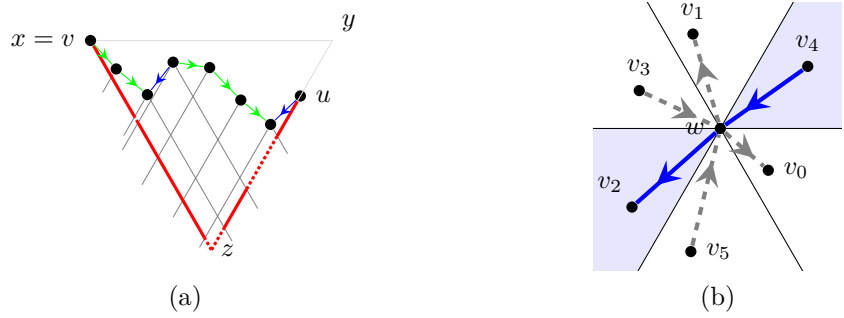
\renewcommand{\theenumi}{\arabic{enumi}}

\section{The Spanner}
\label{sec:spanner}
In this section, we describe the construction of a plane spanner
of $\comp$ of maximum degree at most 4 and stretch factor at most 20.
In our construction, we will bias blue -- positive
and negative -- cones and edges. This bias results in a spanner satisfying structural properties that allow us to prove the desired upper bounds on the spanner degree
and stretch factor. This bias also ensures that the spanner has maximum degree at most 3 when the point-set $\Points$ is in convex position.
We find it convenient to refer to the four non-blue cones, as well as all the red and green edges, as
{\em white}, as illustrated in Figure~\ref{fig:defns3}-(b), and
introduce some terminologies.

If $e$ is a canonical edge of a point $w$ that lies in a white (resp., blue)
cone of $w$, we say that $e$ is a canonical edge {\em in a white}
(resp., {\em blue}) {\em cone}. We note that a canonical edge
could be in a white cone of one point and in a blue
cone of another. Given a white anchor $(v,w)$, the ray starting
from $w$ extending $(v, w)$ partitions the (white) negative cone of $w$
containing $v$ into two {\em sides}: we refer to the side of the cone
that is adjacent to a blue cone as the {\em blue side}, and we refer
to the other side that is adjacent to a white cone as the {\em white
side}. We say that an edge $(u, w)$ in the fan of $(v, w)$ is on the {\em white side} (resp.~{\em blue side}) if it is on the white side (resp.~blue side) of $(v,w)$.


The following describes the construction of the spanner $\spanner$
of $\comp$. The construction is based on the underlying triangulation
$\dela$ of $\comp$. We start by constructing a degree-4 anchor subgraph $\anchiv$ of $\spanner$ that includes all blue anchors. We then finalize $\spanner$
by adding some white canonical edges and shortcuts to ensure the reconstructibility of all canonical paths.    \\

{\bf Constructing the anchor subgraph $\anchiv$ of $\spanner$:}

\begin{enumerate}

\item We add to $\anchiv$ every blue anchor (of $\dela$).

\item In increasing order of length with respect to the metric $\dst$,
for every white anchor $a$, we add $a$ to~$\anchiv$ if no white anchor adjacent
to $a$ is already in $\anchiv$.

\end{enumerate}

{\bf Reconstructing canonical paths in blue
cones (see Figure~\ref{fi:algorithm}-(a)):}

\begin{enumerate}

\setcounter{enumi}{2}

\item We add to $\spanner$ every (white) canonical edge in a blue cone
if the edge is not in $\anchiv$.

\item For every pair of canonical edges $(p,q), (r,q)$ in a blue cone
such that $(p,q), (r,q) \in \spanner \setminus \anchiv$, we add to
$\spanner$ the shortcut edge $(p,r)$, color it white, and remove
$(p,q)$ and $(r,q)$ from $\spanner$.

\end{enumerate}

{\bf Reconstructing canonical paths in white cones (see
Figure~\ref{fi:algorithm}-(b)):}

\begin{enumerate}

\setcounter{enumi}{4}

\item We add to $\spanner$ every white canonical edge that is on the white
side of its (white) anchor, but only if its anchor is not in $\anchiv$.

\item For every white anchor $(v, w)$ and its boundary edge
$(u, w) \neq (v, w)$ on the white side, let $P$ be the canonical path
($u=p_0, p_1, \ldots, p_k=v$). We apply the following procedure at
a current point $p_i$ starting with $i=0$ and stopping when $i=k$:

\begin{enumerate}

  \item If the canonical edge $(p_{i+1}, p_i)$ is white, we skip this
edge and set $i$ to $i+1$;

  \item Otherwise, $(p_i, p_{i+1})$ must be blue. Let $j > i$ be the largest
	index of a point on $P$ such that the line segment $[p_ip_j]$ does not
	intersect the canonical path from $p_i$ to $p_j$ (except at $p_i$ and
	$p_j$). We add the shortcut $(p_j, p_i)$ to $\spanner$ and color it
	white; we remove the (white) canonical edge $(p_j, p_{j-1})$
	from $\spanner$ if $(p_j, p_{j-1}) \in \spanner \setminus \anchiv$;
	and we set $i$ to $j$.
\end{enumerate}
\end{enumerate}

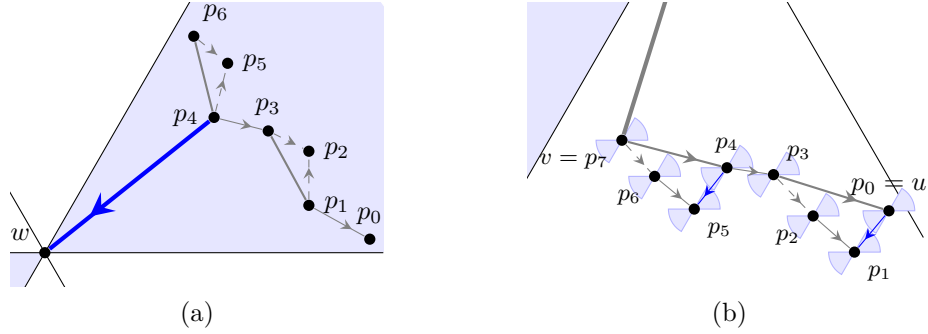
\begin{figure}
\noindent
\begin{minipage}[b]{.5\textwidth}
\centering

\begin{tikzpicture}[scale = 0.9]

\clip(-.5,-.5) rectangle (5,3.7);

\filldraw[fill=blue!10, draw=white] (0,0) -- ++(6.2cm,0mm) arc [start angle=0, end angle=60, radius=6.2cm] -- cycle;
\filldraw[fill=blue!10, draw=white] (0,0) -- ++(-6.2cm,0mm) arc [start angle=180, end angle=240, radius=6.2cm] -- cycle;

\draw (-5,0) -- (5,0);
\draw (0,0) -- (60:5); \draw (0,0) -- (240:5);
\draw (0,0) -- (120:5); \draw (0,0) -- (300:5);

\draw (0,0) node[vertex,label=150:{$w$}] (w) {};
\draw (4.8,0.2) node[vertex,label=90:{$p_0$}] (v0) {};
\draw (3.9,0.7) node[vertex,label=0:{$p_1$}] (v1) {};
\draw (3.9, 1.5) node[vertex,label=0:{$p_2$}] (v2) {};
\draw (3.3,1.8) node[vertex,label=90:{$p_3$}] (v3) {};
\draw (2.5,2) node[vertex,label=180:{$p_4$}] (v4) {};
\draw (2.7,2.8) node[vertex,label=0:{$p_5$}] (v5) {};
\draw (2.2,3.2) node[vertex,label=87:{$p_6$}] (v6) {};


\draw [canonical,dashed] (v4) -- (v5);
\draw [canonical,dashed] (v6) -- (v5);
\draw [canonical] (v4) -- (v3);
\draw [canonical,dashed] (v3) -- (v2);
\draw [canonical,dashed] (v1) -- (v2);
\draw [canonical] (v1) -- (v0);

\draw [shortcut2] (v3) -- (v1);
\draw [shortcut2] (v4) -- (v6);

\draw [anchorEdge,blue] (v4) -- (w);
\end{tikzpicture}

\centering (a)
\end{minipage}
\begin{minipage}[b]{.5\textwidth}
\centering

\begin{tikzpicture}[scale = 1.145]

\clip(-2.1,-4.9) rectangle (2.6,-1.6);

\filldraw[fill=blue!10, draw=white] (0,0) -- ++(5cm,0mm) arc [start angle=0, end angle=60, radius=5cm] -- cycle;
\filldraw[fill=blue!10, draw=white] (0,0) -- ++(-5cm,0mm) arc [start angle=180, end angle=240, radius=5cm] -- cycle;

\draw (-5,0) -- (5,0);
\draw (0,0) -- (60:5); \draw (0,0) -- (240:5);
\draw (0,0) -- (120:5); \draw (0,0) -- (300:5);

\draw (0,0) node[vertex,label=90:{$w$}] (w) {};

\filldraw[fill=blue!10, draw=blue!30] (-1,-3.2) -- ++(3mm,0mm) arc [start angle=0, end angle=60, radius=3mm] -- cycle;
\filldraw[fill=blue!10, draw=blue!30] (-1,-3.2) -- ++(-3mm,0mm) arc [start angle=180, end angle=240, radius=3mm] -- cycle;
\draw (-1,-3.2) node[vertex,label=225:{\small $v=p_7$}] (v) {};

\filldraw[fill=blue!10, draw=blue!30] (-0.62,-3.62) -- ++(3mm,0mm) arc [start angle=0, end angle=60, radius=3mm] -- cycle;
\filldraw[fill=blue!10, draw=blue!30] (-0.62,-3.62) -- ++(-3mm,0mm) arc [start angle=180, end angle=240, radius=3mm] -- cycle;
\draw (-0.62,-3.62) node[vertex,label=225:{\small $p_6$}] (p6) {};

\filldraw[fill=blue!10, draw=blue!30] (-0.16,-4) -- ++(3mm,0mm) arc [start angle=0, end angle=60, radius=3mm] -- cycle;
\filldraw[fill=blue!10, draw=blue!30] (-0.16,-4) -- ++(-3mm,0mm) arc [start angle=180, end angle=240, radius=3mm] -- cycle;
\draw (-0.16,-4) node[vertex,label=315:{\small $p_5$}] (p5) {};

\filldraw[fill=blue!10, draw=blue!30] (0.22,-3.52) -- ++(3mm,0mm) arc [start angle=0, end angle=60, radius=3mm] -- cycle;
\filldraw[fill=blue!10, draw=blue!30] (0.22,-3.52) -- ++(-3mm,0mm) arc [start angle=180, end angle=240, radius=3mm] -- cycle;
\draw (0.22,-3.52) node[vertex,label=90:{\small $p_4$}] (p4) {};

\filldraw[fill=blue!10, draw=blue!30] (0.76,-3.6) -- ++(3mm,0mm) arc [start angle=0, end angle=60, radius=3mm] -- cycle;
\filldraw[fill=blue!10, draw=blue!30] (0.76,-3.6) -- ++(-3mm,0mm) arc [start angle=180, end angle=240, radius=3mm] -- cycle;
\draw (0.76,-3.6) node[vertex,label=45:{\small $p_3$}] (p3) {};

\filldraw[fill=blue!10, draw=blue!30] (1.22,-4.08) -- ++(3mm,0mm) arc [start angle=0, end angle=60, radius=3mm] -- cycle;
\filldraw[fill=blue!10, draw=blue!30] (1.22,-4.08) -- ++(-3mm,0mm) arc [start angle=180, end angle=240, radius=3mm] -- cycle;
\draw (1.22,-4.08) node[vertex,label=225:{\small $p_2$}] (p2) {};

\filldraw[fill=blue!10, draw=blue!30] (1.7,-4.5) -- ++(3mm,0mm) arc [start angle=0, end angle=60, radius=3mm] -- cycle;
\filldraw[fill=blue!10, draw=blue!30] (1.7,-4.5) -- ++(-3mm,0mm) arc [start angle=180, end angle=240, radius=3mm] -- cycle;
\draw (1.7,-4.5) node[vertex,label=-45:{\small $p_1$}] (p1) {};

\filldraw[fill=blue!10, draw=blue!30] (2.1,-4.02) -- ++(3mm,0mm) arc [start angle=0, end angle=60, radius=3mm] -- cycle;
\filldraw[fill=blue!10, draw=blue!30] (2.1,-4.02) -- ++(-3mm,0mm) arc [start angle=180, end angle=240, radius=3mm] -- cycle;
\draw (2.1,-4.02) node[vertex,label=90:{$p_0 = u$}] (p0) {};

\draw [shortcut] (p3)-- (p0);
\draw [shortcut] (v)-- (p4);

\draw [anchorEdge] (v) -- (w);
\draw [canonical,dashed] (v) -- (p6);
\draw [canonical] (p6) -- (p5);
\draw [canonical,color=blue] (p4) -- (p5);
\draw [canonical] (p4) -- (p3);
\draw [canonical,dashed] (p3) -- (p2);
\draw [canonical,-] (p2) -- (p1);
\draw [canonical,color=blue] (p0) -- (p1);

\end{tikzpicture}

\centering (b)
\end{minipage}

\caption{(a) In
step 3, white canonical edges of $w$ in the negative blue cone of $w$ are
added to
$\spanner$  if not in $\anchiv$ already; in step 4, any pair of canonical
edges of $w$  added in step 3 that are incoming at the same point are
replaced by a shortcut between the outgoing endpoints ($(p_6,p_5)$ and 
$(p_4,p_5)$ replaced by shortcut $(p_6,p_4)$ and $(p_3,_2)$ and 
$(p_1,p_2)$ replaced by shortcut $(p_3,p_1)$.) (b) Shortcut edges $(p_3,p_0)$
and $(p_7,p_4)$ are added to $\spanner$
in step 6; edges $(p_7,p_6)$ and $(p_3,p_2)$
are not in $\spanner$ unless they are anchors in $\anchiv$.}
\label{fi:algorithm}

\end{figure}

In the following section we prove that this algorithm yields a plane
spanner of maximum degree at most $4$ and stretch factor at most $20$. We
provide here a high-level overview of our arguments.

To show planarity, we note that the underlying graph $\dela$ is planar and that
the only edges of $\spanner$ not in $\dela$ are the shortcut edges added in
steps 4 and 6.b. We will prove in Lemmas~\ref{lem:uncrossed}
and~\ref{lem:plane} that each such edge does not intersect any other edge of
$\spanner$. For the degree upper bound, we note that the first two steps of the algorithm
yield the subgraph $\anchiv$ of maximum degree at most 4. In the remaining steps,
we carefully add additional edges, whether canonical edges or shortcuts of
canonical paths. To prove the degree bound, we develop a charging argument
that assigns each edge of $\spanner$ to a cone at each endpoint and show,
in Lemma~\ref{lem:degree}, that no more than 4 cones are charged at every point.

Note that step 1 adds all blue anchors to $\spanner$.
For white anchors, we cannot include all of them in $\spanner$ because of the degree bound. However, by considering anchors in an increasing order of their length in step 2, we
can ensure that a white anchor is not added to $\spanner$ only if an adjacent shorter white anchor was added to $\spanner$.
This ordering is crucial because it allows us to use short anchors in $\spanner$ to
reconstruct longer anchors not in
$\spanner$.

In steps 3 and 4, we add to $\spanner$ all (white)
canonical edges in blue cones, except for some consecutive pairs of
canonical edges that are replaced with shortcut edges. Therefore, steps 1 through 4 of the algorithm ensure that every canonical path in a blue cone has been reconstructed.  In particular, since all blue anchors are in $\spanner$, every blue boundary edge, and therefore every blue canonical edge, has been reconstructed.

Consider now a canonical path $P$ in a white negative cone of $w$, and the two subpaths $P_w$ and $P_b$ of $P$ on the white side and the blue side of this cone, respectively.
In step 6, we take shortcuts to obtain, together with step 5, a white monotone path to ensure that there is a short path reconstructing $P_w$.
As for $P_b$, we argued above that the blue edges of $P_b$ are reconstructible. As for the white edges of $P_b$, we note that they are on the white side of their anchors, and hence,
the argument used to reconstruct $P_w$ above applies to them as well. The above, once again with step 5, ensures the reconstructibility of $P_b$, and hence of $P$.

\section{Properties of the Spanner}
\label{sec:spanner-props}
In this section, we prove the three properties of the spanner
obtained using our algorithm: planarity, the maximum degree upper
bound of 4, and the stretch factor bound of 20. We start with the
following justification for coloring shortcut edges added in step 6
white:

\begin{lemma}
For every shortcut edge $(p_j,p_i)$ added to $\spanner$ in step 6,
$p_j$ and $v=p_k$ both lie in the same negative white cone of $p_i$,
and they both lie in the same negative white cone of $p_{j-1}$.
\end{lemma}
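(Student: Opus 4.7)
Without loss of generality, I would assume the anchor $(v, w)$ lies in the negative red cone of $w$, so that the fan lies in that cone, the white side of $v$ is adjacent to the positive green cone of $w$, and the step-6 walk proceeds from $u = p_0$ on the right of the fan to $v = p_k$ on the left. By Lemma~\ref{lem:canonical-bound}, the canonical path $(p_0, p_1, \ldots, p_k)$ is monotone, with bi-coloring green (the color $c_1$ of the positive cone of $v$ containing $u$) and blue. The other symmetric setups follow by the same argument after relabeling cones.

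The central geometric fact I would establish first is that every intermediate canonical-path point lies in the positive green cone of $v$, which immediately places $v$ in the negative green cone of every such point. To see this, I would combine the projection property of monotone paths (Lemma~\ref{lem:monotone}, adapted by the obvious symmetry that swaps $y$ with $z$ and $c_2$ with $c_3$ for $c_1$-$c_3$ bi-coloring) with the valid-fan constraints $\dst(p, w) \geq \dst(v, w)$ and $(p, w) \in \dela$. Together these force every fan point $p$ to lie weakly below the horizontal line through $v$ and outside the negative red cone of $v$, and combined with the parallelogram containment of the monotone path, this pins $p$ in the positive green cone of $v$. Consequently $p_k = v$ lies in the negative green cone of both $p_i$ and $p_{j-1}$.

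For the first claim, it then remains to place $p_j$ in the negative green cone of $p_i$. Since $(p_i, p_{i+1})$ is blue, $p_{i+1}$ lies in the positive blue cone of $p_i$. By iterating the monotone constraint at each subsequent vertex of the sub-path from $p_i$ to $p_k$, every $p_\ell$ with $\ell > i$ lies in either the positive blue cone or the negative green cone of $p_i$---the two cones on the same side of the horizontal ray extending leftward from $p_i$. If $p_j$ lay in positive blue of $p_i$, then $p_{j+1}$ would also lie in positive blue or negative green of $p_i$, and the segment $[p_i, p_{j+1}]$ would still avoid the canonical path because the path continues monotonically within the same half-plane region; this contradicts the maximality of $j$, so $p_j$ must lie in the negative green cone of $p_i$.

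For the second claim, the analogous argument is applied at $p_{j-1}$. First, the canonical edge between $p_{j-1}$ and $p_j$ must be the green directed edge $(p_j, p_{j-1})$: if it were the blue directed edge $(p_{j-1}, p_j)$, the monotone path would continue into the positive blue cone of $p_{j-1}$ and the shortcut could be extended to $p_{j+1}$ without crossing the canonical path, again contradicting the maximality of $j$. Once the edge is green, $p_j$ lies in the negative green cone of $p_{j-1}$, and the same white-side containment argument places $p_k$ in the negative green cone of $p_{j-1}$. The main obstacle will be formalizing the maximality argument rigorously---specifically, verifying that once the monotone canonical path has exited the positive blue cone of $p_i$ (respectively $p_{j-1}$) into the negative green cone, it cannot re-enter in a way that would cause some later canonical edge to cross the extended segment $[p_i, p_{j+1}]$.
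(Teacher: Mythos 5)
Your proposal takes essentially the same route as the paper: both parts rest on the same two ingredients, namely locating $v=p_k$ relative to the fan points via the anchor's $\dst$-minimality together with the empty homothet guaranteed by $(p_i,w)\in\dela$, and then pinning down $p_j$ by exploiting the maximality in the choice of $j$. The paper phrases that second step as ``$p_j$ must lie on the same side of line $p_ip_k$ as $w$'' while you phrase it as ruling out the positive blue cone of $p_i$ (resp.\ $p_{j-1}$) by arguing the shortcut could otherwise be extended to $p_{j+1}$; both versions leave the maximality argument at a comparable level of detail, so the gap you flag at the end is one the paper's own proof shares rather than one you have introduced.
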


\begin{proof}
Because $\dst(w,p_k) < \dst(w,p_i)$ and $(p_i,w) \in \dela$, $p_k$
must lie in a negative white cone of $p_i$. The lemma thus holds if $j=k$.
Otherwise, by the choice of $p_j$, $p_j$ must lie on the same side of line $p_ip_k$ as point $w$;
again, because $(p_i,w) \in \dela$, $p_j$ must lie in a (negative) white cone
of $p_i$ that also contains $p_k$. Similar arguments apply to $p_{j-1}$.
\end{proof}

Next, we show that $\spanner$ is plane. We first need the following definition and lemma.

\begin{definition}
\label{def:uncrossed}
An edge $(u,w) \in \dela$ is {\em uncrossed} if no shortcut in
$\spanner$ crosses $(u,w)$.
\end{definition}

\begin{lemma}
\label{lem:uncrossed}
All anchors, all canonical edges, and all boundary edges are
uncrossed.
\end{lemma}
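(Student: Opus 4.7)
I plan to prove Lemma~\ref{lem:uncrossed} by case analysis on the two types of shortcuts added to $\spanner$: those added in step 4 and those added in step 6.b of the algorithm. In each case I will enclose the shortcut in a planar region bounded by edges of $\dela$ and identify the only edges of $\dela$ that could potentially be crossed.

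For a step 4 shortcut $(p,r)$, which replaces two canonical edges $(p,q)$ and $(r,q)$ of a point $w$ in its blue negative cone: the points $p=v_{j-1}$, $q=v_j$, $r=v_{j+1}$ are three consecutive fan vertices, and the triangles $pqw$ and $qrw$ are faces of $\dela$. The segment $[pr]$ lies in the closed region $R$ bounded by the four $\dela$-edges $(p,w)$, $(p,q)$, $(q,r)$, $(r,w)$. By planarity of $\dela$, the only edge of $\dela$ in the interior of $R$ is the shared edge $(q,w)$. Now $(q,w)$ is not a canonical edge (canonical edges join two fan vertices of a common apex) and not a boundary edge of the fan (since $q$ lies strictly between the two fan boundaries). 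The only remaining possibility is that $(q,w)$ is the anchor of $w$ in this blue cone; in that sub-case, I will argue that the minimality of $\dst(q,w)$ forces $q$ to lie in the interior of triangle $pwr$, so the quadrilateral $pqrw$ is concave at $q$ and $[pr]$ does not cross $(q,w)$.

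For a step 6.b shortcut $(p_j,p_i)$: by the choice of $p_j$, the segment $[p_ip_j]$ does not intersect the canonical subpath from $p_i$ to $p_j$, so no canonical edge on this subpath is crossed by the shortcut. The shortcut lies inside the bounded region $R'$ delimited by this canonical subpath together with the segments $[p_iw]$ and $[p_jw]$. By planarity of $\dela$, the only edges of $\dela$ in the interior of $R'$ are the intermediate fan edges $(p_l,w)$ for $i<l<j$; each such edge is neither the anchor $(v,w)=(p_k,w)$ of this fan (since $l<k$) nor a boundary edge of this fan (since $p_l$ is strictly interior to the canonical path). A short additional argument, appealing to the empty-$\tridown$-homothet property, rules out that such $(p_l,w)$ is simultaneously a canonical edge or a boundary edge of some other fan.

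The main obstacle is the step 4 anchor sub-case, which requires a geometric argument showing that the minimality of $\dst(q,w)$ among the fan edges of $w$ forces $q$ to lie strictly inside triangle $pwr$. I plan to establish this by comparing the empty homothets $\tridown(q,w)$, $\tridown(p,w)$, and $\tridown(r,w)$: since $\tridown(q,w)$ is the smallest and its interior is empty of points of $\Points$, the fan vertices $p$ and $r$ must lie outside $\tridown(q,w)$ while still being in the blue negative cone of $w$, which pins down their positions enough to deduce concavity at $q$.
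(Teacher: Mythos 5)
Your overall skeleton matches the paper's proof: for each of the two shortcut types you enclose the shortcut in a region of $\dela$, identify the only candidate crossed edges ($(q,w)$ for step 4; the edges $(p_l,w)$ with $i<l<j$ for step 6), and check that none of these is an anchor, a canonical edge, or a boundary edge. Your step-6 analysis is sound and is essentially what the paper does (the paper states it as ``easy to verify''); your observation that $(p_l,w)$ is directed into $w$ and hence can only be a fan edge of $w$ is the right way to rule out its being a boundary edge or anchor of some other fan.

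The gap is in your step-4 anchor sub-case. First, that sub-case is void, and the clean way to dispose of it is part (c) of Lemma~\ref{lem:canonical-properties}: since $(p,q)$ is a canonical edge of $w$ directed into $q$, the point $q$ lies in a positive white cone of $p$ that does not contain $w$, which (by the blue-cone analogue of Lemma~\ref{lem:metric-props}) forces $\dst(q,w) > \dst(p,w)$; hence $(q,w)$ does not minimize $\dst(\cdot,w)$ over the fan and is not the anchor. Second, your proposed fallback is internally inconsistent and would not close the case even if its geometric claim held: you first assert $[pr]\subseteq R$, but if $q$ were interior to triangle $pwr$ then the diagonal $[pr]$ of the quadrilateral $wpqr$ would leave $R$ near $q$ and enter other faces of $\dela$, so showing that $[pr]$ misses $(q,w)$ would not finish the argument --- you would still have to rule out crossings with the unexamined edges bounding those other faces, some of which could be canonical or boundary edges of other fans. (In fact the reflex configuration never occurs: $p$ and $r$ lie in the two non-adjacent negative white cones of $q$, with the positive blue cone of $q$ containing $w$ between them, so the interior angle of $wpqr$ at $q$ is less than $\pi$ and $[pr]$ always does cross $(q,w)$. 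But your outline neither establishes this nor needs to once Lemma~\ref{lem:canonical-properties}(c) is invoked.)
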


\begin{proof}
Let $(p,r)$ be a shortcut that was added in step 4 of the spanner
construction, let $(p,q)$ and $(r,q)$ be the pair of canonical edges
in the blue cone as described in step 4, and let $w$ be the apex of
this blue cone. It is easy to verify that $(q,w) \in
\dela$ is the only edge in $\dela$ that $(p,r)$ crosses, and that
$(q,w)$ is not a boundary edge, a canonical edge, or an anchor.
Next, consider a shortcut $(p_j,p_i)$ that was added in
step 6 of the spanner construction, and let $(v,w)$ be the white
anchor and $p_{i+1}, p_{i+2}, \ldots, p_{j-1}$ be the points on the
canonical path between $p_i$ and $p_j$ as described in step 6.
Again, it is easy to verify that $(p_{i+1},w), (p_{i+2},w), \ldots,
(p_{j-1},w)$ are the only edges in $\dela$ that the shortcut
$(p_j,p_i)$ crosses, and that none of them is a boundary edge, a
canonical edge, or an anchor.
\end{proof}

\begin{lemma}
The subgraph $\spanner$ is a plane subgraph of $\comp$.
\label{lem:plane}
\end{lemma}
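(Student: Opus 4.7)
The plan is to partition the edges of $\spanner$ into Delaunay edges (anchors added in steps 1--2 and canonical edges added in steps 3 and 5) and shortcut edges (added in steps 4 and 6.b), and then rule out crossings between every pair of types. The Delaunay--Delaunay case is immediate from the planarity of $\dela$. The shortcut--Delaunay case follows directly from Lemma~\ref{lem:uncrossed}, since every Delaunay edge of $\spanner$ is either an anchor or a canonical edge, both of which that lemma asserts are uncrossed by any shortcut.

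The principal case is shortcut--shortcut, which I would handle by localization. Reusing the case analysis already carried out in the proof of Lemma~\ref{lem:uncrossed}, each shortcut lies inside the union of the Delaunay triangles whose interiors it crosses: a step-4 shortcut $(p,r)$ added at a peak $q$ in a blue cone of some apex $w$ is contained in the quadrilateral $\triangle pqw \cup \triangle rqw$ (the two Delaunay triangles sharing the unique edge $(q,w)$ that it crosses), and a step-6 shortcut $(p_j,p_i)$ is contained in the fan region $\bigcup_{\ell=i}^{j-1} \triangle p_\ell p_{\ell+1} w$, equivalently the polygon bounded by the segment $[p_i p_j]$ and the canonical subpath from $p_i$ to $p_j$. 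Because $\dela$ is a planar triangulation, two Delaunay triangles either coincide or have disjoint interiors, so to rule out a crossing between two shortcuts it suffices to show that the triangle sets associated to two distinct shortcuts are disjoint.

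I would verify this disjointness by case analysis. Two step-4 shortcuts in a common blue fan correspond to distinct peak vertices $q$ and hence use disjoint pairs of triangles; two step-6 shortcuts along the same canonical path are generated by the sequential update rule in step 6 with pairwise disjoint index intervals $[i,j]$, so their triangle ranges are disjoint; and shortcuts arising from distinct fans involve distinct Delaunay triangles, since every triangle of a fan contains the fan's apex $w$ as a vertex. The main obstacle I anticipate is the mixed case of a step-4 shortcut and a step-6 shortcut whose localization regions might share a Delaunay triangle; here one must verify, by exploiting the blue-cone constraint on the step-4 shortcut's endpoints and the white-cone constraint on the step-6 shortcut's endpoints, that the two shortcuts cannot traverse the shared triangle along crossing chords.
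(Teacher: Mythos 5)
Your decomposition and the first two cases coincide with the paper's proof: the paper likewise splits $\spanner$ into $\spanner_1=\dela\cap\spanner$ and the shortcuts $\spanner_2=\spanner\setminus\spanner_1$, handles Delaunay--Delaunay crossings by planarity of $\dela$, and handles shortcut--Delaunay crossings by Lemma~\ref{lem:uncrossed}. For shortcut--shortcut the paper is terser --- it simply notes that two shortcuts either belong to different fans (and then cannot cross) or to the same fan (and then connect endpoints of non-overlapping canonical subpaths). Your localization of each shortcut inside the union of fan triangles it traverses is a legitimate, more explicit way of making this rigorous, and your same-fan analysis (distinct peaks for step 4, disjoint index intervals produced by the sequential rule in step 6) is correct.

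The one case you leave open is exactly the cross-fan case, and the justification you offer --- that every fan triangle contains its apex as a vertex --- does not by itself give disjointness, since a triangle has three vertices and could a priori be a fan triangle of more than one of them. The missing step is an orientation count: a triangle $(v_i,v_{i+1},w)$ of the fan of $w$ has both edges $(v_i,w)$ and $(v_{i+1},w)$ directed into $w$, and since the three directed edges of a Delaunay triangle contribute total in-degree $3$ to its three vertices, at most one vertex of any triangle can have in-degree $2$. Hence each Delaunay triangle is a fan triangle of at most one apex, the triangle sets of shortcuts from distinct fans are disjoint, and the ``mixed'' step-4/step-6 configuration you flag as the main obstacle cannot share a triangle at all (note also that steps 4 and 6 act on blue and white negative cones respectively, so they never operate on the same fan). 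With that observation your argument closes and is, in substance, a fleshed-out version of the paper's proof rather than a different one.
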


\begin{proof}
Let $\spanner_1=\dela \cap \spanner$ and $\spanner_2 = \spanner
\setminus \spanner_1$. Note that $\spanner_1$ consists of $\anchiv$ plus
the canonical edges added in steps (3) or (5) that are kept after steps
(4) and (6), and that $\spanner_2$ consists only of the shortcuts added in
steps (4) and (6).  Since $\spanner_1$ is a subgraph of $\dela$,
$\spanner_1$ is plane.	By Lemma~\ref{lem:uncrossed}, all the edges in $\spanner_1$ are uncrossed, \ie, no shortcut (edge
in $\spanner_2$) crosses an edge in $\spanner_1$.  To conclude the
proof, we show that no two edges in $\spanner_2$ cross either.
Observe that any two shortcuts connect pairs of endpoints of canonical
paths that either belong to different fans, and in which case they cannot cross,
or belong to the same fan. In the latter case, the shortcuts do
not cross because shortcuts always connect the endpoints of
non-overlapping canonical paths.
\end{proof}

To facilitate the discussion in the proof of the degree upper bound,
we refer to the two adjacent white cones above (resp., below) the
horizontal line through a point $p \in \Points$ as the {\em upper}
(resp., {\em lower}) {\em white sector} of $p$; we also refer to the two blue
cones at $p$ as the {\em left} and {\em right blue sectors} of $p$. We develop next
a charging scheme and use it to show that, for each point $p$, each
edge incident to $p$ in $\spanner$ can be mapped in a one-to-one
fashion to one of the four sectors at~$p$. To describe the charging
scheme for every edge $e \in \spanner$ and for every endpoint $p$ of
$e$, we define $\sector(e,p)$ to be the sector of $p$ that contains
$e$. Also for a point $p$, we denote by $LB_p$, $RB_p$, $UW_p$, and
$LW_p$, the left blue, the right blue, the upper white, and the
lower white sectors of $p$ respectively. We describe in the table
below the charging scheme for every edge $e=(x,y) \in \spanner$
based on which step of the construction $e$ is added to $\spanner$.\\
\begin{tabular}{c c c c}
\\ {\bf Step} & {\bf Classification of $e = (x,y)$} &
{\bf Charge at $x$} & {\bf Charge at $y$} \\
1 & Blue anchor in $\anchiv$ &
$\sector(e,x)=LB_x$ & $\sector(e,y)=RB_y$ \\
2 & White anchor in $\anchiv$ &
$\sector(e,x)=UW_x$ or $LW_x$ & $\sector(e,y)=LW_y$ or $UW_y$ \\
3 & White canonical edge in a blue cone &
$\sector(e,x)=UW_x$ or $LW_x$ & $LB_y$ \\
4 & Shortcut in a blue cone &
$\sector(e,x)=UW_x$ or $LW_x$ & $\sector(e,y)=LW_y$ or $UW_y$ \\
5 & White canonical edge in a white cone &
$RB_x$ & $\sector(e,y)=UW_y$ or $LW_y$ \\
6 & Shortcut in a white cone &
$RB_x$ & $\sector(e,y)=UW_y$ or $LW_y$ \\ &
\end{tabular}

\begin{lemma}
Let $p \in \Points$. The charging scheme above charges each edge incident to $p$ in $\spanner$ to one of the sectors at $p$ such that each sector is charged with at most one edge.
\label{lem:charging}
\end{lemma}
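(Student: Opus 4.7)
The plan is to fix an arbitrary point $p \in \Points$ and analyze each of the four sectors $LB_p$, $RB_p$, $UW_p$, $LW_p$ in turn. For each sector I enumerate the rows of the charging table that can contribute an edge to it, and then rule out multiplicity by invoking structural properties of $\dela$ together with the construction steps; injectivity across distinct rows then requires showing that conflicting contributions cannot coexist.

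For the blue sectors the potential contributors are easy to enumerate. For $LB_p$ the two possibilities are a step-1 blue anchor with $p$ as tail and a step-3 white canonical edge in some blue fan with $p$ as head; for $RB_p$ the possibilities are a step-1 blue anchor with $p$ as head and a step-5 or step-6 edge with $p$ as tail. At most one blue anchor has $p$ as tail (by uniqueness of $\dela$-edges in each positive cone of $p$), and at most one blue anchor has $p$ as head (by Definition~\ref{def:anchor}). If $p$ appears in a blue fan of some apex $w$, this $w$ is uniquely determined as the head of the edge chosen by $p$ in its blue positive cone, so only canonical edges in one specific blue fan can contribute to $LB_p$. Lemma~\ref{lem:canonical-properties}(3) then forces the two canonical edges of this fan incident to $p$ to be oriented outward from $p$ whenever $(p,w)$ is itself the anchor of the fan, which eliminates any step-3 contribution exactly when the step-1 contribution is present. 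In the remaining case where $(p,w)$ is not the anchor, step 4's pairwise shortcutting leaves at most one surviving step-3 canonical edge incident to $p$ in the fan. A symmetric argument, using the explicit processing of canonical paths in step 6, handles $RB_p$.

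For the white sectors $UW_p$ and $LW_p$, all charges are of the form $\sigma(e,p)$, so each edge is charged to the sector of $p$ that actually contains it. Since $UW_p$ spans the two adjacent cones red positive and green negative of $p$, and $LW_p$ spans the two adjacent cones green positive and red negative of $p$, it suffices to show that in each such pair of cones only one edge of $\spanner$ is incident to $p$. The key lever is the non-adjacency rule of step 2, which forbids two adjacent white anchors from both lying in $\anchiv$; in particular an outgoing white anchor from $p$ in a positive white cone and an incoming white anchor to $p$ in the adjacent negative white cone of the same sector cannot coexist in $\anchiv$. Combined with the uniqueness of outgoing $\dela$-edges per positive cone of $p$, the shortcut replacements in steps 4 and 6 (which remove pairs of consecutive canonical edges sharing an endpoint at $p$), and Lemma~\ref{lem:canonical-properties}(2) (which prevents an outgoing canonical edge at $p$ from also being canonical on the opposite side), this yields at most one edge of $\spanner$ incident to $p$ in each of $UW_p$ and $LW_p$.

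The main obstacle is the white-sector case, where one has to track the interplay among step 2's $\dst$-ordered selection subject to non-adjacency, the shortcutting of steps 4 and 6, and the distinct roles $p$ can play in each cone (as tail of an outgoing canonical or anchor edge in a positive white cone, or as head of an incoming one in a negative white cone). The proof ultimately combines Lemma~\ref{lem:canonical-properties}, the anchor-selection rules of Definition~\ref{def:anchor}, and the precise description of the shortcut replacement to conclude that no sector at $p$ is charged by more than one edge.
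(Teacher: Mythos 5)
Your treatment of $LB_p$ is essentially the paper's argument and is fine. The first problem is $RB_p$: you assert that "a symmetric argument" to the $LB_p$ case works, but the situation is not symmetric. The conflict there is between an \emph{incoming} blue anchor (row 1 at the head) and an \emph{outgoing} row-5/row-6 edge, and Lemma~\ref{lem:canonical-properties}(c) gives nothing here. The paper instead observes that a white canonical edge $(p,q)$ in a white cone of apex $w$ forces the triangle $pqw$ of $\dela$ to straddle the negative blue cone of $p$, so $RB_p$ contains no edge of $\dela$ at all and in particular no incoming blue anchor; uniqueness of the row-5/row-6 contribution then comes from Lemma~\ref{lem:canonical-properties}(b), and mutual exclusion of rows 5 and 6 from the deletion performed in step 6. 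None of this is supplied by, or transferable from, your $LB_p$ argument.

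More seriously, your reduction for the white sectors is unsound. You claim it suffices to show that at most one edge of $\spanner$ incident to $p$ \emph{lies in} $UW_p$ (resp.\ $LW_p$), on the grounds that all white-sector charges are of the form $\sector(e,p)$. The implication you need does hold in that direction, but the statement you propose to prove is false: the charging scheme deliberately allows several edges of $\spanner$ to lie in the same white sector, with the surplus charged to a blue sector. A row-3 edge incoming at $p$ lies in a negative white cone of $p$ yet is charged to $LB_p$, and a row-5 or row-6 edge outgoing from $p$ lies in a positive white cone of $p$ yet is charged to $RB_p$; for instance, $p$ may simultaneously carry a white anchor in its negative green cone (charged to $UW_p$) and a white canonical edge in a white cone outgoing in its positive red cone (charged to $RB_p$), both of which lie in $UW_p$. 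What actually has to be shown is that among the five situations that charge $UW_p$ --- rows 2, 3 at the tail, 4, 5 at the head, and 6 at the head --- no two can occur together and none can occur twice, which is the detailed case analysis the paper carries out (using the emptiness of the relevant negative cone, Lemma~\ref{lem:canonical-properties}(b) and (c), the non-adjacency rule of step 2, and the removals in steps 4 and 6) and which your sketch does not perform.
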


\renewcommand{\theenumi}{\roman{enumi}}

\begin{proof}
Let $p \in \Points$. We will show that the charging scheme above charges every edge incident to $p$ in $\spanner$ to one of the four sectors at $p$ in a one-to-one fashion.

First, consider the left blue sector of $p$. Observe that $LB_p$ could potentially be charged in each of the
following two situations:

\begin{enumerate}
\item In step (1) by a blue anchor $(x=p,y) \in \anchiv$; and
\item in step (3) by a white canonical edge $(x,y=p)$ in a blue cone.
\end{enumerate}

First, observe that (i) and (ii) cannot apply simultaneously by
part (c) of Lemma~\ref{lem:canonical-properties}.  Second, observe
that $LB_p$ cannot be charged twice according to situation (ii)
because this means that there are two white canonical edges incoming
to $p$ in a blue cone, which would imply that step (4) of the spanner
construction applies, and both incoming edges to $p$ would be removed
from $\spanner$. It follows that $LB_p$ is charged at most once.

Next, we consider the right blue sector of $p$. Observe that $RB_p$ could potentially be charged in each of the
following three situations:

\begin{enumerate}

\item In step (1) by a blue anchor $(x,y=p) \in \anchiv$;
\item in step (5) by a white canonical edge $(x=p,y)$ in a white cone; and
\item in step (6) by a shortcut $(x=p,y)$ in a white cone.
\end{enumerate}

First, observe that in both situations (ii) and (iii), we know that
there is a white canonical edge $(p,q)$ in a white cone with apex $w$,
that is $(p,w)$ and $(q,w)$ are white edges in $\dela$; in (ii)
$(p,q)$ is $(x,y)$; and in (iii) $(p,q)$ is the (last) edge of the
canonical path between $x=p$ and $y$, in step (6) of the spanner
construction, by the choice of point $x$ in this step.

The existence of this white canonical edge implies that $RB_p$ must be
empty of edges of $\dela$, that (i) cannot apply when either (ii) or
(iii) applies.

Second, by part (b) of Lemma~\ref{lem:canonical-properties}, $(p,w)$
is not a canonical edge on the side that contains $q$, \ie, $(p,w)$
cannot be a canonical edge in a white cone.

Therefore, $(p,q)$ is the only white canonical edge in a white cone
that is outgoing from $p$, and hence, $RB_p$ cannot be charged twice
according to (ii), or twice according to (iii).

Finally, (ii) and (iii) cannot apply simultaneously because step
(6) of the construction removes this white canonical edge $(p,q)$ when
it adds the shortcut $(x,y)$.

It follows that $RB_p$ is charged at most once.

Finally, we consider the upper sector of $p$, and note
that the arguments for the lower sector of $p$ follow similarly. Observe that $UW_p$ could potentially be charged in each of the
following five situations:

\begin{enumerate}

\item In step (2) by a white anchor $(x,y) \in \anchiv$, $x=p$ or $y=p$;

\item in step (3) by a white canonical edge $(x=p,y)$ in a blue cone;

\item in step (4) by a shortcut $(x,y)$ in a blue cone, $x=p$ or
$y=p$;

\item in step (5) by a white canonical edge $(x,y=p)$ in a white cone;
and

\item in step (6) by a shortcut $(x,y=p)$ in a white cone.

\end{enumerate}

First, since step (2) of the construction disallows adjacent anchors
to be in $\anchiv$, $UW_p$ cannot be charged twice according to (i).

Second, we show that in both situations (ii) and (iii), none of the
other situations (i), (iv), or (v) apply.  To show this, we observe
that there exists a white canonical edge $(p,q) \notin \anchiv$ in a
blue cone with apex $w$, that is $(p,w)$ and $(q,w)$ are blue edges in
$\dela$; in (ii) $(p,q)$ is $(x,y)$ and in (iii) $(p,q)$ is the
canonical edge incident to $p$ that step (4) of the construction
removes after adding the shortcut $(x,y)$.

The existence of this white canonical edge $(p,q)$ implies that the
negative white cone of $UW_p$ must be empty of edges of $\dela$, hence,
(iv) cannot apply.

By part (b) of Lemma~\ref{lem:canonical-properties}, since $(p,w)$
is not a canonical edge on the side that contains $q$, (v) cannot
apply.

Also, since $(p,q) \notin \anchiv$ is the only outgoing edge in $UW_p$,
(i) cannot apply either.

Next, observe that $UW_p$ cannot be charged twice according to (ii)
or according to (iii) simply because $(p,q)$ is the only canonical
edge in a blue cone that lies in $UW_p$.

Finally, observe that (ii) and (iii) cannot apply simultaneously
because as described above step (4) of the construction removes
this canonical edge $(p,q)$.

Assuming that (ii) and (iii) do not apply, we analyze
situation (iv).  In this case, we know that the white canonical edge
$(x,p)$ is in a white cone with apex $w$, that is $(x,w)$ and $(p,w)$
are white edges in $\dela$.

Since step (5) of the construction ensures that the anchor of the
canonical edge $(x,p)$ is not in $\anchiv$, and also since by part (c)
of Lemma~\ref{lem:canonical-properties}, $(p,w)$ is not anchor, thus
not in $\anchiv$, we conclude that (i) does not apply.

Next, since $(x,p)$ is a white and not a blue canonical edge for the
fan that contains $(x,w)$ and $(p,w)$, step (6) of the
construction ensures that (v) does not apply.

Finally, observe that $UW_p$ cannot be charged twice according to
(iv) simply because $(x,p)$ is the only canonical edge in a white cone
that lies in $UW_p$.

For the remaining cases, we analyze situation (v).	In this case,
there is a blue canonical edge $(p,q)$ in a white cone
with apex $w$, that is $(p,w)$ and $(q,w)$ are white edges in $\dela$.
The existence of this blue canonical edge implies that the
white cone contained in $UW_p$ is empty of edges of $\dela$, and
because step (6) adds a shortcut incoming at $p$, we know that the
only outgoing white edge $(p,w) \in UW_p$ is not an anchor, thus (i)
cannot apply.  Finally, we cover all the cases by observing that $UW_p$
cannot be charged twice according to (v) simply because step (6)
of the construction adds at most one shortcut for any blue canonical
edge, and because $(p,q)$ is the only blue (canonical) edge outgoing
from $p$.

Therefore, each of of $UW_p$ and $LW_p$ is charged at most once.

It follows from the above that each of the four sectors at
$p$ is charged with at most one edge incident to $p$ in $\spanner$. This
completes the proof.
\end{proof}
\renewcommand{\theenumi}{\arabic{enumi}}

\begin{lemma}
The maximum degree of $\spanner$ is at most 4.
\label{lem:degree}
\end{lemma}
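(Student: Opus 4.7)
The plan is to obtain Lemma~\ref{lem:degree} as an essentially immediate corollary of Lemma~\ref{lem:charging}. Fix an arbitrary point $p \in \Points$, and let $d$ be the degree of $p$ in $\spanner$. Reading the charging-scheme table row by row, every edge $e=(x,y) \in \spanner$ receives a charge at \emph{both} of its endpoints $x$ and $y$. Hence every edge of $\spanner$ incident to $p$ contributes exactly one charge at $p$, either in the role of $x$ or in the role of $y$. The total number of charges deposited at $p$ across its four sectors $LB_p$, $RB_p$, $UW_p$, $LW_p$ is therefore exactly $d$.

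The second step is to invoke Lemma~\ref{lem:charging}, which asserts precisely that each of the four sectors at $p$ receives at most one charge. Summing over the four sectors gives $d \leq 4$, and since $p$ was arbitrary, this is the desired maximum-degree bound on $\spanner$.

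The conceptual heavy lifting has already been performed in Lemma~\ref{lem:charging}, where a careful case analysis (over which step~1--6 of the construction could place an edge into a given sector, together with parts~(b) and~(c) of Lemma~\ref{lem:canonical-properties} and the anti-adjacency rule enforced in step~2) eliminates all scenarios producing two charges on the same sector. The only thing that could go wrong at the present stage is a charging-table entry that fails to cover some edge type actually produced by the algorithm; I would therefore do a quick sanity check that the six rows of the table exhaust the edges of $\spanner$, namely blue anchors (step~1), white anchors surviving step~2, white canonical edges in blue cones kept after step~4, shortcuts in blue cones (step~4), white canonical edges on white sides kept after step~6, and shortcuts in white cones (step~6). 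Since these six types comprise all edges of $\spanner$, the counting argument is complete and the main obstacle reduces to invoking Lemma~\ref{lem:charging}.
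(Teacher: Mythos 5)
Your argument is correct and is essentially the paper's own proof: the paper also derives Lemma~\ref{lem:degree} directly from Lemma~\ref{lem:charging} by noting that there are only four sectors at each point. Your added sanity check that the six table rows exhaust the edge types of $\spanner$ is a reasonable extra precaution but not something the paper spells out.
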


\begin{proof}
The statement of the lemma directly follows from Lemma~\ref{lem:charging} because there are four sectors at each point $p \in \Points$.
\end{proof}

The remainder of this section is devoted to showing that the stretch
factor of $\spanner$ is at most 20.  We do so by first proving a
sequence of lemmas that derive upper bounds on the distance in $\spanner$
between the endpoints of different types of edges in $\dela$; we then use
these lemmas to derive a bound on the stretch factor of $\spanner$.

\begin{lemma}
For any uncrossed blue edge $(u,w) \in \dela$, $\dspan(u,w) \leq 3
\dst(u,w)$.
\label{lem:blue-edge-uncrossed}
\end{lemma}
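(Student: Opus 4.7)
The plan is to split on whether $(u,w)$ is itself the blue anchor of $w$ in its cone. If it is, step~1 of the construction places $(u,w)$ in $\anchiv\subseteq\spanner$, giving $\dspan(u,w)\le|uw|\le\dst(u,w)$ trivially. Otherwise, let $(v,w)$ be the blue anchor of the fan that contains $(u,w)$; step~1 guarantees $(v,w)\in\spanner$, and the definition of an anchor gives $\dst(v,w)\le\dst(u,w)$.

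In the nontrivial case, I would route $u\to v\to w$ in $\spanner$. The $v$-to-$w$ leg uses the anchor edge itself, contributing $|vw|\le\dst(v,w)$. For the $u$-to-$v$ leg, I would follow the canonical path $P$ of $w$ between $u$ and $v$ in the fan. Every edge of $P$ is a white canonical edge in a blue cone, so steps~2--3 jointly guarantee each such edge lies in $\spanner$ (either as a white anchor already in $\anchiv$, or added explicitly by step~3). Any pair of consecutive edges of $P$ that step~4 later swaps for a shortcut is swapped for an edge of length at most the sum of the two by the triangle inequality, so $\spanner$ contains a walk from $u$ to $v$ of length at most the length of $P$; by Lemma~\ref{lem:canonical-bound}(c) this is at most $2\dst(u,v)$. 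Combining the two legs gives
\[
\dspan(u,w)\;\le\;2\dst(u,v)+\dst(v,w).
\]

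The whole argument thus reduces to the single geometric inequality $\dst(u,v)\le\dst(u,w)$, which together with $\dst(v,w)\le\dst(u,w)$ yields the target bound $3\dst(u,w)$. I expect this inequality to be the only real obstacle. Both $u$ and $v$ lie in the blue negative cone of $w$, so my intended approach is to invoke the hexagonal picture of $\dst$-balls: the set of points $p$ with $\dst(u,p)\le\dst(u,w)$ is a regular hexagon $H$ centered at $u$, the point $w$ lies on the boundary of $H$ along the edge joining the two non-$u$ vertices of $\tridown(u,w)$, and the intersection of the blue negative cone of $w$ with the $\dst$-ball of $w$ of radius $\dst(u,w)$ is exactly the sector of $H$ cut off by that edge. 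Since $v$ lies in this sector by the anchor inequality, $v\in H$, giving $\dst(u,v)\le\dst(u,w)$. Should the hexagonal picture prove cumbersome to write up cleanly, a fallback is a short coordinate case analysis on which positive cone of $u$ contains $v$: the anchor inequality rules out the red and green positive cones of $u$ outright, and in each remaining cone the explicit coordinate formula for $\dst$, combined with the blue-cone constraint $v_2\le\sqrt{3}\,v_1$ on $v$, delivers the same bound directly.
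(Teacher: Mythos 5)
Your route is the paper's route: travel from $u$ along the canonical path (with its step-4 shortcuts) to the blue anchor $(v,w)$ of the fan, then take the anchor edge, and bound the total by $2\dst(u,v)+\dst(v,w)\le 3\dst(u,w)$. The extra care you devote to the inequality $\dst(u,v)\le\dst(u,w)$ is not a departure, just added detail: the paper asserts $2\dst(v,u)\le 2\dst(u,w)$ without comment, and your hexagonal-ball argument (or the coordinate fallback) does establish it correctly, even if "exactly the sector of $H$ cut off by that edge" is not quite the right description of the region --- containment in $H$ is all you need and all that is true in general.

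There is, however, one genuine hole: you never use the hypothesis that $(u,w)$ is \emph{uncrossed}, and the proof cannot go through without it. The step that fails is "so $\spanner$ contains a walk from $u$ to $v$ of length at most the length of $P$." Step 4 pairs up canonical edges $(p,q),(r,q)$ sharing their head $q$ on the \emph{full} canonical path of the cone, deletes both, and inserts $(p,r)$; your sentence only accounts for pairs in which both edges lie on your subpath $P$. If $q=u$, one of the deleted edges is the terminal edge of $P$ at $u$, the replacement shortcut bypasses $u$ entirely, and no walk from $u$ to $v$ along canonical edges and shortcuts survives in $\spanner$ --- this is exactly the situation handled separately, with a worse constant, in Lemma~\ref{lem:blue-edge-crossed}. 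The fix is precisely the unused hypothesis: the shortcut $(p,r)$ crosses exactly the edge $(q,w)$, so $(u,w)$ being uncrossed forces $u\neq q$ for every step-4 pair; and $v\neq q$ because $(v,w)$ is an anchor while $(q,w)$ cannot be one by part (c) of Lemma~\ref{lem:canonical-properties}. With that observation added, your argument is complete and matches the paper's.
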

\begin{proof}
Let $(v,w)$ be the blue anchor of the blue edge $(u,w)$.	In
step (1) of the algorithm, we add all the blue anchors in $\spanner$,
and thus $(v,w) \in \spanner$.  Also, in step (3) of the construction, we
add in $\spanner$ all the canonical edges in blue cones except that, in
step (4), we substitute some pairs of these canonical edges with shortcuts.
Since $(u,w)$ is uncrossed, these canonical edges and shortcuts
provide a path for connecting $v$ and $u$.  Using the triangle
inequality, this path that includes the shortcuts is not longer than
the canonical path between $v$ and $u$.  Hence, in the worst case, we
may assume, without loss of generality, that the path connecting $v$ and
$u$ consists only of canonical edges on the canonical path.  This
canonical path plus the anchor constitutes a path between $u$~and~$w$.
By Lemma~\ref{lem:canonical-bound}, the length of this canonical path
is bounded by $2\dst(v,u) \leq 2\dst(u,w)$.  We also have that $|vw|
\leq \dst(u,w)$.  Consequently, the length of this path is bounded by
$\dst(u,w)$ (anchor) plus $2 \dst(u,w)$ (canonical
path).  It follows that $\dspan(u,w) \leq 3 \dst(u,w)$.
\end{proof}

\begin{lemma}
For any white anchor $(v,w)$ and any uncrossed white edge $(u,w) \in
\dela$ that lies on the white side of $(v,w)$, $\dspan(v,u) \leq
\dst(v,u) + \dstb(v,u) \leq 2\dst(v,u)$.  Furthermore, if $(v,w) \in
\spanner$, then $\dspan(u,w) \leq \dst(u,w) + \dstb(u,w) \leq 2\dst(u,w)$.
\label{lem:white-white}
\end{lemma}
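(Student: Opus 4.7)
The plan is to construct an explicit path in $\spanner$ between $v$ and $u$ using the edges contributed by steps (5) and (6) of the algorithm, and then to bound its length via Lemma~\ref{lem:monotone}. The crucial observation driving the bound is that, unlike the underlying canonical path between $v$ and $u$ in $\dela$ (whose edges in this cone of $w$ include blue ones), the reconstructed path I build is composed entirely of \emph{white} edges: the white canonical edges on the white side of $(v,w)$ that are retained in step (5), together with the shortcuts added in step (6), which by the first lemma of this section connect endpoints that lie in each other's negative white cones. Since both edge colors of the reconstructed path are then in $\{\text{red},\text{green}\}$, the ``third color'' $c_3$ appearing in Lemma~\ref{lem:monotone} is forced to be blue, which is precisely what produces the bound in terms of $\dstb(v,u)$ rather than one of the white-vertex distances.

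First, I would show that such a path $P'$ from $v$ to $u$ actually exists in $\spanner$. The uncrossed hypothesis on $(u,w)$ is essential here: no shortcut produced by step (6) can cross $(u,w)$, so no shortcut skips past $u$ along the canonical path on the white side of $(v,w)$. Combined with step (5), which adds to $\spanner$ every white canonical edge on the white side whose anchor is not in $\anchiv$ (and if the anchor is in $\anchiv$, that anchor is itself available in $\spanner$), this exhibits a concrete path $P'$ made of white canonical edges and white shortcuts. I would then verify that $P'$ is monotone in the sense of Definition~\ref{def:monotone} by checking that at every internal vertex the two incident edges lie in non-neighboring cones: the white canonical edges lie in positive white cones of their tails; each shortcut's endpoints lie in each other's white cones by the first lemma of this section; and the uncrossed hypothesis, together with the greedy choice of $j$ in step (6), rules out the remaining ``backward'' cone configurations. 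Applying parts (c) and (d) of Lemma~\ref{lem:monotone} to $P'$ with $\{c_1,c_2\}=\{\text{red},\text{green}\}$ and $c_3=\text{blue}$ then yields $\dspan(v,u)\leq \dst(v,u)+\dstb(v,u)$, and the trailing inequality $\dstb(v,u)\leq \dst(v,u)$ is immediate because $\dstb(v,u)$ is a distance along a side of $\tridown(v,u)$, whose side length equals $\dst(v,u)$.

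For the ``furthermore'' part, the plan is to concatenate the just-constructed $\spanner$-path from $u$ to $v$ with the anchor edge $(v,w)\in\spanner$, obtaining a path in $\spanner$ of length at most $\dst(v,u)+\dstb(v,u)+|vw| \leq \dst(v,u)+\dstb(v,u)+\dst(v,w)$. Since $v$ is the anchor of its (white) negative cone of $w$, the triangle $\tridown(v,w)$ is contained in $\tridown(u,w)$; from this containment I would derive the geometric inequality $\dst(v,u)+\dstb(v,u)+\dst(v,w) \leq \dst(u,w)+\dstb(u,w)$ by projecting $v$ onto the two sides of $\tridown(u,w)$ meeting at its blue vertex and summing the resulting segment lengths. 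The main obstacles I anticipate are (i) establishing the monotonicity of $P'$ rigorously from the section~5 shortcut lemma together with the uncrossed hypothesis, and (ii) verifying that geometric inequality tightly enough to recover precisely $\dst(u,w)+\dstb(u,w)$ for the second half of the lemma.
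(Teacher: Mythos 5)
There is a genuine gap: your path $P'$ need not exist as described, because you have not accounted for the white canonical edges on the white side that are \emph{absent} from $\spanner$. Step (5) adds a white canonical edge $(s,t)$ on the white side only when its anchor $(s',t)$ is \emph{not} in $\anchiv$; when the anchor is in $\anchiv$, the edge $(s,t)$ is simply not added. Your parenthetical remark that ``the anchor is itself available in $\spanner$'' does not repair this, because the anchor $(s',t)$ shares only the head $t$ with the missing edge $(s,t)$ --- it gives you no way to get from $s$ to $t$. The paper's proof handles exactly this situation by a \emph{recursive expansion}: the missing edge $(s,t)$ is a boundary edge of its anchor $(s',t)$ (Lemma~\ref{lem:canonical-properties}(d)), so one replaces $(s,t)$ by the anchor $(s',t)$ together with the ``white monotone connection'' between $s'$ and $s$ inside the fan of $(s',t)$; that connection may itself contain missing white canonical edges, so the expansion recurses, possibly to unbounded depth. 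The crucial observation --- and the reason the bound $\dst(v,u)+\dstb(v,u)$ survives the recursion --- is that each expansion preserves white monotonicity, so Lemma~\ref{lem:monotone} applies to the final path no matter how deep the recursion goes. This recursion is the heart of the lemma (it is the recursion advertised in the introduction as having non-constant depth), and it is entirely missing from your proposal; without it your argument only covers the case where every white canonical edge on the connection happens to be in $\spanner$.

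Two smaller points. First, your treatment of the ``furthermore'' part is workable but indirect: the inequality $\dst(v,u)+\dstb(v,u)+\dst(v,w)\leq \dst(u,w)+\dstb(u,w)$ does follow from the projection identities of Lemma~\ref{lem:metric-props} (since $\dst(u,w)=\dst(v,w)+\dstb(v,u)$ and $\dstb(u,w)=\dstb(v,w)+\dst(v,u)$ with $\dstb(v,w)\geq 0$), but the paper avoids it entirely by noting that appending the white anchor $(v,w)$ to the white monotone path from $u$ to $v$ yields a white monotone path from $u$ to $w$, to which Lemma~\ref{lem:monotone} applies directly with apex $w$. Second, your framing of parts (c) and (d) of Lemma~\ref{lem:monotone} and the role of the uncrossed hypothesis is consistent with the paper's; the issue is solely the missing reconstruction of the absent white canonical edges.
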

\begin{proof}
We describe below how to construct a white monotone path in $\spanner$
between $u$ and $v$.
If $(v,w) \in \spanner$, we extend this path to a white monotone path
between $u$ and $w$.
Then, we obtain the desired bounds using Lemma~\ref{lem:monotone}.

To describe the white monotone path between $u$ and $v$, we consider
the uncrossed edges of $\dela$ on the fan of $(v,w)$ whose endpoints lie on the
canonical path between $v$ and $u$.  We observe that shortcuts and
white canonical edges connect the (distinct) endpoints of those
uncrossed edges, and that they form a white monotone path between $u$
and $v$ because at each point the white edges of the path incident to
the point lie on opposite sides of the horizontal line through the point.
We call this white monotone path the {\em white monotone
connection} between $v$ and $u$.  We know that all of the shortcuts on
this white monotone connection are in $\spanner$ and even though some
of the white canonical edges may not be in $\spanner$, we know, for
all such white canonical edges, that we have their anchors in
$\spanner$.
For each such white canonical edge $(s,t)$, we recursively expand the
current white monotone path by including the anchor $(r,t)$ of $(s,t)$
and by including the white monotone connection between $r$ and $s$.  We
observe that the path obtained after the expansion of $(s,t)$
continues to be a white monotone path.
%
%
Therefore, by recursively expanding this path for white canonical
edges that are not in $\spanner$, we obtain a white monotone path
between $v$ and $u$.  Furthermore, if $(v,w) \in \spanner$, we expand
this path to include the white anchor $(v,w)$ while preserving its
monotonicity.
\end{proof}

\begin{lemma}
For any white anchor $(v,w)$ and any white edge $(u,w) \in \dela$ that
lies on the blue side of $(v,w)$, $\dspan(v,u) \leq 5\dst(v,u)$.
Furthermore, if $(v,w) \in \spanner$, then $\dspan(u,w) \leq 6\dst(u,w)$.
\label{lem:white-blue}
\end{lemma}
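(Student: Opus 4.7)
The plan is to reconstruct a path from $v$ to $u$ that traces the canonical path $P\colon v = p_k, p_{k-1}, \ldots, p_0 = u$ of $w$ in the fan of $(v,w)$, which lies entirely on the blue side of $(v,w)$. By Lemma~\ref{lem:canonical-bound}(a), $P$ is a monotone path whose edges take only two colors---blue and a single white color---and by Lemma~\ref{lem:monotone}(c)--(d) the $\dst$-sum of its blue edges is at most $\dst(v,u)$, and likewise the $\dst$-sum of its white edges is at most $\dst(v,u)$. I will bound $\dspan(p_i,p_{i+1})$ separately for each edge on $P$ and then combine.

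Every canonical edge is uncrossed by Lemma~\ref{lem:uncrossed}. For each blue canonical edge $(p_i,p_{i+1})$ on $P$, Lemma~\ref{lem:blue-edge-uncrossed} gives $\dspan(p_i,p_{i+1}) \leq 3\dst(p_i,p_{i+1})$; summed over all blue edges of $P$ this contributes at most $3\dst(v,u)$. For each white canonical edge $(p_i,p_{i+1})$ on $P$ the goal is to establish $\dspan(p_i,p_{i+1})\leq 2\dst(p_i,p_{i+1})$. If $(p_i,p_{i+1}) \in \spanner$ this is immediate. Otherwise I look at its own anchor $(a,t)$, with $t\in\{p_i,p_{i+1}\}$, for which $(p_i,p_{i+1})$ is a boundary edge by Lemma~\ref{lem:canonical-properties}(d). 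When $(a,t) \in \anchiv$, Lemma~\ref{lem:white-white} applied at $(a,t)$ yields a white monotone detour of length at most $2\dst(p_i,p_{i+1})$; when $(a,t) \notin \anchiv$, step~2 of the construction guarantees an adjacent shorter white anchor in $\anchiv$, which I route through using the same monotone path machinery from the proof of Lemma~\ref{lem:white-white} to preserve the per-edge factor of~$2$. Summed over the white edges of $P$, this contributes at most $2\dst(v,u)$.

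Adding the two contributions gives $\dspan(v,u)\leq 3\dst(v,u)+2\dst(v,u)=5\dst(v,u)$, which is the first inequality. For the second inequality, when $(v,w)\in\spanner$, I use $\dspan(u,w)\leq \dspan(u,v)+|vw|\leq 5\dst(v,u)+\dst(v,w)$; then the same TD-Delaunay fan property invoked in the proof of Lemma~\ref{lem:blue-edge-uncrossed}, namely $\dst(v,u)\leq\dst(u,w)$ and $\dst(v,w)\leq\dst(u,w)$ (because $v$ is the anchor of the fan and $u$ lies in it), yields $\dspan(u,w)\leq 5\dst(u,w)+\dst(u,w)=6\dst(u,w)$.

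The main obstacle I anticipate is establishing the per-edge bound $\dspan(p_i,p_{i+1})\leq 2\dst(p_i,p_{i+1})$ in the case where a white canonical edge $(p_i,p_{i+1})$ is absent from $\spanner$ and moreover lies on the blue side of its own white anchor. A naive recursive call to the very lemma we are proving would only yield a factor of~$5$ per such edge, which summed across the fan would far exceed the target. The resolution must exploit step~2's adjacency rule carefully so that every such problematic white canonical edge admits a short monotone detour through a nearby anchor that is guaranteed to sit in $\anchiv$, keeping each white edge within its $2\dst$ budget.
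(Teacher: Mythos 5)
Your overall decomposition is the same as the paper's: split the canonical path from $v$ to $u$ into its blue and white edges, reconstruct each blue edge at cost $3$ (Lemmas~\ref{lem:uncrossed} and~\ref{lem:blue-edge-uncrossed}) for a total of $3\dst(v,u)$, reconstruct each white edge at cost $2$ for a total of $2\dst(v,u)$ using Lemma~\ref{lem:canonical-bound}, and append the anchor for the second claim using $\dst(v,u)\leq\dst(u,w)$ and $\dst(v,w)\leq\dst(u,w)$. The gap is in the per-edge factor of $2$ for the white canonical edges. You split into three cases, and your third case --- $(p_i,p_{i+1})\notin\spanner$ and its anchor $(a,t)\notin\anchiv$ --- is both unresolved (you flag it yourself as the ``main obstacle'') and not fixable the way you suggest: routing through an adjacent shorter white anchor supplied by step~2 is precisely the situation analyzed in Lemma~\ref{lem:white-edge-uncrossed}, and it costs a factor of $9$, not $2$; summed over the fan that would destroy the claimed bound of $5\dst(v,u)$.

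What you are missing is that this third case is vacuous, and the reason is step~5 of the construction, not step~2. The white canonical edges of $w$ on the blue side of $(v,w)$ lie on the \emph{white} side of their own anchors (this structural fact is what the paper's overview in Section~\ref{sec:spanner} relies on when it says the argument for $P_w$ applies to the white edges of $P_b$), and step~5 adds every white canonical edge on the white side of its anchor to $\spanner$ exactly when that anchor is not in $\anchiv$. Hence for each white edge of the canonical path, either the edge itself is in $\spanner$, or its anchor is in $\anchiv\subseteq\spanner$ and the ``furthermore'' part of Lemma~\ref{lem:white-white} gives $\dspan(p_i,p_{i+1})\leq 2\dst(p_i,p_{i+1})$ directly; your worry about a white canonical edge lying on the blue side of its own anchor concerns a configuration that does not arise here. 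Once this dichotomy is in place, the rest of your argument coincides with the paper's proof.
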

\begin{proof}
The canonical path from $v$ to $u$ consists of blue and white canonical
edges. The total length of the blue canonical edges
does not exceed $\dst(v,u)$, and the total length of the white canonical edges
does not exceed $\dst(v,u)$ by Lemma~\ref{lem:canonical-bound}.
By Lemma~\ref{lem:uncrossed}, we know that all of these canonical
edges are uncrossed.  By Lemma~\ref{lem:blue-edge-uncrossed},
the total length of the paths needed to reconstruct these blue canonical edges can be
bounded by $3 \dst(v,u)$.  Also, since either the white canonical
edges themselves or their anchors are in $\spanner$, the total length
of the white canonical edges can be bounded by $2\dst(v,u)$ by
Lemma~\ref{lem:white-white}.  Therefore, $\dspan(v,u)$ can be bounded
by $5 \dst(v,u)$ for the edge $(v,u)$ as stated.  Furthermore,
if $(v,w) \in \spanner$, $\dspan(u,w)$ can be bounded by $5 \dst(v,u)
+ \dst(v,w)$, which in turn is bounded by $6 \dst(u,w)$.
\end{proof}

\begin{definition}
\label{def:white-dist}
For any two points $p,q \in \Points$ such that $p$ lies in a white
cone of $q$, we define $\dstw(p,q) = \dstw(q,p) = \dst(p,q) -
\dstb(p,q)$.
\end{definition}

\begin{lemma}
For any two white edges $(v,w),(u,w) \in \dela$ that both lie in the
same negative cone of $w$, if $u$ lies in a positive white cone of $v$, then
$\dstb(u,w) = \dstb(v,w) + \dst(v,u)$ and
$\dst(u,w) = \dst(v,w) + \dstb(v,u)$.
\label{lem:metric-props}
\end{lemma}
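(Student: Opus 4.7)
The proof approach will be a direct geometric/coordinate computation, split into two cases according to whether the common negative (white) cone of $w$ containing $v$ and $u$ is the negative red cone (pointing downward) or the negative green cone (pointing upper-left). The first step in each case is to pin down which positive white cone of $v$ contains $u$. Letting $c$ denote the color of the common negative cone of $w$, we have that $w$ lies in the positive $c$ cone of $v$, and by the $\dela$-selection property $v$ chose $w$ as the $\dst$-nearest point of $\Points$ in that cone. If $u$ were also in the positive $c$ cone of $v$, then $\dst(v,w)\le \dst(v,u)$ would be forced; however, the constraint that $u$ additionally lies in the negative $c$ cone of $w$ yields $\dst(v,u)<\dst(v,w)$ by a short direct comparison, a contradiction. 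So $u$ must lie in the other positive white cone of $v$.

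Without loss of generality, I would carry out the negative red cone case in detail, placing $w$ at the origin so that the cone is the standard downward cone. Then $v$ is the red (bottom) vertex of $\tridown(v,w)$ with $w$ on its top edge, which gives $\dst(v,w)=-2v_y/\sqrt{3}$ and $\dstb(v,w)=v_x-v_y/\sqrt{3}$, and analogous formulas for $u$. Since $u$ is then forced into the positive green cone of $v$, the point $v$ is the green (upper-left) vertex of $\tridown(v,u)$ with $u$ on its right edge, from which one reads off $\dst(v,u)=(u_x-v_x)+(v_y-u_y)/\sqrt{3}$ and $\dstb(v,u)=2(v_y-u_y)/\sqrt{3}$. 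Substituting these expressions into the two claimed identities verifies them by elementary algebra. The negative green cone case is handled by an analogous coordinate setup in which $u$ is forced into the positive red cone of $v$ and $v$ becomes the red vertex of $\tridown(v,u)$ with $u$ on the top edge; the computation is completely symmetric.

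The underlying geometric fact is that in each case a specific edge of $\tridown(v,u)$ is parallel to the edge of $\tridown(v,w)$ on which $w$ lies. The first identity $\dstb(u,w)=\dstb(v,w)+\dst(v,u)$ then expresses that the two parallel lines have $x$-intercepts differing by exactly $\dst(v,u)$, while the second identity $\dst(u,w)=\dst(v,w)+\dstb(v,u)$ measures a segment along the parallel edge of $\tridown(v,u)$ whose length coincides with the difference $\dst(u,w)-\dst(v,w)$. The only delicate step in the whole argument is the case analysis that rules out the configuration in which $u$ would share a positive cone with $w$ at $v$; once this is settled via the $\dela$-selection property, both identities reduce to straightforward arithmetic on the coordinates.
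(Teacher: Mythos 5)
Your proposal is correct and follows essentially the same route as the paper: the paper's (two-sentence) proof likewise first argues from $(v,w)\in\dela$ that $u$ must lie in the positive white cone of $v$ not containing $w$, and then declares the two identities to follow from the definition of $\dstb$. You simply carry out explicitly, in coordinates, the computation the paper leaves implicit, and your case analysis and formulas check out.
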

\begin{proof}
Since $(v,w) \in \dela$, we observe that $u$ must lie in the positive
white cone of $v$ that does not contain $w$.  The rest follows
directly from the definition of $\dstb$.
\end{proof}

\begin{lemma}
For any white anchor $(v,w)$, $\dspan(v,w) \leq 9 \dst(v,w)$.
Furthermore, for any uncrossed white edge $(u,w)$ in the fan of
$(v,w)$, we have
$\dspan(u,w) \leq 9 \dst(u,w) + \dstb(u,w)$ if $(u,w)$ lies on the
white side of $(v,w)$,
and $\dspan(u,w) \leq 9 \dst(u,w)$ otherwise.
\label{lem:white-edge-uncrossed}
\end{lemma}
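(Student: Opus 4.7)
The plan is to prove all three inequalities simultaneously by strong induction on the $\dst$-length of the white anchor $(v,w)$. The base case is the white anchor of smallest $\dst$-length: since no adjacent white anchor can precede it in the length-ordering of step~2, it is added to $\anchiv$ and hence to $\spanner$, placing us in Case~1 below.

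Case~1 is $(v,w) \in \anchiv$. Then $\dspan(v,w) \leq |vw| \leq \dst(v,w) \leq 9\dst(v,w)$, and for an uncrossed white edge $(u,w)$ in the fan, Lemma~\ref{lem:white-white} gives $\dspan(u,w) \leq \dst(u,w) + \dstb(u,w)$ on the white side, while Lemma~\ref{lem:white-blue} gives $\dspan(u,w) \leq 6\dst(u,w)$ on the blue side; both bounds fit comfortably inside the required ones.

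Case~2 is $(v,w) \notin \anchiv$. Step~2 then guarantees a white anchor $(v',w') \in \anchiv$ adjacent to $(v,w)$ with $\dst(v',w') \leq \dst(v,w)$, and the adjacency discussion following Definition~\ref{def:anchor} places this bridge anchor in the white-colored cone adjacent to $(v,w)$'s cone at the shared endpoint (i.e.\ on the white side of $(v,w)$). Because $(v,w) \notin \anchiv$, step~5 puts all white canonical edges on the white side of $(v,w)$ into $\spanner$, and step~6 replaces blue canonical edges on that side by shortcuts; by Lemma~\ref{lem:monotone} this yields a white monotone path in $\spanner$ from $v$ to the far white-side fan-endpoint $v_k$ at $w$ of length at most $2\dst(v,v_k)$. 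The local triangulation structure at $w$ forces $v_k$ into the fan of the bridge anchor $(v',w')$ on \emph{its} white side, so Lemma~\ref{lem:white-white} applied to $(v',w') \in \spanner$ yields a path from $v_k$ through the bridge to $w$ of length at most $2\dst(v_k,w') + \dst(w',w)$. Concatenating the two legs and bounding the $\dst$-distances via the triangle inequality together with Lemma~\ref{lem:metric-props} produces $\dspan(v,w) \leq 9\dst(v,w)$.

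The two fan-edge bounds then follow by prepending a connection from $u$ to $v$: a white monotone path on the white side (Lemma~\ref{lem:white-white}) or a white-blue reconstruction on the blue side (Lemma~\ref{lem:white-blue}). Using the identities of Lemma~\ref{lem:metric-props} to rewrite $\dst(u,v)$, $\dstb(u,v)$, and $\dst(v,w)$ in terms of $\dst(u,w)$ and $\dstb(u,w)$, the prepended contribution telescopes against the $9\dst(v,w)$ bound just obtained, yielding $9\dst(u,w) + \dstb(u,w)$ on the white side and $9\dst(u,w)$ on the blue side. The main obstacle is the geometric content of Case~2: verifying that $v_k$ indeed lies on the white side of the bridge anchor's fan, and checking that the distance accounting --- combining roughly $2\dst(v,w)$ from each white-monotone leg and $\dst(v,w)$ from the bridge edge itself --- actually collapses to the constant $9$ rather than a larger number.
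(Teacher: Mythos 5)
Your Case 1 matches the paper, but Case 2 has a genuine gap: the bridge path you describe does not exist. You claim that the far white-side fan vertex $v_k$ of $(v,w)$ lies in the fan of the adjacent anchor on \emph{its} white side, so that Lemma~\ref{lem:white-white} (cost factor $2$) carries you from $v_k$ to $w$. In fact, when the blocking anchor is $(w,w')$ (adjacency at $w$), the vertex $v_k$ is joined to $w'$ by a \emph{blue} edge of $\dela$ outgoing from $w'$ --- it sits in a positive blue cone of $w'$, not in the negative white cone of $w'$ that contains the fan of $(w,w')$ --- so Lemma~\ref{lem:white-white} is inapplicable there. The paper instead reconstructs this blue edge via Lemma~\ref{lem:blue-edge-uncrossed} at cost $3\dst(w',v_k)$, and since $\dst(w',v_k)$ can be as large as $2\dst(v,w)$, this contributes the dominant $6\dst(v,w)$ term; together with $\dst(v,w)$ for the bridge anchor and $2\dst(v,w)$ for the white monotone connection, this is exactly how $9$ arises. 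Your own tally ($2+2+1=5$) cannot be stretched to $9$ because the path it counts is not a path in $\spanner$; the ``main obstacle'' you flag at the end is precisely the missing idea. A second, related omission: the blocking anchor can be adjacent at $w$ \emph{or} at $v$, and these have different geometry (in the latter case the blue edge is outgoing from $w$ itself, the white monotone connection lives in the fan of $(v',v)$ with apex $v$, and the blue side of $(v,w)$ contains no white fan edges at all); the paper must treat them separately, and your single unified description does not cover both.

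Two smaller issues. The strong induction on anchor length is unnecessary and unused: step~2 directly guarantees the blocking adjacent anchor is already in $\anchiv$, so no inductive hypothesis is ever invoked. And the final ``telescoping'' for the fan edges is not a formality: for a blue-side edge $(u,w)$, naively prepending Lemma~\ref{lem:white-blue}'s $5\dst(v,u)$ onto $9\dst(v,w)$ does not yield $9\dst(u,w)$; the paper has to re-bound the lengths of the bridge edges ($\dst(w,w')$, $\dst(w',u')$, $\dst(v,u')$) in terms of $\dst(u,w)$ and $\dst(v,u)$ so that the extension cost is absorbed. That accounting needs to be done explicitly.
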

\begin{proof}
If $(v,w) \in \spanner$, then clearly $\dspan(v,w) \leq \dst(v,w)$.
As for any uncrossed edge $(u,w)$ in the fan, by
Lemma~\ref{lem:white-white}, we get a bound of $2 \dst(u,w)$ on
$\dspan(u,w)$ if $(u,w)$ lies on the white side of $(v,w)$, and by
Lemma~\ref{lem:white-blue}, we get a bound of $6 \dst(u,w)$ on
$\dspan(u,w)$ if $(u,w)$ lies on the blue side of $(v,w)$.  Then, we
consider $(v,w) \notin \spanner$ and analyze two cases: $(v,w)$ was
not added in $\anchiv$ because of an adjacent anchor at $w$, or
because of an adjacent anchor at $v$.

If $(v,w)$ was not added in $\anchiv$ because of an adjacent (white)
anchor at $w$, let $(w,w')$ be that anchor (see Figure~\ref{fi:anchors}-(a)).
By our construction of
$\anchiv$, we know that $(w,w')$ must be shorter than $(v,w)$, \ie,
$\dst(w,w') < \dst(v,w)$.  Therefore, $v$ lies in the positive blue
cone of $w'$, and hence, there must be an outgoing blue edge at $w'$.
Let $(w',u')$ be that blue edge; then, $(u',w)$ must be a white
boundary edge of $(v,w)$, and possibly $u'=v$.
Using the fact that $u'$ lies in the positive blue cone of $w'$, it is
easy to verify that $\dst(v,u') \leq \dstw(v,w) \leq \dst(v,w)$.
Similarly, using the fact that $u'$ lies in a positive white cone of
$v$, it is easy to verify that $\dst(w',u') \leq \dst(v,w) +
\dstw(v,w) \leq 2\dst(v,w)$.
Following the path from $w$ to $w'$ to $u'$ to $v$, we bound
$\dspan(v,w)$ by $\dst(w,w')$ for the edge $(w,w')$, by $3\dst(w',u')$
for the edge $(w',u')$ using Lemmas~\ref{lem:uncrossed} and
\ref{lem:blue-edge-uncrossed}, and by $2\dst(v,u')$ for the edge
$(v,u')$ using Lemmas~\ref{lem:uncrossed} and \ref{lem:white-white}.
Also, using the above inequalities $\dst(w,w') < \dst(v,w)$,
$\dst(v,u') \leq \dst(v,w)$, and $\dst(w',u') \leq 2\dst(v,w)$, we get
the desired upper bound $\dspan(v,w) \leq 9 \dst(v,w)$.

Next, we consider the uncrossed white edges.  For any uncrossed edge
$(u,w)$ on the white side of the anchor, the bound on $\dspan(v,w)$
applies directly to $\dspan(u,w)$ because the path between $v$ and $w$
already connects $u$ and $w$.  Also, since $\dst(v,w) \leq \dst(u,w)$,
we immediately get $\dspan(u,w) \leq 9 \dst(u,w)$.
As for any white edge $(u,w)$ on the blue side of the anchor, we start
by observing that $\dstw(v,w) \leq \dst(u,w) - \dst(v,u)$, and that
$\dst(v,w) \leq \dst(u,w)$.  Also, using the above inequalities
$\dst(w,w') < \dst(v,w)$, and $\dst(v,u') \leq \dstw(v,w)$, and
$\dst(w',u') \leq \dst(v,w) + \dstw(v,w)$, we obtain the inequalities
$\dst(w,w') \leq \dst(u,w)$, and $\dst(v,u') \leq \dst(u,w) -
\dst(v,u)$, and $\dst(w',u') \leq 2 \dst(u,w) - \dst(v,u)$.  Then,
following the above path from $w$ to $v$ and extending it to $u$
using the canonical edges, we get $\dspan(u,w) \leq \dst(w,w') + 3
(\dst(w',u') + \dst(v,u)) + 2 (\dst(v,u') + \dst(v,u))$, which is then
bounded by $\dspan(u,w) \leq \dst(u,w) + 6 \dst(u,w) + 2 \dst(u,w) = 9
\dst(u,w)$.

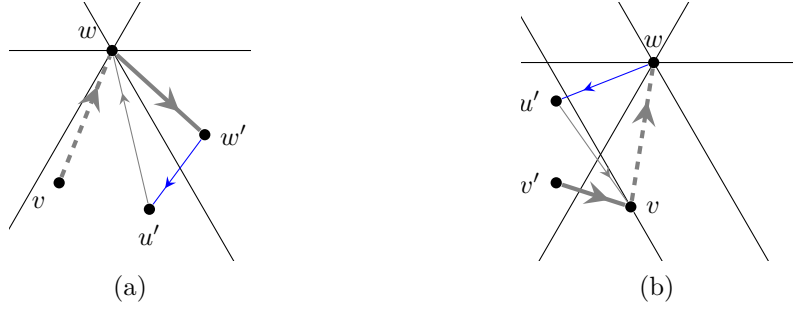
\begin{figure}
\noindent
\begin{minipage}{.5\textwidth}
\centering
\begin{tikzpicture}[scale = 0.8]

\clip(-1.7,-3.5) rectangle (2.3,0.8);

\draw (-5,0) -- (5,0);
\draw (0,0) -- (60:5); \draw (0,0) -- (240:5);
\draw (0,0) -- (120:5); \draw (0,0) -- (300:5);

\draw (0,0) node[vertex,label=150:{$w$}] (w) {};
\draw (1.54,-1.4) node[vertex,label=0:{$w'$}] (wp) {};
\draw (0.62,-2.64) node[vertex,label=-90:{$u'$}] (up) {};
\draw (-0.88,-2.2) node[vertex,label=-135:{$v$}] (v) {};

\draw [canonical,blue] (wp) -- (up);
\draw [anchorEdge] (w) -- (wp);
\draw [anchorEdge, dashed] (v) -- (w);
\draw [canonical] (up) -- (w);

\end{tikzpicture}

\centering (a)
\end{minipage}%
\begin{minipage}{.5\textwidth}
\centering
\begin{tikzpicture}[scale = 0.8]

\clip(-2.2,-3.3) rectangle (2.3,1);

\draw (-5,0) -- (5,0);
\draw (0,0) -- (60:5); \draw (0,0) -- (240:5);
\draw (0,0) -- (120:5); \draw (0,0) -- (300:5);

\draw (0,0) node[vertex,label=90:{$w$}] (w) {};
\draw (-1.62,-2) node[vertex,label=180:{$v'$}] (vp) {};
\draw (-1.62,-0.64) node[vertex,label=180:{$u'$}] (up) {};
\draw (-0.38,-2.4) node[vertex,label=0:{$v$}] (v) {};

\draw (v) -- ++(120:5); \draw (v) -- ++(300:5);


\draw [canonical,blue] (w) -- (up);
\draw [canonical] (up) -- (v);
\draw [anchorEdge] (vp) -- (v);
\draw [anchorEdge, dashed] (v) -- (w);


\end{tikzpicture}

\centering (b)
\end{minipage}

\caption{Illustrations of the proof of Lemma~\ref{lem:white-edge-uncrossed}.
(a) The case when $(v,w) \notin \anchiv$ because a shorter adjacent
anchor $(w,w')$ was added first. Edge $(w',u')$ is a blue boundary edge and
there is a
white monotone path from $u'$ to $v$ in $\spanner$. (b) The case when
$(v,w) \notin
\anchiv$ because a shorter adjacent anchor $(v',v)$ was added
first. Edge $(w,u')$ is a blue boundary edge and $(u',v)$ is a white boundary
edge on the white side of its cone and there is a white monotone path between $u'$
and $v'$ in $\spanner$.}
\label{fi:anchors}
\end{figure}

In the case when $(v,w)$ was not added in $\anchiv$ because of an
adjacent (white) anchor at $v$,
(see Figure~\ref{fi:anchors}-(b))
let $(v',v)$ be that anchor.
%
%
Similarly, we know that $(v',v)$
must be shorter than $(v,w)$, \ie,
$\dst(v',v) < \dst(v,w)$.  Therefore, we know that there
must be an outgoing blue edge at $w$.  Let $(w,u')$ be that blue edge;
then, $(u',v)$ must be a white boundary edge of $(v',v)$, possibly
$u'=v'$.
Once again, it is easy to see that the length of the edge
$(v',u')$ cannot exceed the length on the white anchor $(v,w)$ and
that the length of the blue edge $(w,u')$ cannot exceed twice the
length of the white anchor $(v,w)$, \ie, $\dst(v',u') \leq \dst(v,w)$
and $\dst(w,u') \leq 2 \dst(v,w)$.
Following the path from $w$ to $u'$ to $v'$ to $v$, we bound
$\dspan(v,w)$ by $3\dst(w,u') + 2\dst(v',u') + \dst(v',v)$ using
Lemmas~\ref{lem:uncrossed}, \ref{lem:blue-edge-uncrossed} and
\ref{lem:white-white}.	Using the above inequalities, we get the
desired upper bound $\dspan(v,w) \leq 9 \dst(v,w)$.
As for the uncrossed white edges in this case, we first note that the
anchor $(v,w)$ itself is the boundary edge on the blue side and that
there are no white edges on the blue side, thus nothing to prove on
the blue side.
As for any uncrossed white edge $(u,w)$ on the white side of $(v,w)$,
we follow the path from $w$ to $v$ as above and to $u$ using the path
described in Lemma~\ref{lem:white-white}, which also bounds the last
step by $\dst(v,u)+\dstb(v,u)$.  We get an upper bound $\dspan(u,w)
\leq 9 \dst(v,w) + \dst(v,u) + \dstb(v,u)$.
Since $u$ lies in the white positive cone of $v$ that does not contain
$w$, by Lemma~\ref{lem:metric-props}, we know that $\dst(u,w) =
\dst(v,w) + \dstb(v,u)$, and that $\dst(v,u) \leq \dstb(u,w)$.
Therefore, we get $\dspan(u,w) \leq 9 \dst(u,w) + \dstb(u,w)$ as
desired.
\end{proof}

\begin{lemma}
For any crossed blue edge $(u,w) \in \dela$, $\dspan(u,w) \leq 3
\dst(u,w) + 9 \dstm(u,w)$.
\label{lem:blue-edge-crossed}
\end{lemma}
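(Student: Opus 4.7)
The plan is to handle the crossed blue edge $(u,w)$ by constructing a detour in $\spanner$. By the proof of Lemma~\ref{lem:uncrossed}, any shortcut that can cross a blue edge of $\dela$ must be one added in step~4 of the construction. Thus, there is a blue fan at $w$ with anchor $(v,w) \in \spanner$ (added in step~1) such that $u$ appears as a non-boundary fan point $v_i$ with neighbors $p = v_{i-1}$ and $r = v_{i+1}$; the step-4 shortcut $(p,r) \in \spanner$ crosses $(u,w)$ in place of the removed canonical edges $(p,u)$ and $(u,r)$. Moreover, since at most one of each alternating pair of neighbors can trigger a step-4 shortcut, the blue edges $(p,w)$ and $(r,w)$ remain uncrossed in $\spanner$.

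My strategy is to route from $u$ to $w$ via one of these two neighbors, chosen according to $\dstm(u,w)$. Without loss of generality assume $\dstm(u,w) = \dstg(u,w)$; the case $\dstm = \dstr$ is symmetric. Under this assumption I select $x = v_{i+1}$, the neighbor at the larger angle from $w$, which lies geometrically on the side of the fan closer to the green vertex of $\tridown(u,w)$. The path then decomposes into two segments. The first is a path from $u$ to $x$ routed via the canonical edge $(u,x) \in \dela$, which is white and uncrossed by Lemma~\ref{lem:uncrossed}; applying Lemma~\ref{lem:white-edge-uncrossed} to its (white) anchor yields $\dspan(u,x) \leq 9 \dst(u,x) + \dstb(u,x)$. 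The second is a path from $x$ to $w$ routed via the uncrossed blue edge $(x,w) \in \dela$; applying Lemma~\ref{lem:blue-edge-uncrossed} yields $\dspan(x,w) \leq 3 \dst(x,w)$.

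Summing the two segments, it remains to establish the two geometric inequalities $\dst(x,w) \leq \dst(u,w)$ and $9 \dst(u,x) + \dstb(u,x) \leq 9 \dstm(u,w)$. Both should follow from examining the Delaunay triangle $(u,x,w) \in \dela$, using the empty-triangle property of $\dela$ together with the fact that $(u,w)$ is $u$'s selected blue edge, so $\dst(u,w)$ is the minimum over $u$'s positive blue cone. I expect the main obstacle to be precisely this geometric case analysis: depending on the relative distances $|wx|$ and $|wu|$, the point $x = v_{i+1}$ may lie either in $u$'s green negative cone (upper-left of $u$) or in $u$'s red positive cone (directly above $u$). In each case, the distance relations between $\dst(u,x)$, $\dstb(u,x)$, $\dst(x,w)$, and $\dstm(u,w) = \dstg(u,w)$ must be verified through Lemma~\ref{lem:metric-props} and the geometry of $\tridown(u,w)$. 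Once these are established, combining the two segment bounds yields $\dspan(u,w) \leq 3 \dst(u,w) + 9 \dstm(u,w)$, as claimed.
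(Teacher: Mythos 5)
Your overall route is the same as the paper's: identify the step-4 shortcut whose apex is $u$, note that the blue edges from $w$ to the two fan-neighbors $p=v_{i-1}$ and $r=v_{i+1}$ of $u$ are uncrossed, go from $w$ to the chosen neighbor $x$ via Lemma~\ref{lem:blue-edge-uncrossed} (and indeed $\dst(x,w)\leq\dst(u,w)$ holds by Lemma~\ref{lem:metric-props}, because the removed canonical edge is incoming at $u$, so $\dst(u,w)=\dst(x,w)+\dstb(x,u)$), and then continue from $x$ to $u$ along the removed canonical edge, choosing $x$ so that $\dst(x,u)\leq\dstm(u,w)$. Your explicit pairing of the two neighbors with $\dstc{2}(u,w)$ and $\dstc{3}(u,w)$ is just an unpacking of the paper's one-line assertion that one of the two incoming canonical edges is not longer than $\dstm(u,w)$; like the paper, you leave the underlying geometric verification implicit, which is acceptable.

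The genuine gap is in the bound you take for the second segment. You invoke the white-side case of Lemma~\ref{lem:white-edge-uncrossed}, namely $\dspan(x,u)\leq 9\dst(x,u)+\dstb(x,u)$, and then state that it ``remains to establish'' $9\dst(x,u)+\dstb(x,u)\leq 9\dstm(u,w)$. That inequality does not follow from $\dst(x,u)\leq\dstm(u,w)$ and is false in general: nothing prevents $\dst(x,u)$ from being equal (or arbitrarily close) to $\dstm(u,w)$ while $\dstb(x,u)>0$, in which case your left-hand side strictly exceeds $9\dstm(u,w)$. The paper closes the argument by using the bound $\dspan(x,u)\leq 9\dst(x,u)$ with \emph{no} additive $\dstb$ term, i.e.\ the case of Lemma~\ref{lem:white-edge-uncrossed} in which the uncrossed white edge is the anchor itself or lies on the blue side of its anchor; by Lemma~\ref{lem:canonical-properties}(d) the canonical edge $(x,u)$ is a boundary edge of its own anchor, and in this configuration it is the boundary edge rotationally adjacent at $u$ to the blue edge, so the sharper bound applies. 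With $\dspan(x,u)\leq 9\dst(x,u)\leq 9\dstm(u,w)$ the sum of the two segments gives $3\dst(u,w)+9\dstm(u,w)$ as required. So your decomposition is right, but you must replace the white-side estimate with a justification that the additive-free case of Lemma~\ref{lem:white-edge-uncrossed} is the one that applies; as written, the final inequality you set yourself cannot be proved.
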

\begin{proof}
Let $(p,q)$ be a shortcut $(p,q)$ crosses $(u,w)$.  Since this
shortcut is in the blue cone, it must have been added in $\spanner$
replacing two white canonical edges incoming at $u$, namely $(p,u)$
and $(q,u)$.  As $p$ and $q$ are endpoints of the shortcut $(p,q)$,
both blue edges $(p,w)$ and $(q,w)$ are uncrossed.  Consequently, we
have $\dspan(p,w) \leq 3 \dst(p,w)$ and $\dspan(q,w) \leq 3 \dst(q,w)$
by Lemma~\ref{lem:blue-edge-uncrossed}.
Furthermore, by Lemmas~\ref{lem:uncrossed} and
\ref{lem:white-edge-uncrossed} we know that both of the canonical
edges $(p,u)$ and $(q,u)$ satisfy the inequalities $\dspan(p,u) \leq 9
\dst(p,u)$ and $\dspan(q,u) \leq 9 \dst(q,u)$.	Since both of these
canonical edges are incoming at $u$, we also know that one of them,
say $(p,u)$, is not longer than $\dstm(u,w)$, \ie, $\dst(p,u) \leq
\dstm(u,w)$.  Following the path via the point $p$, we bound the
distance $\dspan(u,w)$ by $3 \dst(p,w) + 9 \dst(p,u)$, which in turn
is bounded by $3 \dst(u,w) + 9 \dstm(u,w)$ as stated in the lemma.
\end{proof}

\begin{lemma}
\label{lem:white-edge-crossed}
For any crossed white edge $(u,w) \in \dela, \dspan(u,w) \leq 10 \dst(u,w)
+ 10 \dstm(u,w)$.
\end{lemma}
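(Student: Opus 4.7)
The proof will follow the template of Lemma~\ref{lem:blue-edge-crossed}, adapted to the case of step-6 shortcuts, which are the only shortcuts that can cross a white edge: by the analysis in the proof of Lemma~\ref{lem:uncrossed}, every step-4 shortcut crosses only a blue edge (namely $(q,w)$). Hence the shortcut crossing $(u,w)$ is some step-6 shortcut $(p_j,p_i)$ added while processing the white-side canonical path $p_0,\ldots,p_k=v$ of a white anchor $(v,w)$, with $u=p_l$ for some $i<l<j$.

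Next I would make two preparatory observations. First, $(p_i,w)$ and $(p_j,w)$ are both uncrossed: the step-6 loop sets $i:=j$ after each shortcut, so the index ranges of step-6 shortcuts within a single fan are pairwise disjoint; in particular, neither $p_i$ nor $p_j$ lies strictly between the endpoints of any other step-6 shortcut in the fan, and step-4 shortcuts cross only blue edges. Applying Lemma~\ref{lem:white-edge-uncrossed} to these uncrossed edges yields $\dspan(p_i,w), \dspan(p_j,w)\le 9\dst(\cdot,w)+\dstb(\cdot,w)$, and the shortcut $(p_j,p_i)\in\spanner$ is available with Euclidean length. Second, by the same geometric reasoning used in Lemma~\ref{lem:blue-edge-crossed}, one of the two canonical edges at $u$ in the fan --- call it $(u,p_l')$ with $p_l'\in\{p_{l-1},p_{l+1}\}$ --- has $\dst$-length at most $\dstm(u,w)$. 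This canonical edge is uncrossed by Lemma~\ref{lem:uncrossed}, is a boundary edge of its own anchor by part~(d) of Lemma~\ref{lem:canonical-properties}, and so Lemma~\ref{lem:blue-edge-uncrossed} (if blue) or Lemma~\ref{lem:white-edge-uncrossed} (if white) gives $\dspan(u,p_l')\le 10\dst(u,p_l')\le 10\dstm(u,w)$.

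Finally, I would assemble the path $u\to p_l'\to\cdots\to\{p_i\text{ or }p_j\}\to w$. The middle portion from $p_l'$ to the nearest shortcut endpoint is a subpath of the fan's canonical path whose edges are uncrossed by Lemma~\ref{lem:uncrossed} and hence reconstructible via Lemmas~\ref{lem:blue-edge-uncrossed} and~\ref{lem:white-edge-uncrossed}; alternatively, the shortcut $(p_j,p_i)\in\spanner$ can be used for a direct jump between the two endpoints. The main obstacle is the tight accounting: verifying that this middle portion, together with the final uncrossed edge to $w$, fits inside a $10\dst(u,w)$ budget. For this I plan to invoke Lemma~\ref{lem:monotone} on the canonical subpath to split its total by color (bounding blue contributions by a $\dstc{3}$-type quantity and white contributions by $\dst$), and to exploit the geometry of the fan --- in particular that $p_i$, $p_j$, and $p_l'$ all lie in the same negative cone of $w$ as $u$ --- to control the resulting $\dst$-sums. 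Adding the $10\dstm(u,w)$ contribution from the short canonical edge at $u$ then yields the claimed bound $\dspan(u,w)\le 10\dst(u,w)+10\dstm(u,w)$.
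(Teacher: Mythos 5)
Your setup is right and matches the paper's opening moves: the crossing shortcut must come from step 6 (step-4 shortcuts cross only the blue edge $(q,w)$), the crossed edge $(u,w)$ lies on the white side of its anchor, and the shortcut endpoints $p_i,p_j$ have uncrossed edges to $w$ to which Lemma~\ref{lem:white-edge-uncrossed} applies. But the heart of the lemma is exactly the part you defer to ``tight accounting,'' and your sketch of that part both omits the step that actually makes the accounting hard and imports a mechanism from Lemma~\ref{lem:blue-edge-crossed} that does not transfer. First, the obstacle you do not name: when step 6 adds the shortcut $(p_j,p_i)$ it may \emph{remove} the white canonical edge $(p_j,p_{j-1})$ from $\spanner$, and that edge's anchor need not be in $\anchiv$ either; so this one edge of the canonical subpath can only be reconstructed at cost $9\dst+\dstb\approx 10\dst$ via the recursive Lemma~\ref{lem:white-edge-uncrossed}, while every other white canonical edge costs only $2\dst$ (Lemma~\ref{lem:white-white}). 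Pricing all white canonical edges at the cheap rate is wrong, and pricing them all at the expensive rate blows the budget. Second, even with the correct per-edge prices, a single route does not fit in $10\dst(u,w)+10\dstm(u,w)$: the cost of going $w\to p_j\to u$ along the canonical path comes out to roughly $10\dst(u,w)+12\dstw(p_j,w)-13\dstw(u,w)$, which exceeds the target precisely when $\dstw(p_j,w)$ is large relative to $\dstw(u,w)$. The paper therefore sets up \emph{two} competing routes (directly along the canonical path from $p_j$, versus across the shortcut to $p_i$ and back along the canonical path) and runs a genuine case analysis (on $2\dstw(p_j,w)$ vs.\ $3\dstw(u,w)$, on $\dst(u,w)$ vs.\ $\dst(p_i,w)$, and on whether $\dstm=\dstw$ or $\dstb$), using the identities of Lemma~\ref{lem:metric-props} to convert between the quantities. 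You mention the shortcut ``as an alternative'' but give no criterion for when each route wins, and without that the bound is not established.

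Separately, your proposed source of the $+10\dstm(u,w)$ term --- a canonical edge at $u$ of $\dst$-length at most $\dstm(u,w)$ --- is not how this term arises here, and the underlying geometric claim is not justified in this configuration. In Lemma~\ref{lem:blue-edge-crossed} the crossed point is the common head of \emph{two canonical edges incoming at} $u$, one on each side, which is what makes ``one of them is at most $\dstm(u,w)$'' work. Here $u=p_l$ is an arbitrary interior point of the sub-path $p_i,\ldots,p_j$; its two incident canonical edges need not both be incoming at $u$ (one is typically an outgoing blue edge), and no analogue of the short-incoming-edge claim is available or used by the paper. In the paper's argument the $10\dstm(u,w)$ slack instead falls out of the algebra of the two-route case analysis. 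As written, your plan would need both a proof of the transplanted geometric claim and, independently, the missing demonstration that the remainder of the path fits in $10\dst(u,w)$; neither is supplied, so there is a genuine gap.
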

\begin{proof}
Since there are no shortcuts on the blue side of a
white anchor, all whites edges on the blue side are uncrossed.
Therefore, we can assume that $(u,w)$ is a white edge on the white
side crossing a shortcut $(p,q)$ in the same cone.

We consider two specific paths between $u$ and $w$:
{\em path~A} which starts with the path for $(p,w)$ as discussed in
Lemma~\ref{lem:white-edge-uncrossed} and then follows the canonical
path from $p$ to $u$; and
{\em path~B} which starts with the same path for $(p,w)$, and then
takes the shortcut from $p$ to $q$, and then follows the canonical
path from $q$ to $u$ in the other direction.
%
%
Considering both of the paths A and B, we bound $\dspan(p,w)$ by
$9 \dst(p,w) + \dstb(p,w)$ using Lemma~\ref{lem:white-edge-uncrossed}.
Also, for any blue canonical edge $(x,y)$ on the canonical path
between $p$ and $q$, we know that $\dspan(x,y) \leq 3\dst(x,y)$ by
Lemmas~\ref{lem:uncrossed} and \ref{lem:blue-edge-uncrossed}.
And, except for the first white canonical edge $(p,r)$, for any other
white canonical edge $(x',y')$ on this path, we know that
$\dspan(x',y') \leq 2\dst(x',y')$ by Lemmas~\ref{lem:uncrossed} and
\ref{lem:white-white}, because either these white canonical edges
themselves or their anchors are in $\spanner$.
The first white canonical edge $(p,r)$ and its anchor, however, may
both be excluded from $\spanner$ because of the shortcut $(p,q)$.
Thus, we can only bound $\dspan(p,r)$ by the worse bound of $9
\dst(p,r) + \dstb(p,r)$ using Lemmas~\ref{lem:uncrossed} and
\ref{lem:white-edge-uncrossed}.  Using the fact that this cost has a
much worse multiplicative constant than all other white canonical
edges, it can be verified that the worst case happens when $(p,r)$ is
the only white canonical edge on the path from $p$ to $u$.

We bound the total length of path A
by $9 \dst(p,w) + \dstb(p,w)$ for the edge $(p,w)$,
by $9 \dst(p,r) + \dstb(p,r) \leq 10 \dst(p,r)$ for the white
canonical edges between $p$ and $u$, and by $3 \dst(r,u)$ for the blue
canonical edges between $p$ and $u$.
Also, by using arguments analogous to those used in the proof of
Lemma~\ref{lem:metric-props}, we observe that $\dst(p,r) \leq \dst(p,u)$
and $\dst(r,u) \leq \dstw(p,u)$.  Therefore, the total cost
of path A is bounded by
\[\dspan(u,w) \leq 9 \dst(p,w) + \dstb(p,w) + 13 \dst(p,u) - 3 \dstb(p,u).\]
Since $(p,u)$ is a white canonical edge in a white cone, by
Lemma~\ref{lem:metric-props}, we can substitute
$\dst(p,u) = \dstb(u,w) - \dstb(p,w)$ and
$\dstb(p,u) = \dst(u,w) - \dst(p,w)$.  Hence, we get

\begin{align}
\dspan(u,w) &\leq 12 \dst(p,w) - 12 \dstb(p,w) + 13 \dstb(u,w)
 - 3\dst(u,w)  \nonumber \\
&= 12 \dstw(p,w) + 10\dstb(u,w) - 3\dstw(u,w) \nonumber \\
&= 12 \dstw(p,w) + 10\dst(u,w) - 13\dstw(u,w) \nonumber \\
&= 10\dst(u,w) + 10\dstw(u,w) + 12\dstw(p,w) -23\dstw(u,w).
\label{eqn:path-p}
\end{align}

By using the fact that $\dstw(p,w) \leq \dst(u,w)$ and rearranging
(\ref{eqn:path-p}) we get:

\begin{align*}
\dspan(u,w) &\leq 10 \dstb(u,w) + 20 \dstw(u,w) + 10\dst(u,w) +
2\dstw(p,w) - 23\dstw(u,w) \\
&\leq 10 \dst(u,w) + 10 \dstb(u,w) + 2 \dstw(p,w) - 3\dstw(u,w).
\end{align*}

Hence, we conclude that $\dspan(u,w) \leq 10 \dst(u,w) + 10
\dstm(u,w)$ if $2 \dstw(p,w) \leq 3 \dstw(u,w)$.
Therefore, we assume $3\dstw(u,w) < 2\dstw(p,w) \leq 2\dst(u,w) =
2\dstw(u,w) + 2\dstb(u,w)$, and derive $\dstw(u,w) < 2\dstb(u,w)$.

Next, we bound the total length of path B
by $9 \dst(p,w) + \dstb(p,w)$ for the edge $(p,w)$,
by $\dst(p,q)$ for the shortcut, by $2 \dst(u,q)$ for the white
canonical edges between $p$ and $u$, and by $3 \dst(u,q)$ for the blue
canonical edges between $p$ and $u$; thus in total by
\[
\dspan(u,w) \leq
9 \dst(p,w) + \dstb(p,w) + \dst(p,q) + 5 \dst(u,q).
\]

By the triangle inequality, we have
$\dst(p,q) \leq \dst(p,u) + \dst(u,q)$ and using once again that
$\dst(p,u) = \dstb(u,w) - \dstb(p,w)$,
we bound the total cost of path B by

\begin{equation}
\dspan(u,w) \leq 9 \dst(p,w) + \dstb(u,w) + 6 \dst(u,q).
\label{eqn:path-q}
\end{equation}

We further bound the cost of path B if $\dst(u,w) \geq \dst(q,w)$, in
which case we observe $\dst(u,q) \leq \dstw(u,w)$.  Using the
inequality $\dstw(u,w) < 2\dstb(u,w)$, we rearrange (\ref{eqn:path-q})

\begin{align*}
\dspan(u,w) &\leq 9 \dst(p,w) + \dstb(u,w) + 6 \dst(u,q) \\
	    &\leq 9 \dst(p,w) + \dstb(u,w) + 6 \dstw(u,w) \\
	    &\leq 10 \dst(u,w) + 5 \dstw(u,w) \\
	    &\leq 10 \dst(u,w) + 10 \dstb(u,w) \\
	    &\leq 10 \dst(u,w) + 10 \dstm(u,w).
\end{align*}

Therefore, we consider the remaining case $\dst(u,w) < \dst(q,w)$.  In
this case, by Lemma~\ref{lem:metric-props}, we get $\dstb(u,q) =
\dst(q,w) - \dst(u,w)$, and we also observe that $\dstw(u,q) \leq
\dstw(u,w)$.  Combining them, we get $\dst(u,q) \leq \dst(q,w) -
\dstb(u,w)$, and rearrange (\ref{eqn:path-q}) we get

\begin{equation}
\dspan(u,w) \leq 9 \dst(p,w) + 6\dst(q,w) - 5\dstb(u,w).
\label{eqn:path-q-worst}
\end{equation}

Furthermore, using the earlier assumption that $3\dstw(u,w) <
2\dstw(p,w)$, and the fact that $\dstw(q,w) \geq 0$, we deduce that
$3\dstb(u,q) < 2\dstb(p,u) + \dstb(u,q)$.  Using
Lemma~\ref{lem:metric-props} on all the three terms, we obtain
$\dst(q,w) < 3\dst(u,w) - 2\dst(p,w)$.
We then rearrange (\ref{eqn:path-q-worst})

\begin{align*}
\dspan(u,w) &\leq 9 \dst(p,w) + 18\dst(u,w) - 12\dst(p,w) - 5\dstb(u,w) \\
	    &= 10\dst(u,w) + 8\dstw(u,w) + 3\dstb(u,w) - 3\dst(p,w) \\
	    &\leq 10\dst(u,w) + 8\dstw(u,w) + 3\dstb(u,w)-3\dstw(p,w)\\
	    &\leq 10\dst(u,w) + \frac{7}{2}\dstw(u,w) + 3\dstb(u,w)\\
	    &\leq 10\dst(u,w) + 10\dstb(u,w).
\end{align*}

Therefore, what remains to be proven is $\dspan(u,w) \leq 10\dst(u,w)
+ 10\dstm(u,w)$ for the case $\dstm(u,w) = \dstw(u,w)$.  To prove this
case, using (\ref{eqn:path-p}), we will assume the stronger inequality
$7\dstw(u,w) < 4\dstw(p,w)$, and using arguments similar to those used
in the above paragraph, we get $3\dst(q,w) < 7\dst(u,w) - 4\dst(p,w)$.
Once again, we rearrange (\ref{eqn:path-q-worst})

\begin{align*}
\dspan(u,w) &\leq 9 \dst(p,w) + 14\dst(u,w) - 8\dst(p,w) - 5\dstb(u,w) \\
	    &= 9\dst(u,w) + \dst(p,w) + 5\dstw(u,w) \\
	    &\leq 10 \dst(u,w) + 5 \dstw(u,w).
\end{align*}

Proving $\dspan(u,w) \leq 10 \dst(u,w) + 10 \dstm(u,w)$ for all of the
cases, we conclude that the statement of the lemma is correct.

\end{proof}

 By Lemmas~\ref{lem:blue-edge-uncrossed},
\ref{lem:white-edge-uncrossed}, \ref{lem:blue-edge-crossed}, and
\ref{lem:white-edge-crossed} we have that
 $\spanner$ is a $20$-spanner of $\dela$. Since $\dela$ is a $2$-spanner
 of $\comp$ (\cite{Che89}) it follows that $\spanner$ is a spanner of
 $\comp$ with stretch factor at most $40$. We prove, however, a much better bound of 20 next.

\begin{lemma}
For any two points $p,q \in \Points$, $\dspan(p,q)$ is bounded by $20
\dst(p,q)$.
\label{lem:stretch}
\end{lemma}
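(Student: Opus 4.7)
The plan is to bound $\dspan(p, q)$ by constructing a specific path in $\dela$ from $p$ to $q$ that is both \emph{monotone} (in the sense of Definition~\ref{def:monotone}) and composed entirely of \emph{canonical} edges, then summing the per-edge stretch bounds. The monotonicity will force $\sum_e \dst(e) \leq 2\dst(p,q)$ via Lemma~\ref{lem:monotone}, while canonical edges, being uncrossed by Lemma~\ref{lem:uncrossed}, each satisfy $\dspan(e) \leq 10\dst(e)$: Lemma~\ref{lem:blue-edge-uncrossed} contributes the bound $3\dst(e)$ for blue edges, and Lemma~\ref{lem:white-edge-uncrossed} contributes $9\dst(e) + \dstb(e) \leq 10\dst(e)$ for white edges. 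Multiplying gives the desired $10 \cdot 2\dst(p,q) = 20\dst(p,q)$. This is what lets us avoid the $40\dst(p,q)$ bound that would result from naively applying the crossed-edge bound $10\dst + 10\dstm$ to the edges of an arbitrary path in $\dela$.

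By the six-fold symmetry of $\tridown$, I will assume without loss of generality that $q$ lies in a positive cone of $p$, and specifically in the green one, so that $p$ is the green vertex of $T := \tridown(p, q)$. The construction then proceeds by induction on $|T \cap \Points|$. In the base case $|T \cap \Points| = 2$, the Delaunay empty-triangle characterization gives $(p, q) \in \dela$, so I apply Lemmas~\ref{lem:blue-edge-uncrossed}--\ref{lem:white-edge-crossed} directly to the single edge $(p,q)$ and use $\dstm(p,q) \leq \dst(p,q)$ to conclude $\dspan(p, q) \leq 10\dst(p,q) + 10\dstm(p,q) \leq 20\dst(p,q)$.

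For the inductive step, I construct a monotone canonical path $\pi = (p = v_0, v_1, \ldots, v_k = q)$ in $\dela$ via a Chew-style argument: identify a point of $\Points$ inside $T$ that exhibits a canonical edge of $p$ (or of $q$) in the appropriate cone, and recurse on a strictly smaller sub-triangle. Concatenating the canonical edges produced along the recursion yields $\pi$. With $\pi$ in hand, the concluding computation is
\[
\dspan(p, q) \;\leq\; \sum_{i=0}^{k-1} \dspan(v_i, v_{i+1}) \;\leq\; 10 \sum_{i=0}^{k-1} \dst(v_i, v_{i+1}) \;\leq\; 20 \dst(p, q),
\]
using per-edge uncrossedness from Lemma~\ref{lem:uncrossed} together with the monotone-path length bound of Lemma~\ref{lem:monotone} in the last two inequalities.

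The hard part will be the inductive construction of $\pi$ as a \emph{monotone} path of canonical edges rather than merely a weak path. The concatenation of edges produced by the Chew-style recursion is automatically a weak path in $\dela$ whose edges are canonical, but verifying Definition~\ref{def:monotone} --- bi-colored with one color matching the positive-cone direction from $p$ to $q$, and with no two consecutive edges lying in neighboring cones at their shared endpoint --- requires that the recursion's transitions between fans of different anchors never introduce a forbidden cone pair or a third color. Without this structural guarantee, Lemma~\ref{lem:monotone} would only bound canonical-path segments individually rather than the full concatenation, and the argument would degrade back toward the naive $40\dst(p,q)$.
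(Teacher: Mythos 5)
Your concluding computation needs every edge of $\pi$ to be \emph{uncrossed}, so that the per-edge cost is $10\dst(e)$ rather than the crossed-edge cost $10\dst(e)+10\dstm(e)$; that is the whole reason you insist the path consist of canonical edges. But this is exactly where the argument breaks: canonical edges only join \emph{consecutive members of a single fan} (points $v_i,v_{i+1}$ with a common apex $w$), so a walk using only canonical edges can never leave the fan it starts in. To travel from $p$ to $q$ you must at some point traverse a positive-cone edge of $\dela$ (a fan edge $(v_i,w)$ or an anchor), and those are precisely the edges that the shortcuts of steps 4 and 6 are allowed to cross (Lemma~\ref{lem:uncrossed} protects anchors, canonical edges and boundary edges, but a generic fan edge $(v_i,w)$ is none of these). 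Even the trivial instance where $(p,q)\in\dela$ and $q$ has other incoming edges in that cone already forces you onto a possibly crossed fan edge; your base case dodges this only when $|T\cap\Points|=2$. So the object your induction is supposed to produce --- a monotone $p$--$q$ path all of whose edges are canonical --- does not exist in general, and you have correctly identified the construction as ``the hard part'' without supplying it.

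The paper takes a genuinely different route precisely to avoid this. It builds $\pi$ by the two-sided greedy walk (advance whichever of the current pair $\{p_i,q_i\}$ sees the other in a positive cone, along its selected $\dela$-edge in that cone); these edges are positive-cone selections, they may well be crossed, and the paper accepts the crossed-edge bounds of Lemmas~\ref{lem:blue-edge-crossed} and~\ref{lem:white-edge-crossed}, i.e.\ $10\dst(e)+10\dstm(e)$ per edge. The $20$ is then recovered not by halving the per-edge constant but by the projection argument of Lemma~\ref{lem:monotone} applied to \emph{both} terms: monotonicity makes the $\dst(e)$ contributions and the $\dstm(e)$ contributions project without overlap onto the two sides $[zp]$ and $[zq]$ of $\tridown(p,q)$, giving a total of $10\dst(p,q)+10\dstb(p,q)\le 20\dst(p,q)$. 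If you want to keep your accounting, you must either prove the (implausible) existence of an all-uncrossed monotone path, or add the $\dstm$ terms to your sum and supply the two-axis projection bound for them as the paper does.
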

\begin{proof}
We prove the lemma by first constructing, in $\dela$, a monotone path
$\pi$ between $p$ and $q$ that lies inside $\tridown(p,q)$. We then
consider the path $\pi'$ in $\spanner$ obtained by replacing every
edge of $\pi$ not in $\spanner$ with a short path in $\spanner$.
%

We define the path $\pi$ between $p$ and $q$ consisting of $k$
edges in $\dela$ using a sequence of pairs of points $\{p, q\} = \{p_0,
q_0\}, \{p_1, q_1\}, \ldots, \{p_k, q_k\}$ such that any two
consecutive pairs of points $\{p_{i-1}, q_{i-1}\}$ and $\{p_i, q_i\}$
satisfy exactly one of the equations $p_i = p_{i-1}$ and $q_i =
q_{i-1}$ and the equation that is not satisfied describes the $i^{th}$
edge.  If $p_i \neq p_{i-1}$, then the $i^{th}$ edge is $(p_{i-1},
p_i) \in \dela$, otherwise the $i^{th}$ edge is $(q_{i-1}, q_i) \in
\dela$.
We define this sequence recursively for the next pair of points
$\{p_{i+1}, q_{i+1}\}$ by first identifying which of the points $p_i$
and $q_i$ lie in the other's positive cone.  If $q_i$ lies in the
positive cone of $p_i$, then we define $q_{i+1} = q_i$ and $p_{i+1}=r$
such that $(p_i,r) \in \dela$, noting that by definition of $\dela$,
$r$ is the unique such point in the positive cone of $p_i$ that
contains $q_i$.  Otherwise, if $p_i$ lies in a positive cone of $q_i$,
then we define $p_{i+1} = p_i$ and $q_{i+1}=r'$ such that $(q_i ,r')
\in \dela$.
We stop when $p_k =q_k$.

We prove inductively that the aforementioned path $\pi$ lies within
$\tridown(p,q)$.
  For the base case, clearly the path consisting of
 the only edge in the sequence $\{p_{k-1},q_{k-1}\}, \{p_k,q_k\}$ lies
 within $\tridown(p_{k-1},q_{k-1})$.  For the inductive step, assuming
 that the path for the sequence $\{p_i,q_i\}, \ldots, \{p_k, q_k\}$
 lies within $\tridown(p_i,q_i)$, we show that the path for the
 sequence $\{p_{i-1},q_{i-1}\}, \{p_i,q_i\}, \ldots, \{p_k,q_k\}$ lies
 within $\tridown(p_{i-1},q_{i-1})$.  First, we observe that in either
 case that the first edge is $(p_{i-1},p_i)$ or $(q_{i-1},q_i)$, it
 lies within $\tridown(p_{i-1},q_{i-1})$ by definition.	Finally, we
 observe that $\tridown(p_i,q_i)$, hence the rest of the path lies
 within $\tridown(p_{i-1},q_{i-1})$ as well.  Therefore, we prove the
 inductive step.

We then prove that all of the edges of the form $(p_i, p_{i+1})$ lie
in the same corresponding positive cones of their respective points
$p_i$.	More specifically, we prove by induction on the sequence of
such edges $e_0=(p=p_{i_0-1},p_{i_0}), e_1=(p_{i_0}=p_{i_1-1},
p_{i_1}), \ldots, e_\ell=(p_{i_{\ell-1}}=p_{i_\ell-1},
p_{i_\ell}=p_k)$, that $p_{i_0}, p_{i_1}, \ldots, p_{i_\ell}$ lie in
the same corresponding cones of $p_{i_0-1}, p_{i_1-1}, \ldots,
p_{i_\ell-1}$ respectively.

The base case follows trivially, and for the inductive step we assume
for edges $e_0, e_1, \ldots, e_\lambda$ that $p_{i_0}, p_{i_1},
\ldots, p_{i_\lambda}$ lie in the same corresponding cones of
$p_{i_0-1}, p_{i_1-1}, \ldots, p_{i_\lambda-1}$ respectively.  For the
inductive step we need to prove for the edge $e_{\lambda+1}$ that
$p_{i_{\lambda+1}}$ lies in the same corresponding cone of
$p_{i_{\lambda+1}-1}=p_{i_{\lambda}}$.
 We have already proven that the edge $e_{\lambda+1}$ lies within
$\tridown(p_{i_{\lambda}}, q_{i_{\lambda}})$.  Also, by definition of
$\dela$, we know that $\tridown(p_{i_\lambda-1}, p_{i_{\lambda}})$ is
empty of points of $\Points$ in its interior.
Then we conclude the inductive proof by observing that the only
positive cone that can possibly include $e_{\lambda+1}$ at
$p_{i_{\lambda}}$ is part of the same corresponding cone of
$p_{i_{\lambda}}$.
Having proven this critical property about this path, we denote it by
$\pi_p$ and refer to it as one of the two {\em branches}, where the
other {\em branch} $\pi_q$ is defined analogously using points
$q_0, q_1, \ldots, q_k$.
We conclude that the path $\pi$ consisting of these two branches
$\pi_p$ and $\pi_q$ is monotone.

Finally, we prove the claimed bound on the length of the path $\pi'$
between $p$ and $q$. Because the constants in Lemma~\ref{lem:white-edge-crossed} are the largest among Lemmas~\ref{lem:blue-edge-uncrossed},
\ref{lem:white-edge-uncrossed}, \ref{lem:blue-edge-crossed}, and
\ref{lem:white-edge-crossed}, the worst case happens when $q$ lies in the white positive cone of $p$ and $\pi$ is white monotone.
Let $z$ be the blue vertex of $\tridown(p,q)$.
By Lemma~\ref{lem:monotone}: the projections of all edges of $\pi$
onto $zp$ (resp.~$zq$) do not overlap, are contained within $[zp]$ (resp.~$zq$), $|zp|=\dst(p,q)$, and $|zq| =\dstb(p,q)$. In the worst case, each edge of $\pi$ is crossed, and Lemma~\ref{lem:white-edge-crossed} applies
to reconstruct each edge of $\pi$. Therefore, the length of $\pi'$ can be upper bounded by $10\dst(p,q) + 10\dstm(p,q) \leq 10\dst(p,q) + 10\dstb(p,q) \leq 20 \dst(p,q)$.
\end{proof}

\begin{theorem}
$\spanner$ is a plane spanner of $\comp$ with maximum degree 4 and
stretch factor at most 20 that can be constructed in $\Oh(n\log n)$ time.
\label{thm:main}
\end{theorem}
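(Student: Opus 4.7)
The proof plan is to assemble Theorem~\ref{thm:main} directly from the properties already established for $\spanner$ and then handle the only remaining obligation, which is the running time bound. Planarity is immediate from Lemma~\ref{lem:plane}, the maximum degree bound of 4 follows from Lemma~\ref{lem:degree}, and the stretch factor bound of 20 is precisely Lemma~\ref{lem:stretch}. So the main body of the proof is just a one-sentence citation of these three results.

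The substantive part that remains is the $\Oh(n\log n)$ running time. My plan is to argue it step by step following the construction. First, the underlying graph $\dela$, which coincides with the TD-Delaunay triangulation of $\Points$, can be computed in $\Oh(n\lg n)$ time by standard algorithms (\eg~the sweepline construction of Chew~\cite{Che89}). Since $\dela$ is planar, it has $\Oh(n)$ edges, and with a single linear-time pass we can orient each edge outward from the point that selected it, assign its color according to the positive cone of that point, identify in each negative cone the incoming edges in counterclockwise order, and mark the shortest one in the metric $\dst$ as the anchor; this also gives us the fans and hence all canonical edges.

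Next, step 1 of the construction adds all blue anchors in $\Oh(n)$ time. Step 2 requires processing the white anchors in increasing order of $\dst$-length; sorting $\Oh(n)$ anchors takes $\Oh(n\lg n)$ time, and with the list of fans at each point we can test in $\Oh(1)$ per anchor whether an adjacent white anchor has already been added, for a total of $\Oh(n\lg n)$. Steps 3 through 6 each traverse the fans of $\dela$, and the shortcut constructions in steps 4 and 6 can be performed in time proportional to the length of each canonical path being processed: in step 4 we scan consecutive pairs of canonical edges in blue cones, and in step 6 we walk along each white-side canonical path maintaining the current index $i$ and the largest $j$ satisfying the non-crossing condition, which by monotonicity along the canonical path can be done in amortized $\Oh(1)$ per edge. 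Since the total number of canonical edges across all fans is $\Oh(n)$, the total cost of steps 3 through 6 is $\Oh(n)$.

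The only potentially tricky point is confirming that the geometric test in step 6(b), namely finding the largest $j$ such that $[p_ip_j]$ does not intersect the canonical path between $p_i$ and $p_j$, can be done in amortized constant time; this is standard because as $i$ advances along the path we may reuse the previously computed sightline, so the total work across one anchor's fan is linear in the fan's size. Putting these bounds together yields an overall running time of $\Oh(n\lg n)$, and combined with Lemmas~\ref{lem:plane}, \ref{lem:degree}, and~\ref{lem:stretch} this completes the proof of Theorem~\ref{thm:main}.
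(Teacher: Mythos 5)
Your proposal is correct and follows essentially the same route as the paper: cite Lemmas~\ref{lem:plane}, \ref{lem:degree}, and~\ref{lem:stretch} for planarity, degree, and stretch factor, then bound the running time by constructing $\dela$ in $\Oh(n\log n)$ time, sorting the anchors in $\Oh(n\log n)$ time, and adding all remaining edges in linear time using the planarity of $\dela$. Your treatment of the running time is merely more detailed than the paper's, which states the same bounds without elaborating on steps 3 through 6.
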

\begin{proof}
The properties of $\spanner$ regarding planarity, maximum degree
bound, and stretch factor are proven by Lemmas \ref{lem:plane},
\ref{lem:degree}, and \ref{lem:stretch} respectively.  The TD-Delaunay
triangulation $\dela$ can be constructed in $\Oh(n \log n)$
time~\cite{Che89}.
Given $\dela$ and the fact that it is planar, $\spanner$ can be constructed
in $\Oh(n \log n)$ time: sorting the anchors takes $\Oh(n \log n)$ time, and
adding edges to $\spanner$ can be done in $\Oh(n)$ time.
\end{proof}

\iflong
\section{Tight degree bound for points in convex position}
\label{sec:convex}
In this section, we show that if the set $\Points$ of points is in convex position, then the {\em same} spanner $\spanner$ constructed in the previous section has maximum degree at most 3. Therefore, for any set of points $\Points$ in convex position,
there exists a plane spanner of $\comp$ of maximum degree at most 3.  We also show in this section that 3 is a lower bound on the maximum degree of plane spanners of $\comp$ for point-sets in convex position. The preceding implies that 3 is a tight bound on the maximum degree of plane spanners of $\comp$ for point-sets in convex position, and this completely and satisfactorily answers the question about the maximum degree of plane geometric spanners of $\comp$, in the case when $\Points$ is in convex position.

\begin{proposition}
\label{prop:convexupper}
Let $\Points$ be a set of points in convex position in the plane, and let $\spanner$ be the spanner of $\comp$ constructed as described in Section~\ref{sec:spanner}. Then the maximum degree of $\spanner$ is at most 3.
\end{proposition}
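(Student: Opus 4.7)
The plan is to refine the charging argument of Lemma~\ref{lem:charging}: I will show that for every point $p \in \Points$, at least one of the four sectors $LB_p, RB_p, UW_p, LW_p$ receives no charge. Since Lemma~\ref{lem:charging} already assigns each edge of $\spanner$ incident to $p$ to exactly one sector, the degree of $p$ in $\spanner$ is then at most $3$.

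First, I would use convex and general position to single out one ``empty'' sector of $p$. Since $p$ lies on the convex hull and no three points are collinear, the interior angle of the hull at $p$ is strictly less than $180^{\circ}$, so there is an arc of directions at $p$ of angular measure strictly greater than $180^{\circ}$ containing no point of $\Points \setminus \{p\}$. The four sectors, in circular order $RB_p, UW_p, LB_p, LW_p$, have sizes $60^{\circ}, 120^{\circ}, 60^{\circ}, 120^{\circ}$, and a routine check shows that a contiguous arc of measure less than $180^{\circ}$ cannot meet the interiors of all four of them, so any arc of measure greater than $180^{\circ}$ must contain at least one of the four sectors in its entirety. Let $S$ denote such a fully-contained sector.

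Next, I would case-split on the identity of $S$. In the cases $S \in \{LB_p, UW_p, LW_p\}$, I would revisit the enumeration of charging situations in the proof of Lemma~\ref{lem:charging} for the corresponding sector and verify that each situation presupposes either a Delaunay edge of $\spanner$ incident to $p$ in a cone of $S$, or (in the step~(3) bullet for $LB_p$) the apex $w$ of a blue fan with $w \in LB_p$. Each of these needs a point of $\Points$ in one of the cones making up $S$, which is impossible when $S$ lies inside the empty arc. Hence $S$ is uncharged and the degree at $p$ drops to $3$.

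The main obstacle is the case $S = RB_p$, because steps~(5) and~(6) can charge $RB_p$ at a tail $x$ of a white canonical edge or shortcut in a white cone without needing any physical edge in $RB_p$ itself. To close this case, I would exploit the fact that if $RB_p$ (of size $60^{\circ}$) is the \emph{only} sector fully contained in an empty arc of measure strictly greater than $180^{\circ}$, then the arc spills into $UW_p$ and $LW_p$ by strictly more than $120^{\circ}$ in total, so by pigeonhole either the positive red cone of $p$ (the half of $UW_p$ adjacent to $RB_p$) or the positive green cone of $p$ (the half of $LW_p$ adjacent to $RB_p$) lies entirely inside the empty arc, and hence is empty of $\Points$. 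In either subcase, I would enumerate the two possible types of white fan whose edges could put $p$ in the tail role of a step~(5) or step~(6) charge: a fan in the negative red cone of an apex $w$ (which forces $w$ to lie in the positive red cone of $p$, and whose white canonical edges are green and so require their tail to have a nonempty positive green cone), and the symmetric fan in $w$'s negative green cone. Emptiness of the positive red or positive green cone of $p$ rules out $p$ serving as such a tail, so $RB_p$ receives no charge from steps~(5) or~(6); together with step~(1) being ruled out by $RB_p$ itself being empty of incoming Delaunay edges, $RB_p$ is uncharged and the degree bound of $3$ follows.
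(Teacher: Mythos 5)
Your proposal is correct and follows essentially the same route as the paper's proof: both reduce to the charging scheme of Lemma~\ref{lem:charging}, use convexity to exhibit a sector at $p$ that receives no charge, and isolate the same delicate case in which the empty sector is $RB_p$, disposing of the step~(5)/(6) tail charges by showing that one of the two positive white cones of $p$ must also be empty of points of $\Points$. The paper organizes the argument as a case analysis on the angle of the support line through $p$ rather than on the identity of the fully empty sector, but the underlying geometric facts invoked are identical.
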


\begin{proof}
Let $p \in \Points$. It suffices to show that the degree of $p$ in $\spanner$ is at most 3. To this effect, we show that at most 3 edges incident to $p$ could be added in steps 1 -- 6 of the construction of $\spanner$.   Since $\Points$ is in convex position, there exists a support line, $D_p$, passing through $p$ such that all the points of $\Points$ lie in
one closed half plane, $H$, of the two half planes delimited by $D_p$~\cite{preparatashamos}.  Observe that -- by the construction of $\spanner$ -- each of the two blue sectors of $p$ contains at most one edge incident to $p$ in $\spanner$. In Lemma~\ref{lem:charging}, we showed that each of the at most 4 edges incident to a point $p \in \spanner$ is charged by the charging scheme (see Section~\ref{sec:spanner-props}) to one of the 4 sectors $LB_p$, $RB_p$, $UW_p$, and $LW_p$, such that each of the 4 sectors is charged with at most one edge. In this charging scheme, a blue anchor at $p$ is charged to the blue sector containing it, and a white edge incident to $p$ is either charged to a blue sector at $p$ (and in such case the blue sector does not contain a blue anchor), or to the white sector containing the edge. We distinguish the following cases, based on the angle, $\alpha$, that $D_p$ makes with the positive $x$-axis:

\begin{itemize}
\item[Case 1.]  $0 \leq \alpha \leq \pi/3$ ($D_p$ passes through the two blue sectors at $p$). In this case the half plane $H$ entirely contains one of the two white sectors at $p$. Suppose that $H$ contains $UW_p$; the case is symmetric if $H$ contains $LW_p$.
Since (1) each of the three sectors $LB_p$, $UW_p$, and $RB_p$ is charged with at most one edge incident to $p$, (2) each of $LB_p$ and $RB_p$ contains at most one blue anchor incident to $p$, and (3) each edge incident to $p$ in $UW_p$ is either charged to $UW_p$, $LB_p$, or $RB_p$, it follows that the number of edges incident to $p$ in $H$, and hence in $\spanner$, is at most 3.

\item[Case 2.] $\pi/3 \leq \alpha \leq 2\pi/3$ ($D_p$ passes through the two red cones at $p$). Assume, to get a contradiction, that there are 4 edges incident to $p$ in $H$. Suppose first that
               $H$ entirely contains $RB_p$, and hence is disjoint from $LB_p$. In this case the part of $UW_p$ in $H$ is contained in a positive cone at $p$, and hence, there is at most one edge incident to $p$ in $\spanner$ that lies in $H \cap UW_p$. Since each of the 4 sectors at $p$ is charged with at most one edge, it follows from the preceding that there is a white edge $e$ incident to $p$ in $H \cap LW_p$ that is charged to $LB_p$. This could only happen if $e$, a white edge, is a canonical edge in a blue cone, added according to step 3 in the spanner construction, and is charged to $LB_p$ according to step 3 of the charging scheme. This, however, implies that $LB_p$ is not empty (a white canonical edge $e \in LW_p$ in a blue cone could exist only if $RB_p$ contains points of $\Points$), contradicting that all points of $\Points$ lie in $H$. Suppose now that $H$ entirely contains $LB_p$, and hence is disjoint from $RB_p$. By the same token as above, there must exist a white canonical edge, $e$, incident to $p$ and lying in a white sector at $p$, that is charged to $LB_p$. By the charging scheme, the edge $e$ cannot lie in $UW_p$ because in such case $e$ would not be charged to $UW_p$. Therefore, $e$ must lie in $LW_p$. But then $e$ must lie in the negative cone adjacent to $p$ in $LW_p$, which is (the cone) disjoint from $H$, again contradicting that $H$ contains all points of $\Points$.

\item[Case 3.] $2\pi/3 \leq \alpha \leq \pi$ ($D_p$ passes through the two green cones at $p$). The proof is analogous to that of Case 2 above.

\end{itemize}
\end{proof}

\begin{proposition}
\label{lem:convexlower}
For any constant $\rho \geq 1$, there exists a point-set $\Points$ in convex position such that any plane spanner of $\comp$ of maximum degree at most 2 has stretch factor $> \rho$.
\end{proposition}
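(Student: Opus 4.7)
The plan is to construct, for each $\rho \geq 1$, a point-set $\Points$ in convex position so that no plane subgraph of $\comp(\Points)$ of maximum degree at most $2$ has stretch at most $\rho$. Any such connected subgraph is a Hamiltonian path or cycle on $\Points$; moreover, on points in convex position the only plane Hamiltonian cycle is the convex-hull cycle itself, since including any chord $P_i P_j$ in a Hamiltonian cycle would require some subsequent edge of the cycle to cross $P_iP_j$ (points of $\Points$ lie on both sides of the chord, and the remaining Hamiltonian sub-path from $P_i$ to $P_j$ must therefore cross it). It thus suffices to demonstrate a point set for which the hull cycle and every plane Hamiltonian path fail to be a $\rho$-spanner.

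I would take $\Points$ to be $n$ points placed at equally-spaced angles on the thin ellipse $\{(x,y): x^2/M^2 + y^2 = 1\}$, with $M$ and $n$ chosen as increasing functions of $\rho$ (concretely, $M > 2\rho$ and $n \geq 4\pi M \rho$, with $n$ divisible by $4$ so that $T = (0,1)$ and $B = (0,-1)$ lie in $\Points$). For the hull cycle, the pair $(T,B)$ has Euclidean distance $2$ but their shorter hull arc has length at least half the perimeter of the ellipse, which tends to $2M$ as $M$ grows, giving stretch at least $M > \rho$. For a plane Hamiltonian path $H$ with vertex sequence $Q_1, \ldots, Q_n$, I would use the standard characterization that $\{Q_1, \ldots, Q_j\}$ is always a contiguous hull arc, so each edge of $H$ is either a hull edge or a spanning chord joining one extreme of the current arc to the hull-neighbor of the other extreme.

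The analysis of $H$ splits on whether $H$ contains the edge $TB$. If it does, planarity forces $H = \pi_L\,T\,B\,\pi_R$ where $\pi_L$ visits all points of $\Points$ with $x<0$ and $\pi_R$ all points with $x>0$; then the sub-path in $H$ between the hull-neighbors $L_1, R_1$ of $T$ (one on each side of line $TB$) contains $T$ and $B$ and so has length at least $|TB| = 2$, while $|L_1 R_1| = \Theta(M/n)$, yielding stretch $\Omega(n/M) > \rho$. If $H$ does not contain $TB$, the sub-path from $T$ to $B$ has at least one intermediate vertex, and projecting this sub-path onto the $x$-axis shows its length is at least $2\max_k |x_k|$ over intermediate points $Q_k$; if this maximum is at least $M/2$, then stretch is at least $M/2 > \rho$ for the pair $(T,B)$, and we are done. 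In the remaining sub-case, every intermediate vertex of the $T$-to-$B$ sub-path has $|x| < M/2$, confining these vertices to the angular neighborhoods of $T$ and $B$; the arc-growth structure of $H$ then forces the path elsewhere to contain a long spanning chord (for instance, because $P_L = (-M,0)$ and $P_R = (M,0)$ must be visited outside the $T$-to-$B$ sub-path, and their inclusion in the growing arc forces a chord of length $\Omega(M)$), which in turn witnesses a high-stretch pair. The main obstacle is precisely this last sub-case: making rigorous the claim that avoidance of $TB$ together with confinement of the $T$-to-$B$ sub-path forces a long spanning chord elsewhere in $H$ requires a careful tracking of the arc-growth steps and is the technical heart of the argument.
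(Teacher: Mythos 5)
Your reduction to Hamiltonian paths/cycles, the identification of the hull cycle as the only plane Hamiltonian cycle, the arc-growth characterization of plane Hamiltonian paths on convex position, and the first two sub-cases for paths (the edge $TB$ present; some intermediate vertex of the $T$--$B$ sub-path with $|x|\geq M/2$) are all sound. But the final sub-case, which you yourself flag as ``the technical heart,'' is a genuine gap, and the mechanism you gesture at for closing it is not the right one: a ``spanning chord of length $\Omega(M)$'' is an \emph{edge} of $H$, and a long edge by itself never witnesses large stretch --- the zigzag path on a circle contains near-diameter chords, yet that is not where its stretch comes from. What you need is a pair of \emph{close} points whose connecting sub-path in $H$ is forced to be long. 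The sub-case can be closed, but by a different observation: let $A$ be the visited arc at the first moment $B$ is reached. One hull-neighbor of $T$ is still unvisited when the path sits at $T$, so the complement of the visited arc at that moment stretches contiguously from near $T$ to $B$ and hence contains $P_L$ or $P_R$; confinement then rules out reaching $B$ by extending past $T$, forces the $B$-ward extension to come from the other end of the arc through the angular neighborhood of $B$, and keeps the $T$-side endpoint $Y$ of $A$ near the top of the ellipse. Taking $Y'$ to be the unvisited hull-neighbor of $Y$, one gets $|YY'|=O(M/n)$ while the unique $Y$--$Y'$ sub-path of $H$ passes through $B$ and so has length at least $|YB|>1$; with $n\gg M\rho$ this is the high-stretch pair. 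As written, none of this bookkeeping is carried out, so the proof is incomplete.

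It is also worth noting that the paper reaches the same conclusion by a far more elementary route that never invokes planarity or the arc-growth structure: it places two columns of $N$ unit-spaced collinear (then slightly perturbed) points at horizontal distance $b>\rho$, with $N>3(\rho b+1)$. If the induced subgraph on one column is disconnected, two consecutive points of that column at distance $1$ can only be joined through a cross edge of length at least $b>\rho$; if both induced subgraphs are connected, each is a Hamiltonian subpath, and some point $p_i$ lies at least $N/3-1>\rho b$ edges from both ends of its column's subpath, so any route to its partner $q_i$ (at distance $b$) must traverse more than $\rho b$ unit-length edges. Your ellipse-plus-arc-growth approach, once completed, proves the statement only for plane degree-$2$ spanners, whereas the paper's argument in fact rules out \emph{all} connected degree-$2$ spanners, plane or not.
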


\begin{proof}
Let $\rho \geq 1$ be a given constant. Choose an integer $b > \rho$, and an integer $N > 3(\rho \cdot b +1)$. Consider an orthogonal rectangle of vertical dimension $a=N-1$ and horizontal dimension $b$. Let $n=2N$, and let $\Points =\{p_1, \ldots, p_N, q_1, \ldots, q_N\}$ be a set of $n$ points placed on the rectangle as follows. Points $p_1, \ldots, p_N$ are placed on one vertical side of the rectangle such that $|p_ip_{i+1}| =1$, for $i=1, \ldots, N-1$, so that $p_1$ and $p_N$ end up on the two vertices of the vertical side of the rectangle. Points $q_1, \ldots, q_N$ are placed symmetrically on the other vertical side of the rectangle so that the $p_iq_i$'s, $i=1, \ldots, N$, are all parallel. We note that one can choose $\Points$ to be in convex position while
respecting the standard general position assumptions (no 3 points on a line, etc.) by slightly modifying the set of points chosen above (e.g., rotating each of $p_ip_{i+1}$ and $q_iq_{i+1}$, $i=1, \ldots, N$, slightly but increasingly inwards (\ie, towards the interior of the rectangle), and slightly modifying the choice of the parameters $a, b, N$. However, we decided to go with the above configuration for clarity and ease of presentation.

Suppose, to get a contradiction, that there is a plane spanner $S$ of $\comp$ of maximum degree at most 2 and stretch factor $\rho$. Since $S$ is connected, $S$ is either a Hamiltonian path or a Hamiltonian cycle. Without loss of generality, we will assume in what follows that $S$ is a Hamiltonian cycle, as the proof in the other case is simpler. Let $L=\{p_1, \ldots, p_N\}$ and $R=\{q_1, \ldots, q_N\}$. We distinguish the following two cases:

\begin{itemize}
\item[Case 1.] The subgraph of $S$ induced by either $L$ or $R$ is disconnected. Suppose that the subgraph of $S$ induced by $L$, $S_{L}$, is disconnected; the proof follows by symmetry in the other case.
Then there must exist $i \in \{1, \ldots, N-1\}$ such that the two consecutive points $p_i, p_{i+1}$ belong to two different connected components of $S_L$. Therefore, any path between $p_i$ and $p_{i+1}$ in $S$ must contain an edge $p_rq_s$, $r, s \in \{1, \ldots, N\}$. Since the distance between any point from $L$ and any point from $R$ is at least $b$, we have $|p_rq_s| \geq b$. It follows that the length of any path in $S$ between $p_i$ and $p_{i+1}$ is at least $b > \rho =\rho \cdot |p_ip_{i+1}|$, which contradicts that $S$ is a spanner of stretch factor $\rho$.

\item[Case 2.] Each of the two subgraphs $S_L$ and $S_R$ of $S$, induced by $L$ and $R$, respectively, is connected. Then each of $S_L$ and $S_R$ must be a Hamiltonian path on $L$ and $R$, respectively.
Let $p_r$ and $p_s$, $r, s \in \{1, \ldots, N\}$, be the points of degree 1 in $S_L$, and $q_{r'}, q_{s'}$, $r', s' \in \{1, \ldots, N\}$, be those of degree 1 in $S_R$. Since $S$ is a Hamiltonian cycle, $S$ consists of the edges in the Hamiltonian path $S_L$, the edges in the Hamiltonian path $S_R$, plus a matching between $\{p_r, p_s\}$ and $q_r', q_s'$, say $p_rp_{r'}$ and $q_sq_{s'}$. Since there are $N$ points in $L$, there must exist a point $p_i \in L$, $i \in \{1, \ldots, N\}$, such that the number of edges on each of the two subpaths between $p_i$ and $p_r$ and between $p_i$ and $p_s$ in $S_L$, is at least $N/3 -1 > \rho \cdot b$ by the choice of $N$. Consider point $q_i \in R$. Any path between $p_i$ and $q_i$ in $S$ must contain either the subpath of $S_L$ between $p_i$ and $p_r$, or the subpath of $S_L$ between $p_i$ and $p_s$, and hence must contain more than $\rho \cdot b$ edges of $S_L$. Since each edge of $S_L$ has length at least 1, the length of any path between $p_i$ and $q_i$ in $S$ is more than $\rho \cdot b =\rho \cdot |p_iq_i|$. This again contradicts the assumption that $S$ has stretch factor $\rho$, and completes the proof.
\end{itemize}
\end{proof}

We conclude with the following theorem:

\begin{theorem}
\label{thm:convex}
The constant 3 is a tight bound on the maximum degree of geometric plane spanners of $\comp$ for point-sets in convex position.
\end{theorem}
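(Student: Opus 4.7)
The plan is to obtain Theorem~\ref{thm:convex} as an immediate consequence of the two preceding propositions, which together pin down the exact attainable value of the maximum degree for convex point-sets. Since the theorem asserts a tight bound, I would organize the proof into an upper-bound half and a lower-bound half, each invoked from an earlier statement.

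For the upper-bound half, I would argue as follows. Let $\Points$ be an arbitrary finite set of points in convex position, and let $\spanner$ be the spanner produced by the algorithm of Section~\ref{sec:spanner}. By Theorem~\ref{thm:main}, $\spanner$ is a plane spanner of $\comp$ with stretch factor at most $20$. By Proposition~\ref{prop:convexupper}, the maximum degree of $\spanner$ is at most $3$. Hence for every convex point-set there exists a plane spanner of $\comp$ with bounded stretch factor and maximum degree at most $3$. In particular, the infimum over all stretch-factor constants $\rho$ such that there exists a family of plane $\rho$-spanners of maximum degree at most $3$ for convex point-sets is finite, so $3$ is an \emph{upper bound} on the maximum degree needed.

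For the lower-bound half, I would use Proposition~\ref{lem:convexlower} contrapositively. Suppose, toward a contradiction, that there is a constant $\rho \geq 1$ and a family of plane $\rho$-spanners of $\comp$ of maximum degree at most $2$ for all convex point-sets. Proposition~\ref{lem:convexlower} provides, for this very $\rho$, a specific convex set $\Points$ on which no plane spanner of maximum degree at most $2$ achieves stretch factor $\rho$, contradicting the hypothesis. Therefore no family of plane spanners of $\comp$ of maximum degree at most $2$ can have bounded stretch factor over all convex point-sets, i.e.\ $3$ is also a \emph{lower bound} on the maximum degree needed.

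Combining the two halves yields that $3$ is simultaneously an upper bound and a lower bound on the maximum degree of plane spanners of $\comp$ for point-sets in convex position, which is the statement of the theorem. I do not expect any technical obstacle here: all the real work is done in Proposition~\ref{prop:convexupper} (the structural argument using the support line $D_p$ and the sector charging scheme from Lemma~\ref{lem:charging}) and in Proposition~\ref{lem:convexlower} (the two-column point configuration forcing large stretch on any Hamiltonian path or cycle). The theorem itself is just the synthesis of these two bounds, so the only care needed is to phrase the tightness claim correctly, distinguishing the existence of a bounded-stretch degree-$3$ family (upper bound, via our explicit construction) from the nonexistence of a bounded-stretch degree-$2$ family (lower bound, via the adversarial construction).
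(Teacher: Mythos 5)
Your proposal is correct and matches the paper's approach exactly: the paper states Theorem~\ref{thm:convex} as an immediate consequence of Proposition~\ref{prop:convexupper} (the degree-3 upper bound via the constructed spanner) and Proposition~\ref{lem:convexlower} (the impossibility of bounded-stretch degree-2 spanners), giving no further argument. Your careful separation of the two halves and the contrapositive use of the lower-bound proposition is precisely the intended synthesis.
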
 
\fi
\bibliography{bib}
\end{document}